\documentclass[a4paper, 11pt]{article}
\usepackage{graphicx}
\usepackage{amsmath, amssymb, amsthm}
\usepackage{algorithm}
\usepackage{algpseudocode}
\usepackage{overpic}
\usepackage{color}
\usepackage{mathrsfs}
\usepackage{multirow}
\usepackage[hang]{subfigure}

\newtheorem{theorem}{Theorem}
\newtheorem{lemma}[theorem]{Lemma}

\newcommand\bn{\boldsymbol{n}}

\newcommand\bv{\boldsymbol{v}}

\newcommand\bg{\boldsymbol{g}}

\newcommand\bbR{\mathbb{R}}
\newcommand\bbN{\mathbb{N}}

\newcommand\dd{\,\mathrm{d}}

\newcommand\mQ{\mathcal{Q}}
\newcommand\mM{\mathcal{M}}

\newcommand{\indexk}{k_1k_2k_3}
\newcommand{\indexi}{i_1i_2i_3}
\newcommand{\indexj}{j_1j_2j_3}

\newcommand{\indexl}{l_1l_2l_3}

\newcommand{\factorialk}{k_1!k_2!k_3!}

\newcommand\Talmi[4]{\left(\begin{array}{c|c}#1&#3\\#2&#4\end{array}\right)}
\numberwithin{equation}{section}
\newcommand{\imag}{\mathrm{i}}
\setlength{\oddsidemargin}{0cm} 
\setlength{\evensidemargin}{0cm}
\addtolength{\voffset}{-10mm}
\setlength{\textwidth}{160mm}
\setlength{\textheight}{230mm}

\graphicspath{{images/}}

{\theoremstyle{remark} }

\DeclareMathOperator\diag{diag}

\title{Burnett Spectral Method for the Spatially Homogeneous Boltzmann Equation}

\author{Zhenning Cai\thanks{Department of Mathematics, National
    University of Singapore, Level 4, Block S17, 10 Lower Kent Ridge
    Road, Singapore 119076, email: {\tt matcz@nus.edu.sg}.},~~
Yuwei Fan\thanks{Department of Mathematics, Stanford University, Stanford,
    CA 94305, email: {\tt ywfan@stanford.edu}.},~~
Yanli Wang\thanks{Department of Engineering, Peking University, Beijing,
    China, 100871, email: {\tt wang\_yanli@pku.edu.cn}.}}

\begin{document}
\maketitle
\begin{abstract}
We develop a spectral method for the spatially homogeneous Boltzmann equation
using Burnett polynomials in the basis functions. Using the sparsity of the
coefficients in the expansion of the collision term, the computational cost is
reduced by one order of magnitude for general collision kernels and 
by two orders of magnitude for Maxwell molecules. The proposed method can couple seamlessly
with the BGK-type modelling techniques to make future applications affordable.
The implementation of the algorithm is discussed in detail, including a
numerical scheme to compute all the coefficients accurately, and the design of
the data structure to achieve high cache hit ratio. Numerical examples are
provided to demonstrate the accuracy and efficiency of our method.

{\bf Keywords:} Boltzmann equation, Burnett polynomials, collision operator
\end{abstract}

\section{Introduction}
The Boltzmann equation possesses its unambiguous significance in the rarefied
gas dynamics. Using a velocity distribution function $f \in L^1(\mathbb{R}^3)$
to describe the statistical behavior of gas molecules, the Boltzmann equation
incorporates the transport and the collision of particles into a single
equation, which accurately models the gas flow from transitional to free
molecular regimes. By the molecular chaos assumption, the collision between
molecules gives the rate of change for the distribution function as follows:
\begin{equation} \label{eq:quad_col}
  \mQ[f,f](\bv) = \int_{\mathbb{R}^3}
  \int_{\bn \perp \bg} \int_0^{\pi} B(|\bg|,\chi)
    [f(\bv_1') f(\bv') - f(\bv_1) f(\bv)]
  \dd\chi \dd\bn \dd\bv_1,
\end{equation}
where $\bn$ is a unit vector and
\begin{align*}
\bg &= \bv - \bv_1, \\
\bv' &= \cos^2(\chi/2) \bv + \sin^2(\chi/2) \bv_1
  - |\bg| \cos(\chi/2) \sin(\chi/2) \bn, \\
\bv_1' &= \cos^2(\chi/2) \bv_1 + \sin^2(\chi/2) \bv
  + |\bg| \cos(\chi/2) \sin(\chi/2) \bn.
\end{align*}
The collisional kernel $B(\cdot,\cdot)$ is a nonnegative function involving the
differential cross-section of the collision dynamics.  Such a high-dimensional
integral form introduces great difficulty to the numerical simulation of the
Boltzmann equation, and people have been using the stochastic method introduced
by Bird \cite{Bird1963, Bird}, known as direct simulation of Monte Carlo
(DSMC), to solve the Boltzmann equation. Due to the fast development of super
computers, in the past decade, a number of deterministic methods have been
proposed to discretize the integral collision term to avoid numerical
oscillations. The most promising method seems to be the Fourier spectral method
\cite{Pareschi1996, Mouhot2006, Hu2017}, including a variety of its variations
such as the conservative version \cite{Gamba2009}, the positivity preserving
version \cite{Pareschi2000}, the steady-state preserving version
\cite{filbet2015steady} and the entropic version \cite{Cai2019}, where the
technique of fast Fourier transform can be applied to accelerate the
computation. These methods has been applied to spatially inhomogeneous problems
in \cite{Wu2013, Wu2014, Dimarco2018}. Other methods include the fast discrete
velocity method \cite{Mouhot2013} and the discontinuous Galerkin method
\cite{Alekseenko2014}.

Another type of spectral method based on global orthogonal polynomials is also
being studied recently \cite{Cai2015,Gamba2018,QuadraticCol}. In this paper, we
follow the work \cite{Cai2015, Gamba2018} and adopt the spectral method based
on Burnett polynomials \cite{Burnett1936}, which has been applied to the
linearized Boltzmann equation \cite{Cai2015, Cai2018}, and shows great
potential to achieve higher numerical efficiency. To focus on the collision
term, we consider only the spatially homogeneous Boltzmann equation, meaning
that the distribution function is uniform in space, and thus we can use a map
$F: \mathbb{R}_+ \rightarrow L^1(\mathbb{R}^3)$ to describe the evolution of
the distribution function:
\begin{equation} \label{eq:Boltzmann}
\begin{aligned}
& \frac{\mathrm{d}F(t)}{\mathrm{d}t} = \mQ[F(t),F(t)],
  \qquad \forall t\in (0,+\infty), \\
& F(0) = f^0 \in L^1(\mathbb{R}^3).
\end{aligned}
\end{equation}
It is well-known that the Boltzmann equation preserves the conservation of
mass, momentum and energy:
\begin{equation} \label{eq:conservation}
\int_{\bbR^3} \begin{pmatrix} 1 \\ \bv \\ |\bv|^2 \end{pmatrix} \mQ[f,f](\bv)
  \dd \bv = 0, \qquad \forall f \in L^1(\mathbb{R}^3).
\end{equation}
Thus we can choose appropriate nondimensionalization such that the initial
value $f^0$ in \eqref{eq:Boltzmann} belongs to the following set:
\begin{displaymath}
\mathcal{S} = \left\{ f \in L^1(\bbR^3) : \int_{\bbR^3}
  \begin{pmatrix} 1 \\ \bv \\ |\bv|^2 \end{pmatrix} f(\bv) \dd \bv =
  \begin{pmatrix} 1 \\ 0 \\ 3 \end{pmatrix} \right\},
\end{displaymath}
and by \eqref{eq:conservation}, for all $t > 0$, we always have $F(t) \in
\mathcal{S}$. According to Boltzmann's H-theorem, the steady-state solution of
\eqref{eq:Boltzmann} is the Maxwellian
\begin{equation}
\label{eq:Maxwellian}
\mathcal{M}(\bv) = \frac{1}{(\sqrt{2\pi})^3}
\exp \left( -\frac{|\bv|^2}{2} \right).
\end{equation}
The Burnett polynomials, which will be used in our discretization, are
orthogonal polynomials associated with the weight function $\mathcal{M}(\bv)$.
Therefore our numerical method can represent this steady-state solution
exactly.

In \cite{QuadraticCol, Hu2018}, a similar method using Hermite polynomials,
which are also orthogonal polynomials associated with the weight function
$\mathcal{M}(\bv)$, is studied. In principle, the spectral methods using
Burnett and Hermite polynomials are essentially equivalent, especially when the
series is truncated up to the same degree. The Hermite spectral method was
introduced long ago by Grad \cite{Grad1949} as the moment method. As mentioned
in \cite[pp. 283]{Grad1958}, the Hermite spectral method is frequently
advantageous due to ``the symmetries inherent in the invariant Cartesian
tensor''. Such an advantage has been utilized in \cite{QuadraticCol}, where the
explicit expressions of all the coefficients in the discretization are
formulated using these symmetries. However, the superiority of the Burnett
polynomials introduced in \cite{Burnett1936} is the fact that they are
eigenfunctions of the linearized collision integral for Maxwell molecules
\cite{Chang1952}. Even for non-Maxwell molecules, as will be shown in this
paper, the coefficients also possess some sparsity due to the rotational
invariance of the collision operator. This will result in a considerably faster
algorithm in the computation, which makes the spectral method with Burnett
polynomials preferable in the simulation.

The same basis functions have been used in \cite{Gamba2018}, where the authors
employed numerical integration to find all the coefficients involved in the
discretization of \eqref{eq:Boltzmann}, but the sparsity in the coefficients
was not utilized in the computation. In this paper, we are going to focus on
the detailed implementation of the algorithm, including a much more accurate
way to compute the coefficients, a detailed analysis of the computational cost,
and the design of the data structure to achieve high computational efficiency.
Meanwhile, we also emphasize the modelling technique introduced in
\cite{QuadraticCol} which allows flexible balancing between computational cost
and modelling error.

The rest of this paper is organized as follows. In Section \ref{sec:general},
we present the framework of the Burnett spectral method to solve the
homogeneous Boltzmann equation. In Section \ref{sec:impl}, the detailed
implementation of the algorithm is introduced. We first give an efficient
method to compute the coefficients in the Burnett spectral expansion, and then
discuss the design of the data structure and the implementation of the
algorithm in detail. Some numerical experiments verifying the efficiency of
the Burnett spectral method are carried out in Section \ref{sec:numerical}.
In Section \ref{sec:proof}, we list the proof of the theorems in Section
\ref{sec:general}. Some concluding remarks are made in Section
\ref{sec:conclusion}.

\section{Framework of the Burnett spectral method}
\label{sec:general}
Burnett polynomials are introduced in \cite{Burnett1936} to study
high-order approximation to the distribution function for a slightly
non-uniform gas. Here we adopt a normalized form and write Burnett
polynomials as
\begin{displaymath}
p_{lmn}(\bv) = \sqrt{\frac{2^{1-l} \pi^{3/2} n!}{\Gamma(n+l+3/2)}}
  L_n^{(l+1/2)} \left( \frac{|\bv|^2}{2} \right) |\bv|^l
  Y_l^m \left( \frac{\bv}{|\bv|} \right), \qquad
l,n \in \bbN, \quad m = -l,\cdots,l,
\end{displaymath}
where we have used Laguerre polynomials
\begin{displaymath}
L_n^{(\alpha)}(x) = \frac{x^{-\alpha} \exp(x)}{n!}
  \frac{\mathrm{d}^n}{\mathrm{d}x^n}
  \left[ x^{n+\alpha} \exp(-x) \right],
\end{displaymath}
and spherical harmonics
\begin{displaymath}
Y_l^m(\bn) = \sqrt{\frac{2l+1}{4\pi} \frac{(l-m)!}{(l+m)!}}
  P_l^m(\cos \theta) \exp(\mathrm{i} m \phi), \qquad
\bn = (\sin \theta \cos \phi, \sin \theta \sin \phi, \cos \theta)^T
\end{displaymath}
with $P_l^m$ being the associate Legendre polynomial:
\begin{displaymath}
P_l^m(x) = \frac{(-1)^m}{2^l l!} (1-x^2)^{m/2}
  \frac{\mathrm{d}^{l+m}}{\mathrm{d}x^{l+m}} \left[ (x^2-1)^l \right].
\end{displaymath}
By the orthogonality of Laguerre polynomials and spherical harmonics,
one can find that
\begin{displaymath}
\int_{\bbR^3} \overline{p_{l_1 m_1 n_1}(\bv)} p_{l_2 m_2 n_2}(\bv)
  \mathcal{M}(\bv) \dd\bv =
\delta_{l_1 l_2} \delta_{m_1 m_2} \delta_{n_1 n_2}.
\end{displaymath}

Now we introduce the basis function $\varphi_{lmn}(\bv)$ as
\begin{equation}
    \varphi_{lmn}(\bv) = p_{lmn}(\bv)\mathcal{M}(\bv).
\end{equation}
For a given distribution function $f \in \mathcal{S}$, we assume that
it has the expansion
\begin{equation} \label{eq:exp_f}
f(\bv) = \sum_{lmn} \tilde{f}_{lmn} \varphi_{lmn}(\bv),
\end{equation}
where the sum is interpreted as
\begin{displaymath}
\sum_{lmn} = \sum_{l=0}^{+\infty} \sum_{m=-l}^l \sum_{n=0}^{+\infty}.
\end{displaymath}
Suppose the corresponding collision term $\mQ[f,f]$ also has the
expansion
\begin{displaymath}
\mQ[f,f](\bv) = \sum_{lmn} \tilde{Q}_{lmn} \varphi_{lmn}(\bv).
\end{displaymath}
By the orthogonality of the Burnett polynomials and the bilinearity of
the operator $\mQ[\cdot,\cdot]$, one can find that
\begin{displaymath}
\tilde{Q}_{lmn} = \sum_{l_1 m_1 n_1} \sum_{l_2 m_2 n_2}
  A_{lmn}^{l_1 m_1 n_2, l_2 m_2 n_2}
  \tilde{f}_{l_1 m_1 n_1} \tilde{f}_{l_2 m_2 n_2},
\end{displaymath}
where
\begin{equation} \label{eq:A}
A_{lmn}^{l_1 m_1 n_2, l_2 m_2 n_2} = \int_{\bbR^3}
  \overline{p_{lmn}(\bv)}
  \mQ[\varphi_{l_1 m_1 n_1}, \varphi_{l_2 m_2 n_2}](\bv) \dd \bv.
\end{equation}
Based on this expansion, it is obvious that \eqref{eq:Boltzmann} is
equivalent to the following ODE system:
\begin{equation}
\label{eq:Boltzmann_ODE} 
\begin{aligned}
& \frac{\mathrm{d} \tilde{F}_{lmn}(t)}{\mathrm{d}t} =
  \sum_{l_1 m_1 n_1} \sum_{l_2 m_2 n_2}
    A_{lmn}^{l_1 m_1 n_2, l_2 m_2 n_2}
    \tilde{F}_{l_1 m_1 n_1}(t) \tilde{F}_{l_2 m_2 n_2}(t), \\
    & \tilde{F}_{lmn}(0) = \tilde{f}_{lmn}^0 :=
  \int_{\bbR^3} \overline{p_{lmn}(\bv)} f^0(\bv) \dd \bv,
\end{aligned}
\end{equation}
where $\tilde{F}_{lmn}(t)$ are the coefficients in the Burnett
series expansion of $F(t)$.

To develop the spectral method, one needs to truncate the Burnett
series to restrict the computation to a finite number of coefficients.
A common choice is to choose a positive integer $M$ and require that
the degree of the polynomial, $l + 2n$, to be less than or equal to
$M$. Thus the spectral method for the homogeneous Boltzmann equation
\eqref{eq:Boltzmann} is
\begin{subequations} \label{eq:ODE}
\begin{align}
\label{eq:ODE1}
& \frac{\mathrm{d} \tilde{F}_{lmn}(t)}{\mathrm{d}t} =
  \sum_{\substack{l_1 m_1 n_1\\[2pt] l_1 + 2n_1 \leqslant M}}
  \sum_{\substack{l_2 m_2 n_2\\[2pt] l_2 + 2n_2 \leqslant M}}
    A_{lmn}^{l_1 m_1 n_2, l_2 m_2 n_2}
    \tilde{F}_{l_1 m_1 n_1}(t) \tilde{F}_{l_2 m_2 n_2}(t), \\
& \tilde{F}_{lmn}(0) = \tilde{f}_{lmn}^0, \qquad l + 2n \leqslant M.
\end{align}
\end{subequations}
These ordinary differential equations can be solved by Runge-Kutta
methods. Naively, the computational cost appears to be
$O(N^3)=O(M^9)$, where $N=(M+1)(M+2)(M+3)/6$ is the total number of
$\tilde{F}_{lmn}$, $l+2n\leq M$. It is worth pointing out that there
is a prefactor $1 / 6^3$ of $O(M^9)$ when we count the number of
coefficients $A_{lmn}^{l_1 m_1 n_2, l_2 m_2 n_2}$, and this prefactor
will be directly brought into the computational cost of the collision
term.

However, the actual computational cost can be reduced to $O(M^8)$ due
to the following sparsity of the coefficients $A_{lmn}^{l_1 m_1 n_2,
l_2 m_2 n_2}$:
\begin{theorem} \label{thm:sparsity}
The coefficient $A_{lmn}^{l_1 m_1 n_2, l_2 m_2 n_2}$ is zero if $m
\neq m_1 + m_2$.
\end{theorem}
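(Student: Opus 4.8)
The plan is to exploit the rotational invariance of the collision operator $\mQ$ together with the transformation law of the Burnett polynomials under rotations about the polar axis. Write $R_\alpha \in SO(3)$ for the rotation by angle $\alpha$ about the third coordinate axis. Since the spherical harmonic $Y_l^m$ carries the azimuthal factor $\exp(\mathrm{i} m\phi)$ while both the radial Laguerre part $L_n^{(l+1/2)}(|\bv|^2/2)\,|\bv|^l$ and the Maxwellian $\mathcal{M}$ depend only on $|\bv|$, rotating the argument about the polar axis merely shifts $\phi \mapsto \phi + \alpha$, so that
\[
  p_{lmn}(R_\alpha \bv) = e^{\mathrm{i} m\alpha}\, p_{lmn}(\bv),
  \qquad
  \varphi_{lmn}(R_\alpha \bv) = e^{\mathrm{i} m\alpha}\, \varphi_{lmn}(\bv).
\]
This azimuthal phase is the source of the selection rule $m = m_1 + m_2$.

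First I would establish that $\mQ$ commutes with every rotation $R \in SO(3)$ in the sense that $\mQ[f,g](R\bv) = \mQ[f(R\,\cdot),g(R\,\cdot)](\bv)$, where $f(R\,\cdot)$ denotes the function $\bx \mapsto f(R\bx)$. This follows from the change of variables $\bv_1 = R\bw_1$ in \eqref{eq:quad_col} together with rotating the unit vector $\bn$ by $R$: the relative velocity transforms as $\bg \mapsto R\bg$, so $|\bg|$ and the scattering angle $\chi$ are unchanged and $B(|\bg|,\chi)$ is invariant; the constraint $\bn \perp \bg$ and the circular measure $\dd\bn$ are preserved under the rotation; and the post-collision velocities are $R$-equivariant, $\bv' \mapsto R\bv'$ and $\bv_1' \mapsto R\bv_1'$, since they are $R$-equivariant affine combinations of $\bv$, $\bv_1$ and $|\bg|\bn$. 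Collecting these facts gives the stated equivariance.

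Next I would substitute $\bv = R_\alpha \bw$ into the defining integral \eqref{eq:A}. Using $\dd\bv = \dd\bw$, the conjugated transformation law $\overline{p_{lmn}(R_\alpha \bw)} = e^{-\mathrm{i} m\alpha}\,\overline{p_{lmn}(\bw)}$, the equivariance of $\mQ$, and its bilinearity applied to $\varphi_{l_1 m_1 n_1}(R_\alpha\,\cdot) = e^{\mathrm{i} m_1\alpha}\varphi_{l_1 m_1 n_1}$ and $\varphi_{l_2 m_2 n_2}(R_\alpha\,\cdot) = e^{\mathrm{i} m_2\alpha}\varphi_{l_2 m_2 n_2}$, I obtain
\[
  A_{lmn}^{l_1 m_1 n_1, l_2 m_2 n_2}
  = e^{\mathrm{i}(m_1 + m_2 - m)\alpha}\, A_{lmn}^{l_1 m_1 n_1, l_2 m_2 n_2}
  \qquad \text{for every } \alpha \in \bbR.
\]
Choosing any $\alpha$ with $e^{\mathrm{i}(m_1 + m_2 - m)\alpha} \neq 1$ then forces the coefficient to vanish whenever $m \neq m_1 + m_2$, which is the claim.

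I expect the main obstacle to be the careful verification of the rotational equivariance of $\mQ$ in the second step, in particular checking that the $\bn$-integration over the circle perpendicular to $\bg$ transforms correctly under $R$ and that the post-collision velocities are genuinely $R$-equivariant. Everything else reduces to the elementary azimuthal phase of the spherical harmonics, so once the equivariance is in hand the selection rule is immediate.
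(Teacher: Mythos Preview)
Your proposal is correct and follows essentially the same approach as the paper: both exploit the azimuthal phase $e^{\mathrm{i}m\alpha}$ picked up by $p_{lmn}$ and $\varphi_{lmn}$ under a rotation about the third axis, combine this with the rotational equivariance of $\mQ$, and deduce $A = e^{\mathrm{i}(m_1+m_2-m)\alpha}A$ for all $\alpha$. The paper treats the equivariance of $\mQ$ as a known lemma, whereas you sketch its verification, but otherwise the arguments are identical.
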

By taking into account such sparsity, we can find that the number of
nonzero coefficients $A_{lmn}^{l_1 m_1 n_2, l_2 m_2 n_2}$ is $O(M^8)$
with a prefactor $1/297$. This indicates that the evaluation of the
collision can be efficient for not too large $M$. Interestingly, for
some special collision kernel $B(\cdot,\cdot)$, the computational cost
can be further reduced due to the following extra sparsity of the
coefficients $A_{lmn}^{l_1 m_1 n_2, l_2 m_2 n_2}$:
\begin{theorem} \label{thm:sparsityMaxwell}
If the kernel $B(g,\chi)=\sigma(\chi)$ is independent of $g$, the
coefficient $A_{lmn}^{l_1 m_1 n_2, l_2 m_2 n_2}$ is zero if
$l_1+2n_1+l_2+2n_2 \neq l+2n$.
\end{theorem}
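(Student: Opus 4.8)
The plan is to pass to center-of-mass and relative-velocity coordinates, where for Maxwell molecules the collision becomes a pure rotation of the relative velocity at fixed magnitude, and then to track the total degree $N:=l+2n$ (the total harmonic-oscillator quantum number) through this change of variables. First I would record the weak form of \eqref{eq:A}. Using bilinearity and the standard pre/post-collision symmetrization, the coefficient can be written as
\[
A_{lmn}^{l_1 m_1 n_1, l_2 m_2 n_2} = \frac14 \int_{\bbR^3}\!\int_{\bbR^3}\int_{\bn\perp\bg}\!\int_0^\pi \sigma(\chi)\,\mM(\bv)\mM(\bv_1)\,\Pi(\bv,\bv_1)\,\Delta_{lmn}(\bv,\bv_1)\dd\chi\dd\bn\dd\bv_1\dd\bv,
\]
with $\Pi = p_{l_1 m_1 n_1}(\bv)p_{l_2 m_2 n_2}(\bv_1)+(\bv\leftrightarrow\bv_1)$ and $\Delta_{lmn}=\overline{p_{lmn}(\bv')}+\overline{p_{lmn}(\bv_1')}-\overline{p_{lmn}(\bv)}-\overline{p_{lmn}(\bv_1)}$; the hypothesis $B(g,\chi)=\sigma(\chi)$ is used here to drop the $|\bg|$-dependence of the kernel. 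Setting $\boldsymbol{G}=(\bv+\bv_1)/2$ and keeping $\bg=\bv-\bv_1$, three facts are central: the Jacobian is trivial; the Gaussian weight factorizes, $\mM(\bv)\mM(\bv_1)=(2\pi)^{-3}\exp(-|\boldsymbol{G}|^2)\exp(-|\bg|^2/4)$, because $|\bv|^2+|\bv_1|^2=2|\boldsymbol{G}|^2+|\bg|^2/2$; and the collision leaves $\boldsymbol{G}$ untouched while rotating the relative velocity, $\bg'=\cos\chi\,\bg-\sin\chi\,|\bg|\bn$, so that $|\bg'|=|\bg|$ and $\bv'=\boldsymbol{G}+\bg'/2$, $\bv_1'=\boldsymbol{G}-\bg'/2$.

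Next I would expand the products of Burnett functions of $(\bv,\bv_1)$ into products of Burnett functions of $\boldsymbol{G}$ and of $\bg$ via the harmonic-oscillator (Talmi--Moshinsky) brackets $\Talmi{n_G\,l_G}{n_g\,l_g}{n_1\,l_1}{n_2\,l_2}$. The decisive structural input is that these brackets vanish unless the total quantum number is conserved, $2n_G+l_G+2n_g+l_g=(l_1+2n_1)+(l_2+2n_2)$; equivalently the transformation to $(\boldsymbol{G},\bg)$ respects the grading by total degree. Thus $\Pi$, after this expansion, is supported on the sector $N_G+N_g=N_1+N_2$, where $N_G,N_g$ denote the total degrees of the $\boldsymbol{G}$- and $\bg$-factors.

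I would then use that the collision acts only on the relative variable, rotating $\bg$ at fixed $|\bg|$ while fixing $\boldsymbol{G}$. Rotational invariance of $\int_{\bn\perp\bg}\int_0^\pi\sigma(\chi)(\cdot)\dd\chi\dd\bn$ preserves the angular quantum number $l_g$ immediately, and the absence of any $|\bg|$-weight in $\sigma(\chi)$ preserves the radial quantum number $n_g$ as well; hence the relative-velocity collisional average maps a degree-$N_g$ relative Burnett function to a combination of relative Burnett functions of the \emph{same} degree $N_g$. Since $\boldsymbol{G}$ is a spectator, the entire integrand after the $\bn,\chi$ integration remains in the sector $N_G+N_g=N_1+N_2$. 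Expanding the test polynomial $p_{lmn}$ back through the same brackets and invoking orthogonality of the oscillator functions in $\boldsymbol{G}$ and in $\bg$ under the factorized Gaussian then forces $N=l+2n=N_G+N_g=N_1+N_2$, which is exactly the claim.

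The main obstacle is the step asserting that the relative collisional average preserves the radial quantum number $n_g$, not merely the angular $l_g$: this is precisely where $g$-independence of the kernel is indispensable, since a genuine $B(g,\chi)$ would inject extra powers of $|\bg|$ and couple different $n_g$, leaving the degree only bounded rather than conserved. Proving it amounts to showing that $\int_{\bn\perp\bg}\int_0^\pi\sigma(\chi)\,p_{l_g m_g n_g}(\bg')\dd\chi\dd\bn$ lies again in the degree-$N_g$ relative Burnett span, i.e.\ the Maxwell-molecule eigen-structure of \cite{Chang1952} transported to the relative variable. As a cross-check and a cleaner alternative I would give the Fourier (Bobylev) route: Burnett functions are eigenfunctions of the Fourier transform with eigenvalue $(-\imag)^{l+2n}$; Bobylev's identity replaces the collision by an angular average of $\widehat{\varphi}(\boldsymbol{\xi}^{\pm})$ with $\boldsymbol{\xi}^{\pm}=(\boldsymbol{\xi}\pm|\boldsymbol{\xi}|\boldsymbol{\omega})/2$ homogeneous of degree one in $\boldsymbol{\xi}$, and $\mM(\boldsymbol{\xi}^{+})\mM(\boldsymbol{\xi}^{-})=(2\pi)^{-3/2}\mM(\boldsymbol{\xi})$ pulls the Gaussian out as a common factor. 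Feeding in drifted Maxwellians as generating functions then produces a kernel whose degree in $\boldsymbol{\xi}$ equals its joint degree in the drift parameters, and that degree is $N_1+N_2$; inverting the transform lands the result in the degree-$(N_1+N_2)$ Burnett span, reproducing the theorem.
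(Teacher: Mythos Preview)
Your main argument is exactly the paper's proof, just unpacked from first principles: the paper invokes Kumar's Talmi-coefficient formula \eqref{eq:A_lmn} directly, applies Lemma~\ref{lem:Talmi} (your ``decisive structural input'' that the brackets conserve total degree), and then observes that for a $g$-independent kernel the integral in \eqref{eq:def_Vnn} factors so that $V_{nn'}^l=0$ for $n\neq n'$ by Laguerre orthogonality---precisely your ``relative collisional average preserves the radial quantum number $n_g$''. Your Bobylev/Fourier alternative is a genuinely different second route that the paper does not pursue.
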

A well-known type of collision kernel satisfying the above condition
is the Maxwell molecules, for which the force between a pair of
molecules is always repulsive and proportional to the fifth power of
their distance. The sparsity stated in the above theorem allows us to
reduce the number of nonzero coefficients to $O(M^7)$. By a numerical
test, we find that the prefactor of $O(M^7)$ is around $1 / (2.2\times
10^{3})$. Moreover, the following result also helps save computational
resources:
\begin{theorem} \label{thm:real}
All coefficients $A_{lmn}^{l_1 m_1 n_1, l_2 m_2 n_2}$ are real.
\end{theorem}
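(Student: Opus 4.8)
The plan is to deduce reality from a single reflection symmetry rather than from an explicit evaluation of the integral. Fix the reflection $R:(v_x,v_y,v_z)\mapsto(v_x,-v_y,v_z)$ across the $v_xv_z$-plane. In spherical coordinates this sends $(\theta,\phi)\mapsto(\theta,-\phi)$ while fixing $|\bv|$, so from the explicit form of $Y_l^m$ one finds that $Y_l^m$ evaluated at the reflected point equals its complex conjugate: the prefactor and the associated Legendre polynomial $P_l^m(\cos\theta)$ are real, and only $\exp(\imag m\phi)$ is conjugated. Since the radial Laguerre factor and the Maxwellian $\mM$ depend only on $|\bv|$, this yields $\overline{p_{lmn}(\bv)}=p_{lmn}(R\bv)$ and hence $\overline{\varphi_{lmn}(\bv)}=\varphi_{lmn}(R\bv)$. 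Reality of $A_{lmn}^{l_1m_1n_1,l_2m_2n_2}$ will then follow once I transport the complex conjugation through the operator $\mQ$ using this identity together with the invariance of $\mQ$ under $R$.

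Two ingredients are needed. First, because the kernel $B(|\bg|,\chi)$ is real and the post-collision velocities $\bv',\bv_1'$ are real functions of the real variables $\bv,\bv_1,\bn,\chi$, conjugation passes directly under the integral sign defining $\mQ$, giving $\overline{\mQ[g,h]}=\mQ[\bar g,\bar h]$ for any complex-valued $g,h$. Second, $\mQ$ commutes with every orthogonal transformation $O\in O(3)$, in the sense that $\mQ[g\circ O^{-1},h\circ O^{-1}]=\mQ[g,h]\circ O^{-1}$. To verify this I would substitute $\bv_1=O\bw_1$ and $\bn=O\bm$ in the defining integral evaluated at $O\bv$ and check that $|\bg|$ and $\chi$ are unchanged, that the constraint $\bn\perp\bg$ becomes $\bm\perp(\bv-\bw_1)$, that the surface measure $\dd\bn$ and the volume measure $\dd\bv_1$ are preserved, and that the collision map intertwines with $O$, i.e. the post-collision velocities built from $(O\bv,O\bw_1,O\bm)$ are exactly $O$ applied to those built from $(\bv,\bw_1,\bm)$. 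In particular this holds for the orientation-reversing map $R$.

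Combining the pieces, I would conjugate the definition of the coefficient, apply $\overline{\mQ[g,h]}=\mQ[\bar g,\bar h]$ to obtain an integral of $p_{lmn}$ against $\mQ[\overline{\varphi_{l_1m_1n_1}},\overline{\varphi_{l_2m_2n_2}}]$, rewrite each conjugated basis function as $\varphi\circ R$, and use the $R$-invariance to pull $R$ outside, producing $\mQ[\varphi_{l_1m_1n_1},\varphi_{l_2m_2n_2}]\circ R$. A final change of variables $\bv\mapsto R\bv$, under which $\dd\bv$ is invariant, moves the reflection onto $p_{lmn}$, and the identity $p_{lmn}(R\bv)=\overline{p_{lmn}(\bv)}$ recovers exactly the original integrand, so $\overline{A_{lmn}^{l_1m_1n_1,l_2m_2n_2}}=A_{lmn}^{l_1m_1n_1,l_2m_2n_2}$. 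Notably this argument needs neither Theorem~\ref{thm:sparsity} nor any special structure of $B$ beyond reality. The main obstacle I anticipate is the careful bookkeeping in the second ingredient: confirming that reflections, and not merely rotations, leave $\mQ$ invariant, which requires checking that the perpendicularity constraint on $\bn$ and the measure $\dd\bn$ on the circle $\bn\perp\bg$ transform correctly under an orientation-reversing orthogonal map.
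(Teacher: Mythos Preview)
Your proposal is correct and follows essentially the same route as the paper: the paper's proof simply says to set $\mathbf{R}=\diag\{1,-1,1\}$ and repeat the argument of Theorem~\ref{thm:sparsity}, which is precisely your reflection across the $v_1v_3$-plane combined with the identity $p_{lmn}(\mathbf{R}\bv)=\overline{p_{lmn}(\bv)}$ and the $O(3)$-invariance of $\mQ$ recorded in Lemma~\ref{lem:rotation}. Your write-up is considerably more explicit than the paper's one-line proof, in particular in isolating the step $\overline{\mQ[g,h]}=\mQ[\bar g,\bar h]$ and in flagging that Lemma~\ref{lem:rotation} must be invoked for an orientation-reversing orthogonal map, but the underlying mechanism is identical.
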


The proof of all the above theorems will be provided in Section
\ref{sec:proof}. They help us save both memory and computational time.
For general collision kernel, by Theorem \ref{thm:sparsity}, we do not
need to store the zero coefficients, and the constraint $m = m_1 +
m_2$ reduces the order of time complexity by $1$. For Maxwell
molecules, Theorem \ref{thm:sparsity} and Theorem
\ref{thm:sparsityMaxwell} reduce the order of time complexity by $2$.
Theorem \ref{thm:real} does not reduce the order, but by realizing
that all the coefficients are real, one can reduce the storage
requirement by a half, and the algorithm can also be made faster by
avoiding some operations between complex numbers. Actually, we can
further reduce the computational cost by using the fact that the
distribution functions are real: since
\begin{displaymath}
\varphi_{lmn}(\bv) = (-1)^m \overline{\varphi_{lmn}(\bv)},
\end{displaymath}
the coefficients in \eqref{eq:exp_f} must satisfy $\tilde{f}_{lmn} =
(-1)^m \overline{\tilde{f}_{l,-m,n}}$ to ensure that $f(\bv)$ is real;
therefore, when solving the ordinary differential equations
\eqref{eq:ODE}, we only need to take into account the case $m
\geqslant 0$, which cuts down the computational cost by a half. The
small prefactor of the computational complexity also indicates the low
computational cost is acceptable if $M$ is not too large.

However, the time complexity $O(M^8)$ (or $O(M^7)$ for Maxwell
molecules) still gives huge computational cost when $M$ is large,
especially when solving spatially
inhomogeneous problems. To make the computation even cheaper, we adopt
the idea in \cite{Cai2015, QuadraticCol} which replaces the right-hand
side of \eqref{eq:ODE1} by $\tilde{Q}_{lmn}^{*}(t)$, defined as
\begin{equation} \label{eq:Q_star}
\tilde{Q}_{lmn}^{*}(t) = \left\{
  \begin{array}{ll}
  \displaystyle \sum \limits_{\substack{l_1 m_1 n_1\\[2pt] l_1 + 2n_1 \leqslant M_0}}
  \sum\limits_{\substack{l_2 m_2 n_2\\[2pt] l_2 + 2n_2 \leqslant M_0}}
    A_{lmn}^{l_1 m_1 n_2, l_2 m_2 n_2}
    \tilde{F}_{l_1 m_1 n_1}(t) \tilde{F}_{l_2 m_2 n_2}(t),  
  & \text{if } l+2n \leqslant M_0,\\[28pt]
    -\mu_{M_0}\tilde{F}_{lmn}(t), & \text{otherwise}.
  \end{array}
\right. 
\end{equation}
In practice, one can set $M_0$ to be much less than $M$. Thus the
quadratic form is only applied to the first few coefficients whose
associate polynomials have degree less than or equal to $M_0$. When
$l+2n > M_0$, similar to the BGK-type models, we let the coefficient
decay to zero exponentially at a constant rate $\mu_{M_0}$. Thereby we
get the new model
\begin{equation} \label{eq:ODE_final}
\frac{\mathrm{d} \tilde{F}_{lmn}(t)}{\mathrm{d}t} =
  \tilde{Q}_{lmn}^{*}(t).
\end{equation}
As in \cite{QuadraticCol, Cai2015}, we choose the decay rate
$\mu_{M_0}$ to be the spectral radius of the linearized collision
operator $\mathcal{L}_{M_0}: \mathcal{S}_{M_0}\rightarrow
\mathcal{S}_{M_0}$ defined as
\begin{equation}
  \label{eq:linear_Boltzmann}
  \mathcal{L}_{M_0}[f](\bv) = 
    \sum_{\substack{l_1 m_1 n_1\\[2pt] l_1 + 2n_1 \leqslant M_0}}
    \sum_{\substack{n_2 \leqslant (M_0 - l_1)/2}}
    \left(A_{l0n_1}^{l0n_2, 000} + A_{l0n_1}^{000, l0n_2} \right)
      \tilde{f}_{l_1 m_1 n_2}\psi_{l_1m_1n_1}(\bv),  
\end{equation}
where $\mathcal{S}_{M_0} = \text{span}\{\psi_{lmn}(\bv) : l + 2n
\leqslant M_0 \} \cap \mathcal{S}$. We refer the readers to
\cite{Cai2015} for more details.

By now, we have obtained the ordinary differential equations to
approximate the homogeneous Boltzmann equation \eqref{eq:Boltzmann}
under the framework of Burnett polynomials. In order to complete this
algorithm, we still need to find the values of the coefficients
$A_{lmn}^{l_1m_1n_1,l_2m_2n_2}$, which will be detailed in the
following section. Moreover, the implementation of the algorithm will
be discussed deeply to obtain optimal efficiency.


\section{Implementation of the algorithm} 
\label{sec:impl} 
To implement the algorithm, we first need to find the values of the
coefficients $A_{lmn}^{l_1 m_1 n_1, l_2 m_2 n_2}$. A formula of these
coefficients has been given in \cite{Kumar}, which reads
\begin{equation}\label{eq:A_lmn}
A_{lmn}^{l_1 m_1 n_1, l_2 m_2 n_2} =
  \sum_{l_3 m_3 n_3} \sum_{l_4 m_4 n_4} \sum_{n_4'}
  \overline{\left( \begin{array}{c|c}
    n_3 l_3 m_3 & nlm \\ n_4' l_4 m_4 & 000
  \end{array} \right)} \left( \begin{array}{c|c}
    n_3 l_3 m_3 & n_1 l_1 m_1 \\ n_4 l_4 m_4 & n_2 l_2 m_2
  \end{array} \right) V_{n_4 n_4'}^l,
\end{equation}
where
\begin{equation}\label{eq:def_Vnn}
\begin{split}
V_{n n'}^l &= \frac{1}{8\sqrt{2} \pi^{5/2}}
  \sqrt{\frac{n!n'!}{\Gamma(n+l+3/2) \Gamma(n'+l+3/2)}} 
  \int_0^{+\infty} \int_0^{\pi} B(g,\chi)
  \left( \frac{g^2}{4} \right)^{l+1} \times {} \\
& \qquad L_n^{(l+1/2)} \left( \frac{g^2}{4} \right)
  L_{n'}^{(l+1/2)} \left( \frac{g^2}{4} \right)
  [(2l+1)^2 P_l(\cos \chi) - 1]
  \exp \left( -\frac{g^2}{4} \right) \dd\chi \dd g,
\end{split}
\end{equation}
and the notation $(\cdot \mid \cdot)$ denotes Talmi coefficients for
equal mass molecules \cite{Talmi1952}. Due to the sparsity of the
Talmi coefficients, the computational cost for evaluating all
$A_{lmn}^{l_1 m_1 n_1, l_2 m_2 n_2}$ required in \eqref{eq:ODE} is
$O(M_0^{14})$. Thus, computing the Talmi coefficients is not an
easy task. The reference \cite{Bakri1967} provides a possible
implementation, but the formula involves Wigner 3-$j$ and 9-$j$
symbols, which are also difficult to obtain. Below we are going to
propose another method to compute these coefficients
$A_{lmn}^{l_1 m_1 n_1, l_2 m_2 n_2}$ based on the work
\cite{QuadraticCol}. The method also has computational cost
$O(M_0^{14})$, but is much easier to implement.

\subsection{Computation of coefficients
  $A_{lmn}^{l_1 m_1 n_1, l_2 m_2 n_2}$}
In \cite{QuadraticCol}, we have calculated the expansion coefficients
of the quadratic collision term $\mQ[f,f](\bv)$ under the framework of
Hermite spectral method. Since both Hermite polynomials and Burnett
polynomials are orthogonal polynomials associated with the same weight
function, we can express the Burnett polynomials by linear
combinations of the Hermite polynomials, and then the coefficients
$A_{lmn}^{l_1 m_1 n_1, l_2 m_2 n_2}$ naturally become a linear
combination of the corresponding coefficients in the Hermite spectral
method. Since the expressions for the coefficients in the Hermite
spectral method have been worked out explicitly in
\cite{QuadraticCol}, we do not need to bother using the complicated
symbols in the quantum theory to find the values of $A_{lmn}^{l_1 m_1
n_1, l_2 m_2 n_2}$.

Mathematically, the above framework can be formulated as below. In
\cite{QuadraticCol}, Hermite polynomial $H^{k_1 k_2 k_3}(\bv)$ is
defined as
\begin{equation}
  \label{eq:basis}
  H^{k_1 k_2 k_3}(\bv) = \frac{(-1)^n}{\mathcal{M}(\bv)}
  \frac{\partial^{k_1+k_2+k_3}}{\partial v_1^{k_1} \partial
    v_2^{k_2} \partial v_3^{k_2}} \mathcal{M}(\bv), \quad \forall k_1, k_2, k_3 \in \mathbb{N},
\end{equation}
where $\mathcal{M}(\bv)$ is given in \eqref{eq:Maxwellian}. We would
like to express the Burnett polynomials as
\begin{equation}
  \label{eq:exp_he_bur}
  p_{lmn}(\bv) =
  \sum_{(k_1,k_2,k_3) \in I_{l + 2n}}\frac{1}{\factorialk}C_{lmn}^{k_1k_2k_3}H^{\indexk}(\bv), 
\end{equation}
where $I_{l+2n}$ is the index set
\begin{equation}
  \label{eq:indexset}
  I_{l+2n} = \{(k_1, k_2, k_3) \in \bbN^3 \mid k_1+k_2+k_3 = l+2n\},
\end{equation}
and the coefficients $C_{lmn}^{k_1k_2k_3}$ can be calculated as 
\begin{equation} \label{eq:C}
C_{lmn}^{\indexk} =
  \int_{\bbR^3}p_{lmn}(\bv)H^{\indexk}(\bv)\mM(\bv) \dd \bv
\end{equation}
based on the orthogonality of Hermite polynomials
\begin{equation}
  \label{eq:Her_orth}
  \int_{\bbR^3}H^{k_1k_2k_3}(\bv)H^{\indexl}(\bv)\mM(\bv)  \dd \bv
  = \delta_{k_1l_1}\delta_{k_2l_2}\delta_{k_3l_3}k_1!k_2!k_3!.
\end{equation}
Note that when $l + 2n \neq k_1 + k_2 + k_3$, i.e. the degrees of
$H^{k_1 k_2 k_3}$ and $p_{lmn}$ are not equal, the coefficient
$C_{lmn}^{k_1 k_2 k_3}$ defined by \eqref{eq:C} is zero due to the
orthogonality of both polynomials. Once $C_{lmn}^{k_1k_2k_3}$ is
obtained, we just need to substitute \eqref{eq:exp_he_bur} to the
definition of $A_{lmn}^{l_1 m_1 n_1, l_2 m_2 n_2}$ \eqref{eq:A}, which
results in the following formula for these coefficients:
\begin{equation}
  \label{eq:A_detail}
  \begin{aligned}
    A_{lmn}^{l_1m_1n_1, l_2m_2n_2} 
    &= \sum_{k \in I_{l + 2n}}\sum_{i \in I_{l_1 +2n_1}} \sum_{j \in I_{l_2 + 2n_2}}
      \overline{C_{lmn}^{k_1k_2k_3}} C_{l_1m_1n_1}^{\indexi}
        C_{l_2m_2n_2}^{\indexj} \times{} \\
    & \qquad \frac{1}{i_1! i_2! i_3! j_1! j_2! j_3! k_1! k_2! k_3!}
      \int_{\bbR^3}H^{\indexk}(\bv)
        \mQ\left[H^{\indexi}\mM,H^{\indexj}\mM\right](\bv)\dd \bv.
  \end{aligned}
\end{equation}
The second line of \eqref{eq:A_detail} has already been evaluated in
\cite[Theorem 1 \& 2]{QuadraticCol} (denoted as
$A_{\indexk}^{\indexi,\indexj}$ therein). Thus we will focus only on
the computation of the coefficients $C_{lmn}^{\indexk}$ below.

Define 
\begin{equation}
  \label{eq:S}
  S_{-1} = \frac{1}{2}(v_1  - \imag v_2), \quad S_0 = v_3, \quad S_1 =
  -\frac{1}{2}(v_1 + \imag v_2), 
\end{equation}
and
\begin{equation}
  \label{eq:gamma}
  \gamma_{lm}^{\mu} = \sqrt{\frac{[l + (2\delta_{1, \mu}-1) m + \delta_{1,
      \mu}][l - (2\delta_{-1, \mu} - 1)m + \delta_{-1, \mu}]}{(2l
    -1)(2l+1)}},
\end{equation}
the recursive formula of the basis functions \cite{Cai2018} is
\begin{equation}
  \label{eq:recursive}
  \begin{aligned}
    S_{\mu}\psi_{lmn}(\bv) &= \frac{1}{2^{|\mu|}}
      \left[\sqrt{2(n+l)+3}\gamma_{l+1, m}^{\mu}\psi_{l+1, m+\mu, n}(\bv) -
      \sqrt{2n}\gamma_{l+1, m}^{\mu}\psi_{l+1, m+\mu, n-1}(\bv) \right. \\
    &\left. + (-1)^{\mu}\sqrt{2(n+l)+1} \gamma_{-l,m}^{\mu}\psi_{l-1, m+\mu, n}(\bv)
      -(-1)^{\mu}\sqrt{2(n+1)}\gamma_{-l,m}^{\mu}\psi_{l-1, m+\mu,n+1}(\bv) \right],
  \end{aligned}
\end{equation}
where we set $\psi_{lmn}(\bv) = 0$ if $|m| > l$ or either of $l, n$ is
negative. Equations of $C_{lmn}^{k_1 k_2 k_3}$ can be obtained by
multiplying \eqref{eq:recursive} with $H^{\indexk}(\bv)$ and taking
integration with respect to $\bv$ on both sides. The integral of the
right-hand side can be written straightforwardly as expressions of 
$C_{lmn}^{k_1 k_2 k_3}$, while for the left-hand side, we need to use
the recursion formula of Hermite polynomials
\begin{equation}
  \label{eq:recursive_h}
  v_s H^{\indexk}(\bv) = H^{k_1+\delta_{1s}, k_2+\delta_{2s},k_3+\delta_{3s}}(\bv)
  + k_sH^{k_1-\delta_{1s}, k_2-\delta_{2s},k_3-\delta_{3s}}(\bv), \quad s = 1,2,3.
\end{equation}
Since both the Hermite and Burnett polynomials are orthogonal
polynomials, integrals including the product of polynomials of
different degrees all vanish. Therefore when applying the above
operations, we choose $k_1, k_2, k_3$ such that $k_1 + k_2 + k_3 =
l+2n+1$, and the resulting equations for $\mu = -1,0,1$ are
respectively
\begin{equation}
  \label{eq:recursive_c}
  \begin{aligned}
    a_{l,m+1,n}^{(-1)}C_{l+1, m, n}^{k_1 k_2 k_3} + b_{l,m+1,n}^{(-1)}C_{l-1, m,
        n+1}^{k_1 k_2 k_3} &= \frac{1}{2}k_1C_{l,m+1,n}^{k_1-1,k_2,k_3} -
      \frac{\imag}{2} k_2 C_{l,m+1,n}^{k_1,k_2-1,k_3}, \\
      a_{l,m,n}^{(0)}C_{l+1, m, n}^{k_1 k_2 k_3} + b_{l,m,n}^{(0)}C_{l-1, m,
        n+1}^{k_1 k_2 k_3} &= k_3C_{l,m,n}^{k_1,k_2,k_3-1}, \\
      a_{l,m-1,n}^{(1)}C_{l+1, m, n}^{k_1 k_2 k_3} + b_{l,m-1,n}^{(1)}C_{l-1, m,
        n+1}^{k_1 k_2 k_3} &= -\frac{1}{2}k_1C_{l,m-1,n}^{k_1-1,k_2,k_3} -
      \frac{\imag}{2} k_2C_{l,m-1,n}^{k_1,k_2-1,k_3},
    \end{aligned}     
\end{equation}
where
\begin{equation}
  \label{eq:coe_a}
  a_{lmn}^{(\mu)} = \frac{1}{2^{|\mu|}}\sqrt{(2(n+l)
    +3)}\gamma_{l+1,n}^{\mu},
  \quad b_{lmn}^{(\mu)} = \frac{(-1)^{\mu+1}}{2^{|\mu|}}\sqrt{2(n+1)
  }\gamma_{-l,n}^{\mu}, \quad \mu = -1, 0, 1,
\end{equation}
and we have exchanged the left-hand side and the right-hand side since
we would like to solve the coefficients $C_{l+1, m, n}^{k_1 k_2 k_3}$
and $C_{l-1, m, n+1}^{k_1 k_2 k_3}$ from the above equations. This is
possible since on the left-hand sides of \eqref{eq:coe_a}, the sum of
the superscripts $k_1 + k_2 + k_3$ is always greater than the same sum
on the right-hand sides. Hence we can solve all the coefficients
$C_{lmn}^{k_1 k_2 k_3}$ by the order of $k_1 + k_2 + k_3$, so that the
right-hand sides of \eqref{eq:recursive_c} are always known. To start
the computation, we need the ``initial condition'' $C_{000}^{000}=1$,
and the ``boundary conditions'' $C_{lmn}^{\indexk} = 0$ if $|m| > l$
or either of $l, n$ is negative. It is not difficult to see that the
computational cost for each coefficient is $O(1)$, and thus the time
complexity for computing all the coefficients $C_{lmn}^{\indexk}$ with
$l+2n = k_1 + k_2 + k_3 \leqslant M_0$ is $O(M_0^5)$.

Now we come back to \eqref{eq:A_detail}. From Theorem
\ref{thm:sparsity}, it is known that the total number of nonzero
$A_{lmn}^{l_1m_1n_1, l_2m_2n_2}$ is $O(M_0^8)$, and the computational
cost of each summation symbol on the right-hand side of
\eqref{eq:A_detail} is at most $O(M_0^2)$ from the definition of the
index set $I_{M_0}$. Therefore, based on the knowledge of the second
line of \eqref{eq:A_detail}, the total computational cost for all the
coefficients $A_{lmn}^{l_1m_1n_1, l_2m_2n_2}$ is $O(M_0^{14})$. In
fact, to get the second line of \eqref{eq:A_detail}, the computational
cost is only $O(M_0^{12})$ as stated in \cite{QuadraticCol}. Thus the
overall complexity for finding $A_{lmn}^{l_1 m_1 n_1, l_2 m_2 n_2}$ is
$O(M_0^{14})$. Here we emphasize again that such a computational cost
is only for the precomputation, which needs to be done only once.

Finally, we would like to comment that the computational cost for
$A_{lmn}^{l_1 m_1 n_1, l_2 m_2 n_2}$ can be further reduced to one
eighth by using the symmetry of Burnett polynomials
\begin{equation}
  \label{eq:odevity}
  p_{lmn}(v_1, v_2, -v_3) = (-1)^{l+m}p_{lmn}(v_1, v_2, v_3), 
\end{equation}
which means the coefficient $C_{lmn}^{\indexk}$ is nonzero only if
$(l+m) - k_3$ is even. By now, we have been able to make the whole
algorithm work, and the rest of this section will be devoted to our
detailed implementation of \eqref{eq:Q_star}, including the design of
the data structure and the detailed steps of our fast algorithm.

\subsection{Data structure: storage of the coefficients}
The optimal data structure to store the coefficients $\tilde{f}_{lmn}$
and $A_{lmn}^{l_1 m_1 n_1, l_2 m_2 n_2}$ should require minimum
``jumps'' in the memory, which means the order of data usage should
match the storage of the data as much as possible. In what follows, we
are going to show by illustration how the data are arranged to achieve
optimal continuity.

\subsubsection{Storage of the coefficients $\tilde{f}_{lmn}$}
Suppose we need to store the coefficients $\tilde{f}_{lmn}$ for all
$l+2n \leqslant M$. We store all the coefficients in a one-dimensional
continuous array. This array can be viewed as the concatenation of
$2M+1$ sections, and each section contains all the coefficients for a
given $m$. Inside each section, the coefficients are ordered as shown
in Figure \ref{fig:f}, where the value of $l+2n$ (the degree of the
corresponding Burnett polynomial) is increasing, and when $l+2n$ is a
constant, the value of $l$ is increasing.

\begin{figure}[!ht]
\centering
\subfigure[Storage pattern of $\tilde{f}_{lmn}$]{
\includegraphics[width=.7\textwidth]{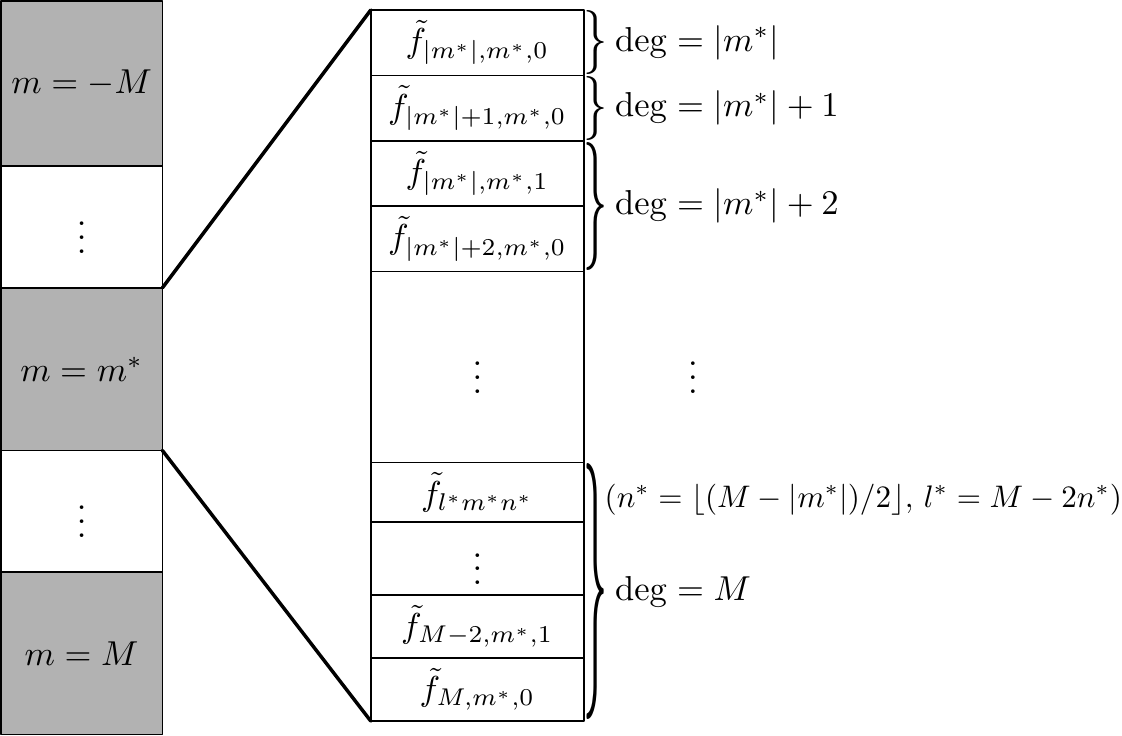}
}\qquad\quad
\subfigure[Example: $M=2$]{
\includegraphics[width=.17\textwidth]{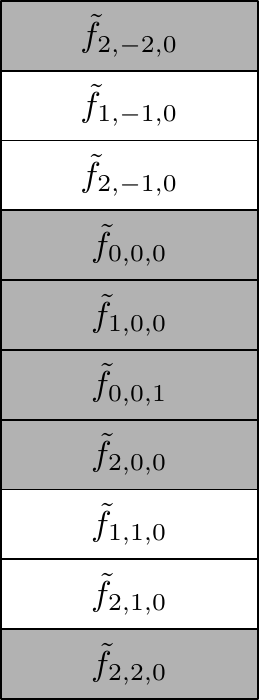}
}
\caption{Storage pattern of $\tilde{f}_{lmn}$ showing the two-level
structure of the array. The left column of (a) shows that the first-level
decomposition of the array (one section for each $m$), and the right
column of (a) shows how the elements are stored in the second-level
structure. (b) presents an example with $M=2$.}
\label{fig:f}
\end{figure}

By this storage scheme, for any given $m$, the coefficients associated
with the polynomials of degree less than or equal to $M_0$ are
continuously stored, which makes it easier to perform the
matrix-vector multiplication in the numerical algorithm and achieve
good cache hit ratio.

\subsubsection{Storage of the coefficients $A_{lmn}^{l_1 m_1 n_1, l_2 m_2
n_2}$}
\begin{figure}[!ht]
\centering
\includegraphics[width=\textwidth]{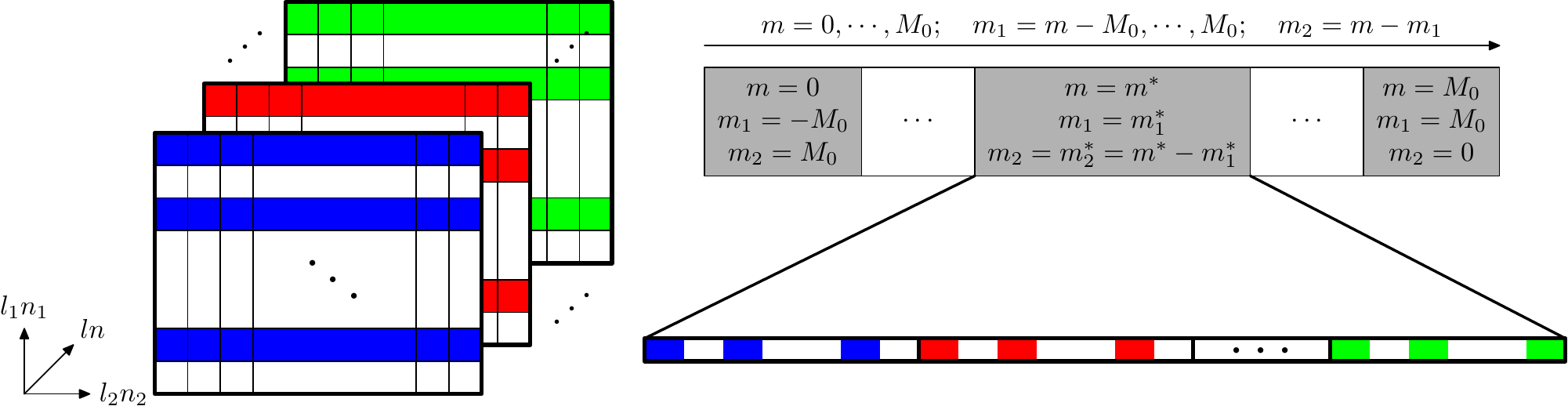}
\caption{Storage pattern of $A_{lmn}^{l_1 m_1 n_1, l_2 m_2 n_2}$. The
left column shows the three-dimensional view of the coefficients for
given $m$, $m_1$ and $m_2$. The right column gives the general
two-level structure of the array.}
\label{fig:A}
\end{figure}
Similar to the storage of $\tilde{f}_{lmn}$, all the coefficients
$A_{lmn}^{l_1 m_1 n_1, l_2 m_2 n_2}$ are also stored in a continuous
array, which can again be considered as the concatenation of a number
of sections. Each section contains all the coefficients for given $m$,
$m_1$ and $m_2$. Noting that the range of $m$ is from $0$ to $M_0$
while the range of $m_1$ and $m_2$ is from $-M_0$ to $M_0$, we can
find that the total number of sections is $(3M_0 + 2)(M_0 + 1)/2$.
Once $m$ is given, the two indices $l$ and $n$ can be viewed as a
one-dimensional index $ln$ by our ordering rule in the storage pattern
of $\tilde{f}_{lmn}$.  Similarly, $l_1 n_1$ and $l_2 n_2$ can also be
regarded as one-dimensional indices.  Thus, once $m$, $m_1$ and $m_2$
are given, the coefficients $A_{lmn}^{l_1 m_1 n_1, l_2 m_2 n_2}$ can
be considered as a three-dimensional array, whose three indices are
$ln$, $l_1 n_1$ and $l_2 n_2$ (see the left column of Figure
\ref{fig:A}). Its storage is a simple flattening the three-dimensional
array and is illustrated in the right column of Figure \ref{fig:A}.


\subsection{Details of the algorithm}
Based on the above data structure, the computation of
$\tilde{Q}_{lmn}^*$ can be implemented very efficiently. The general
procedure is as follows:
\renewcommand{\thealgorithm}{ }
\begin{algorithm}[ht]
\label{alg:Q}
\caption{Algorithm to Calculate $\tilde{Q}_{lmn}^*$}

\begin{algorithmic}[1]
\begin{minipage}{0.8\textwidth}
\For {$m$ from  $0$ to $M$}
\If{$m > M_0$}
\ForAll{$l,n$ satisfying $l \geqslant m$, $l+2n \leqslant M$}
\State $\tilde{Q}_{lmn}^* \gets -\mu_{M_0} \tilde{f}_{lmn}$
\EndFor
\Else
\ForAll{$l,n$ satisfying $l \geqslant m$, $M_0 < l+2n \leqslant M$}
\State $\tilde{Q}_{lmn}^* \gets -\mu_{M_0} \tilde{f}_{lmn}$
\EndFor
\For {$m_1$ from  $m-M_0$  to $M_0$}
\State $m_2 \gets m - m_1$
\ForAll{$l,n$ satisfying $l \geqslant m$, $l+2n \leqslant M_0$}
\State \rule[-.3\baselineskip]{0pt}{1.6\baselineskip}
  $\displaystyle \tilde{Q}_{lmn}^* \gets \sum_{l_1+2n_1 \leqslant
  M_0} \sum_{l_2+2n_2 \leqslant M_0} A_{lmn}^{l_1 m_1 n_1, l_2 m_2 n_2}
  \tilde{f}_{l_1 m_1 n_1} \tilde{f}_{l_2 m_2 n_2}$
\EndFor
\EndFor
\EndIf
\EndFor
\ForAll{$m = -M, \cdots, -1$ and $l,n$ satisfying $l \geqslant |m|$, $l+2n \leqslant M$}
\State $\tilde{Q}_{lmn}^* \gets (-1)^m \overline{\tilde{Q}_{l,-m,n}^*}$
\EndFor
\end{minipage}
\end{algorithmic}
\end{algorithm}

It is worth noting that line 10 can be implemented by two
matrix-vector multiplications:
\begin{align}
\label{eq:tilde_g}
1: & \qquad \tilde{g}_{lmn}^{l_1 m_1 n_1} = \sum_{l_2 + 2n_2
  \leqslant M_0} A_{lmn}^{l_1 m_1 n_1, l_2 m_2 n_2}
    \tilde{f}_{l_2 m_2 n_2}, \\
\label{eq:tilde_Q}
2: & \qquad \tilde{Q}_{lmn}^* = \sum_{l_1 + 2n_1 \leqslant M_0}
  \tilde{g}_{lmn}^{l_1 m_1 n_1} \tilde{f}_{l_1 m_1 n_1}.
\end{align}
Using the storage pattern illustrated in Figure \ref{fig:f} and Figure
\ref{fig:A}, the matrix entries and the vector components involved in
the above operations are automatically continuously stored. The
details are illustrated in Figure \ref{fig:algo}. The left column of
Figure \ref{fig:algo} provides the color coding of the vectors. Each
vertical strip denotes the data structure represented on the right
column of Figure \ref{fig:f}. Since the coefficients with $l + 2n
\leqslant M_0$ and the coefficients with $l + 2n > M_0$ are treated
differently, we distinguish these two parts by shading with slanted
lines. The middle column shows the computation of
$\tilde{g}_{lmn}^{l_1 m_1 n_1}$ (equation \eqref{eq:tilde_g}) for
given $m$ and $m_1$, which is in fact just one matrix-vector
multiplication based on our data structure. The color coding of the
matrix $A_{lmn}^{l_1 m_1 n_1, l_2 m_2 n_2}$ is the same as Figure
\ref{fig:A}. The right column gives the computation of
$\tilde{Q}_{lmn}^*$, which contains a matrix-vector multiplication (for
degree less than or equal to $M_0$, equation \eqref{eq:tilde_Q}) and a
vector scaling (for degree greater than $M_0$). The matrix
$\tilde{g}_{lmn}^{l_1 m_1 n_1}$ is a reshaping of the vector in the
middle column. By comparing the matrix form and the vector form of $A$
and $\tilde{g}$, one can observe our data structure automatically
corresponds to the row-major order of these matrices, which makes it
easy to use optimized BLAS libraries such as ATLAS \cite{Whaley2001}
to achieve high numerical efficiency.

\begin{figure}[!ht]
\centering
\includegraphics[width=\textwidth]{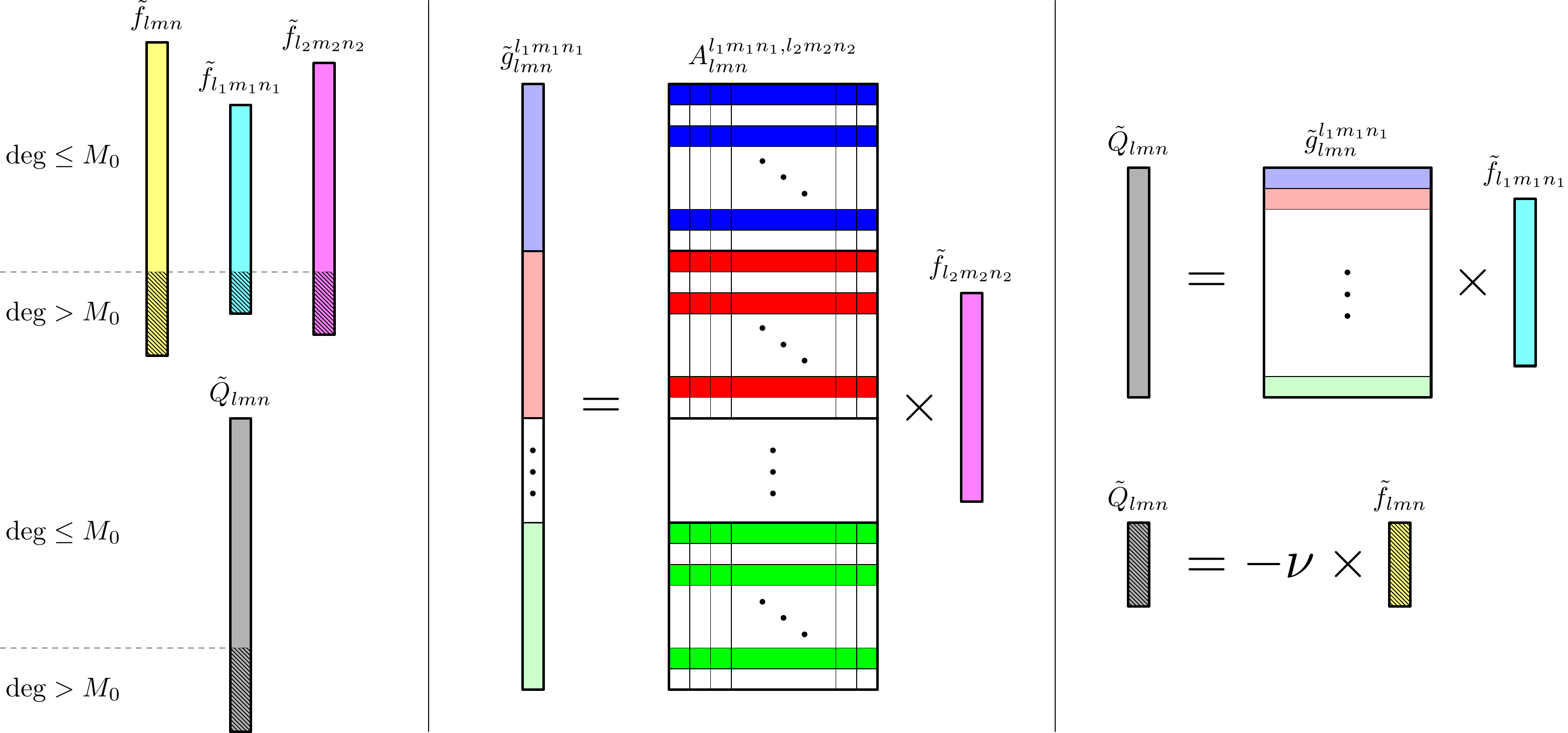}
\caption{Illustration of the algorithm for given $m$, $m_1$ and $m_2$.}
\label{fig:algo}
\end{figure}


\section{Numerical examples} \label{sec:numerical}
In this section, we will show some results of our numerical simulation. In all
the numerical experiments, we consider the inverse-power-law model, for which
the repulsive force between two molecules is proportional to $r^{-\eta}$, with
$r$ and $\eta$ being, respectively, the distance between the two molecules and
a given positive constant. The details about this model can be found in
\cite{Bird}. In all the tests, we use the classical fourth-order Runge-Kutta
method to the equations \eqref{eq:ODE} numerically for some given $M_0$ and
$M$, and the time step is chosen as $\Delta t = 0.01$.

For visualization purposes, we define integration operators $\mathcal{I}_1:
L^1(\mathbb{R}^3) \rightarrow L^1(\mathbb{R})$ and  $\mathcal{I}_2:
L^1(\mathbb{R}^3) \rightarrow L^1(\mathbb{R}^2)$ by
\begin{displaymath}
(\mathcal{I}_1 f)(v_1) = \int_{\mathbb{R}} \int_{\mathbb{R}}
  f(\bv) \,\mathrm{d}v_2 \,\mathrm{d}v_3, \qquad \forall f\in L^1(\mathbb{R}^3),
\end{displaymath}
and
\begin{displaymath}
(\mathcal{I}_2 f)(v_1, v_2) = \int_{\mathbb{R}} f(\bv) \,\mathrm{d}v_3,
  \qquad \forall f \in L^1(\mathbb{R}^3).
\end{displaymath}
These 1D and 2D functions are actually marginal distribution functions (MDFs).
We will only show the plots for these MDFs due to the difficulty in plotting
three-dimensional functions. 

Besides, we are also interested in the evolution of the moments of the
distribution function, especially the heat flux $q_i$ and the stress tensor
$\sigma_{ij}$. For a given distribution function $f \in \mathcal{S}$, they are
defined as
\begin{equation}
  \label{eq:heatflux_stress}
  q_i = \frac{1}{2}\int_{R^3}|\bv|^2 v_i f(\bv) \dd \bv, \qquad
  \sigma_{ij} = \int_{\bbR^3}\left(
    v_i v_j - \frac{1}{3}\delta_{ij}|\bv|^2
  \right) f(\bv) \dd \bv, \qquad i, j = 1, 2, 3. 
\end{equation}
The relations between these moments and the coefficients are
\begin{equation}
  \label{eq:moment_q_sigma}
  \begin{aligned}
& q_1 = \sqrt{5}{\rm Re}(\tilde{f}_{111}), \quad
  q_2 = -\sqrt{5}{\rm Im}(\tilde{f}_{111}), \quad
  q_3 = -\sqrt{5/2} \tilde{f}_{101}, \\
& \sigma_{11} = \sqrt{2}{\rm Re}(\tilde{f}_{220}) - \tilde{f}_{200}/\sqrt{3}, \quad
  \sigma_{12} = -\sqrt{2}{\rm Im}(\tilde{f}_{220}), \quad
  \sigma_{13} = -\sqrt{2}{\rm Re}(\tilde{f}_{210}), \\
& \sigma_{22} = -\sqrt{2}{\rm Re}(\tilde{f}_{220}) - \tilde{f}_{200}/\sqrt{3}, \quad
  \sigma_{23} = \sqrt{2}{\rm Im}(\tilde{f}_{210}), \quad
  \sigma_{33} = 2\tilde{f}_{200}/\sqrt{3}.
  \end{aligned}
\end{equation}

\subsection{BKW (Bobylev-Krook-Wu) solution}
In this example, we study the Maxwell gas whose the power index $\eta$
equals $5$. In this case, the kernel $B(g,\chi)$ turns out to be
independent of $g$ (therefore denoted by $B(\chi)$ below), and it is
given in \cite{Bobylev1984, krook1977exact} that the spatially
homogeneous Boltzmann equation \eqref{eq:Boltzmann} admits an exact
solution $F(t) = f^{[\tau(t)]}$, where
\begin{align*}
& \tau(t) = 1 - \frac{2}{5}\exp(-\lambda t),
  \qquad \lambda = \frac{\pi}{2} \int_0^{\pi} B(\chi) \sin^2 \chi \dd \chi, \\
& f^{[\tau]}(\bv) = (2\pi \tau)^{-3/2} \exp \left( -\frac{|\bv|^2}{2\tau} \right)
  \left[ 1 + \frac{1-\tau}{\tau}
    \left( \frac{|\bv|^2}{2\tau} - \frac{3}{2} \right) \right].
\end{align*}
The initial MDFs are plotted in Figure \ref{fig:ex1_init}, in which the contour
lines for exact functions and their numerical approximation are hardly
distinguishable, with the number $M = 20$. 

\begin{figure}[!ht]
\centering
\subfigure[Initial MDF $\mathcal{I}_1 f^0$\label{fig:ex1_init_1d}]{%
  \includegraphics[width=0.28\textwidth, clip]{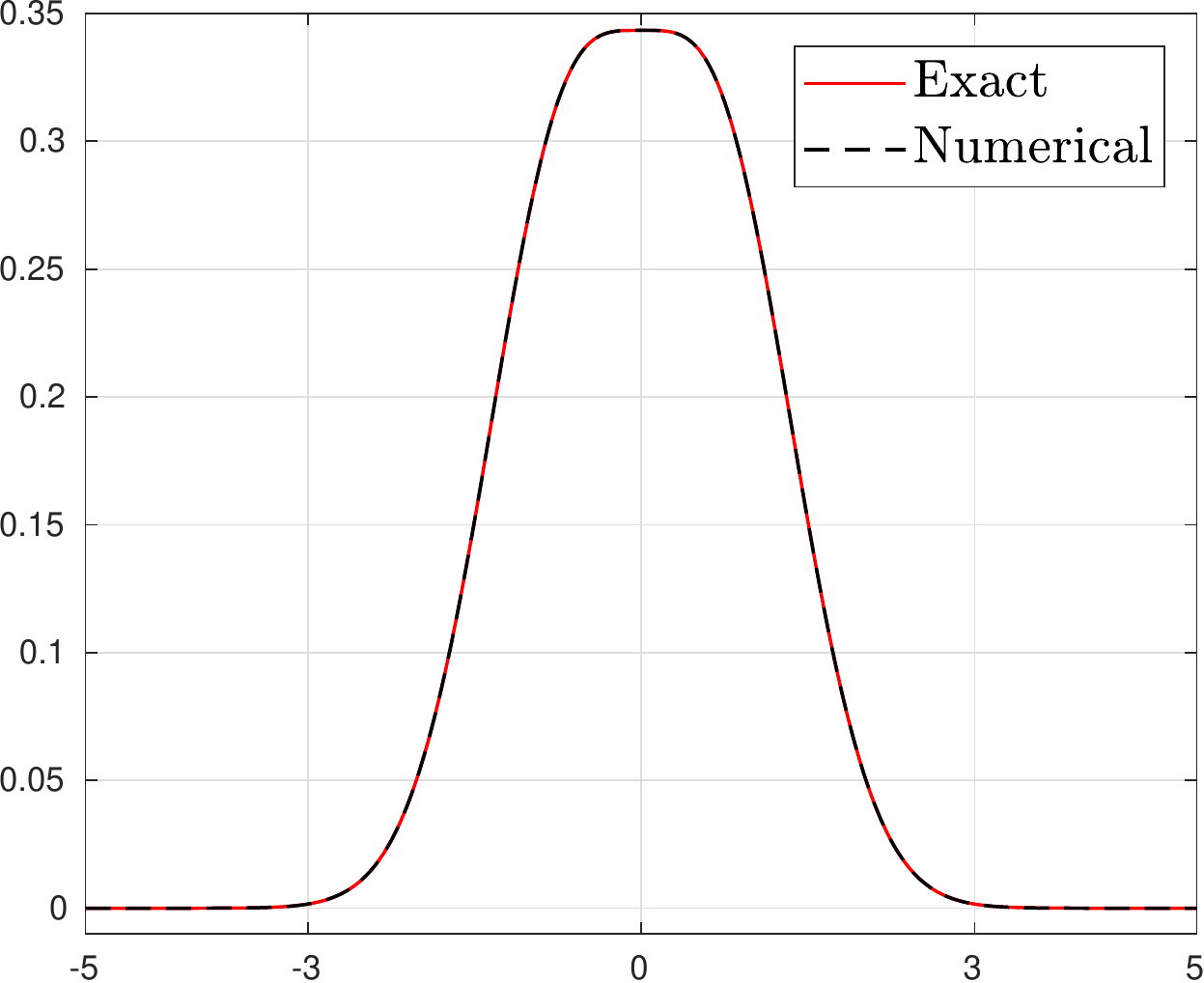}    
}\quad
\subfigure[Contours of $\mathcal{I}_2 f^0$\label{fig:ex1_init_2d_contour}]{%
  \includegraphics[width=0.28\textwidth, clip]{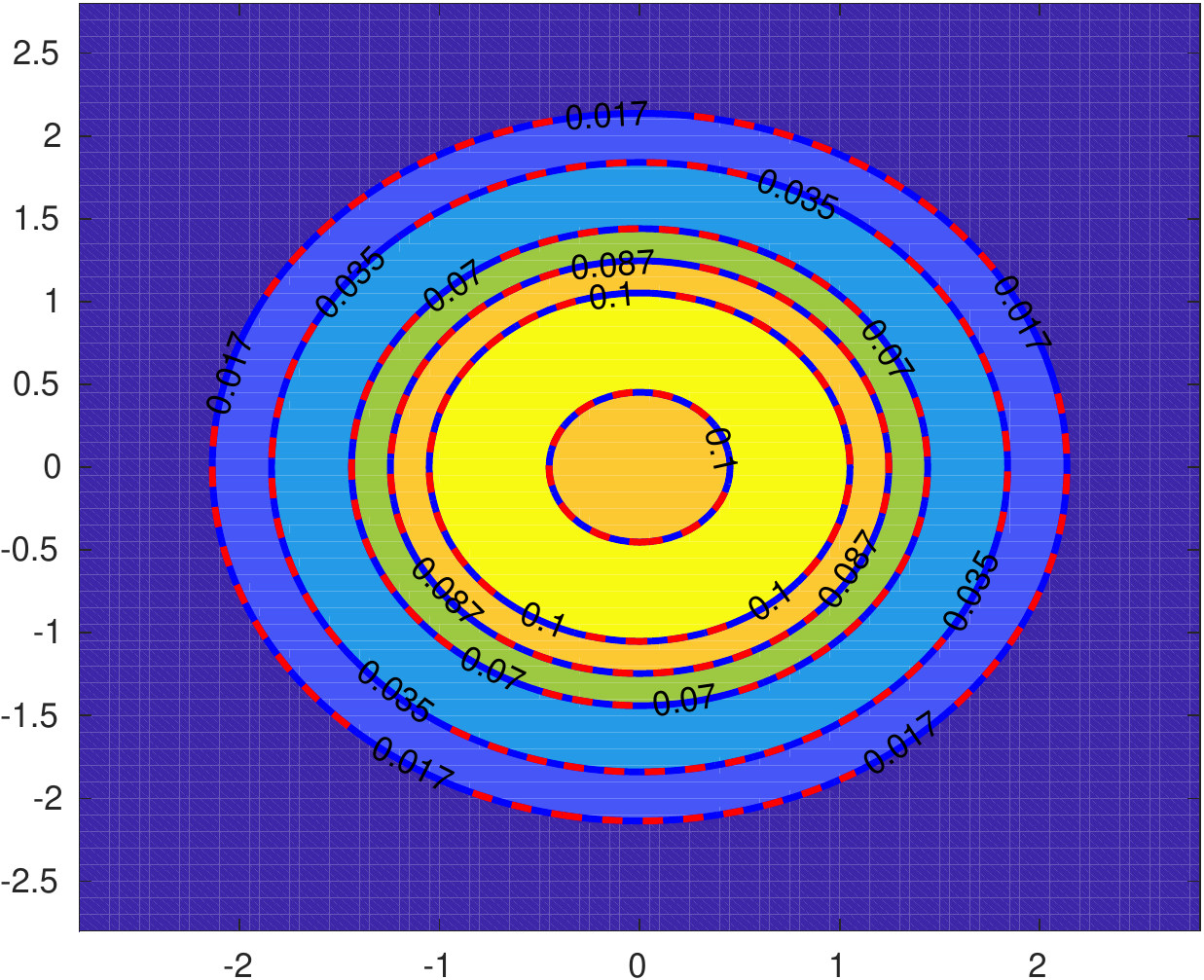}  
}\quad
\subfigure[Initial MDF $\mathcal{I}_2 f^0$\label{fig:ex1_init_2d}]{%
  \includegraphics[width=0.32\textwidth, clip]{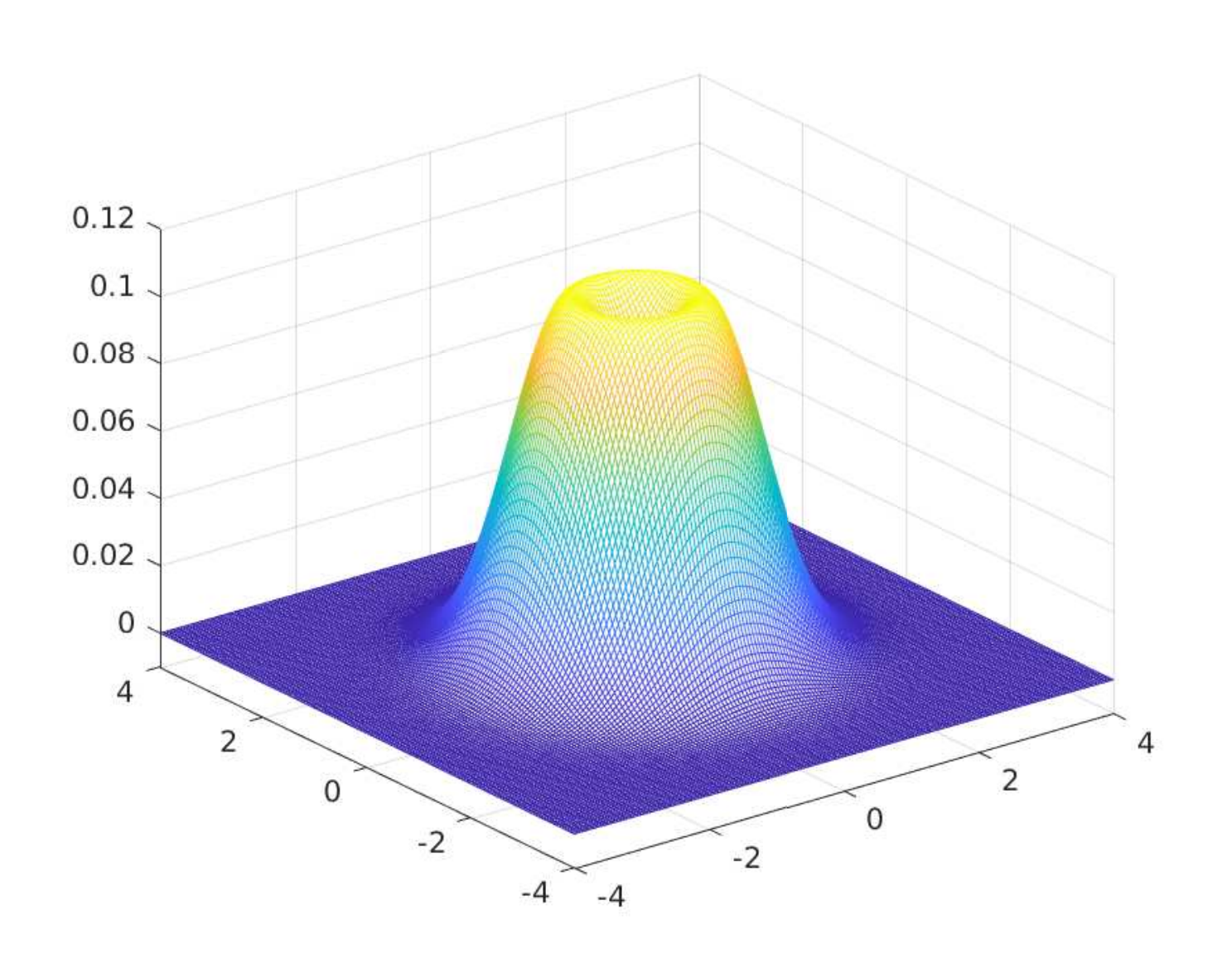}  
}
\caption{Initial marginal distribution functions. In (a), the red line
  corresponds to the exact solution, while black dashed line
  corresponds to $M = 20$ respectively. In (b), the blue solid lines
  correspond to the exact solution, and the red dashed lines
  correspond to the numerical approximation $M = 20$. Figure (c) shows
  only the numerical approximation with $M = 20$.}
\label{fig:ex1_init}
\end{figure}

In Figure \ref{fig:ex1_1d}, the marginal distribution functions
$\mathcal{I}_1F(t)$ at $t = 0.2$, $0.4$ and $0.6$ are shown. Here, $M_0$ is set
as $5$ and $20$.  The marginal distribution functions $\mathcal{I}_2F(t)$ are
plotted in Figures \ref{fig:ex1_2d_M0=5_20} and \ref{fig:ex1_2d_M0=20_20},
respectively for $M_0=5$ and $20$. For $M_0 = 5$, the numerical solution
provides a reasonable approximation, but still has noticeable deviations, while
for $M_0 = 20$, the two solutions match perfectly in all cases.
\begin{figure}[!ht]
\centering
\subfigure[$t=0.2$]{%
  \includegraphics[width=.30\textwidth]{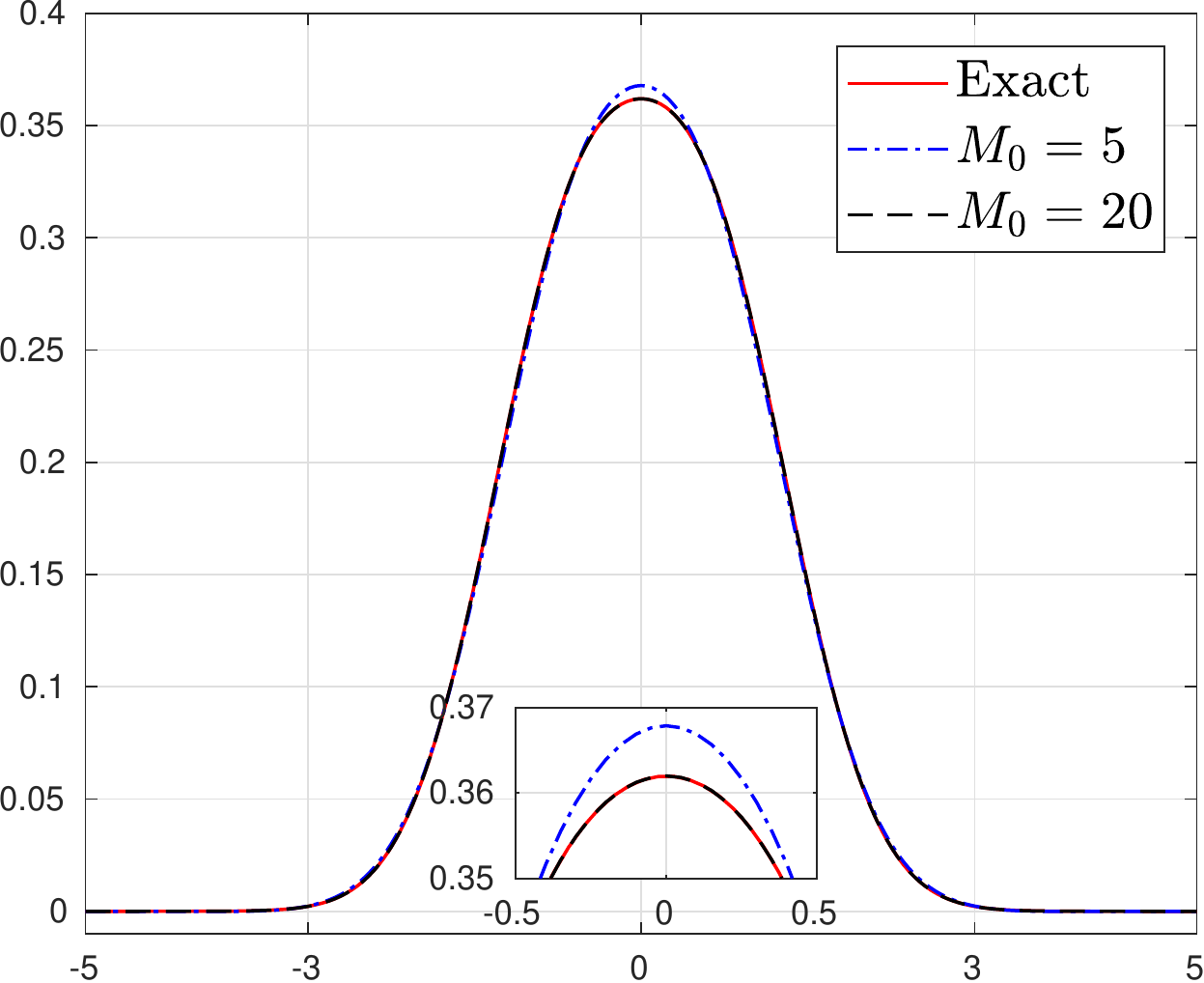}
}\hfill
\subfigure[$t=0.4$]{%
  \includegraphics[width=.30\textwidth]{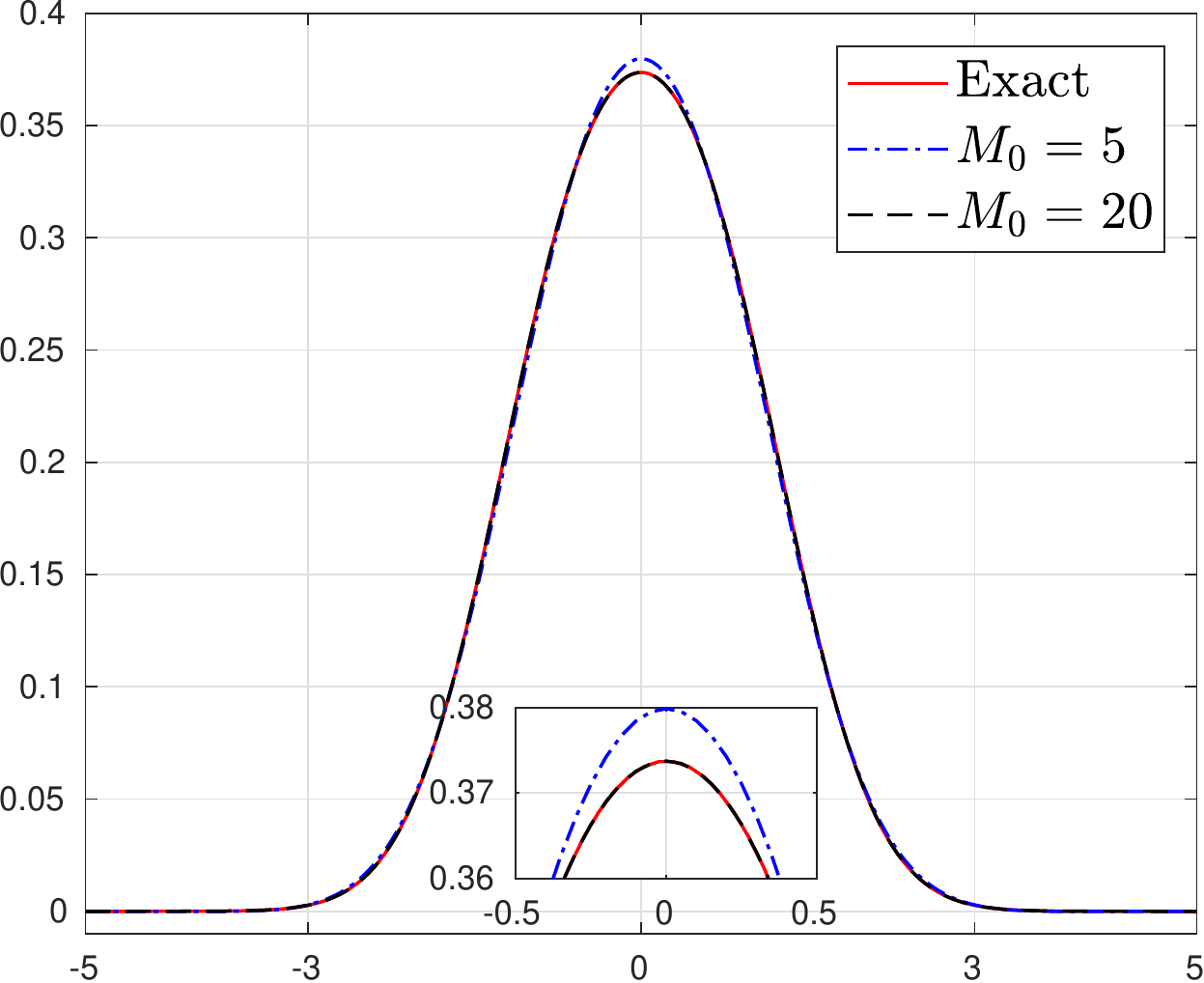}
}\hfill
\subfigure[$t=0.6$]{%
  \includegraphics[width=.30\textwidth]{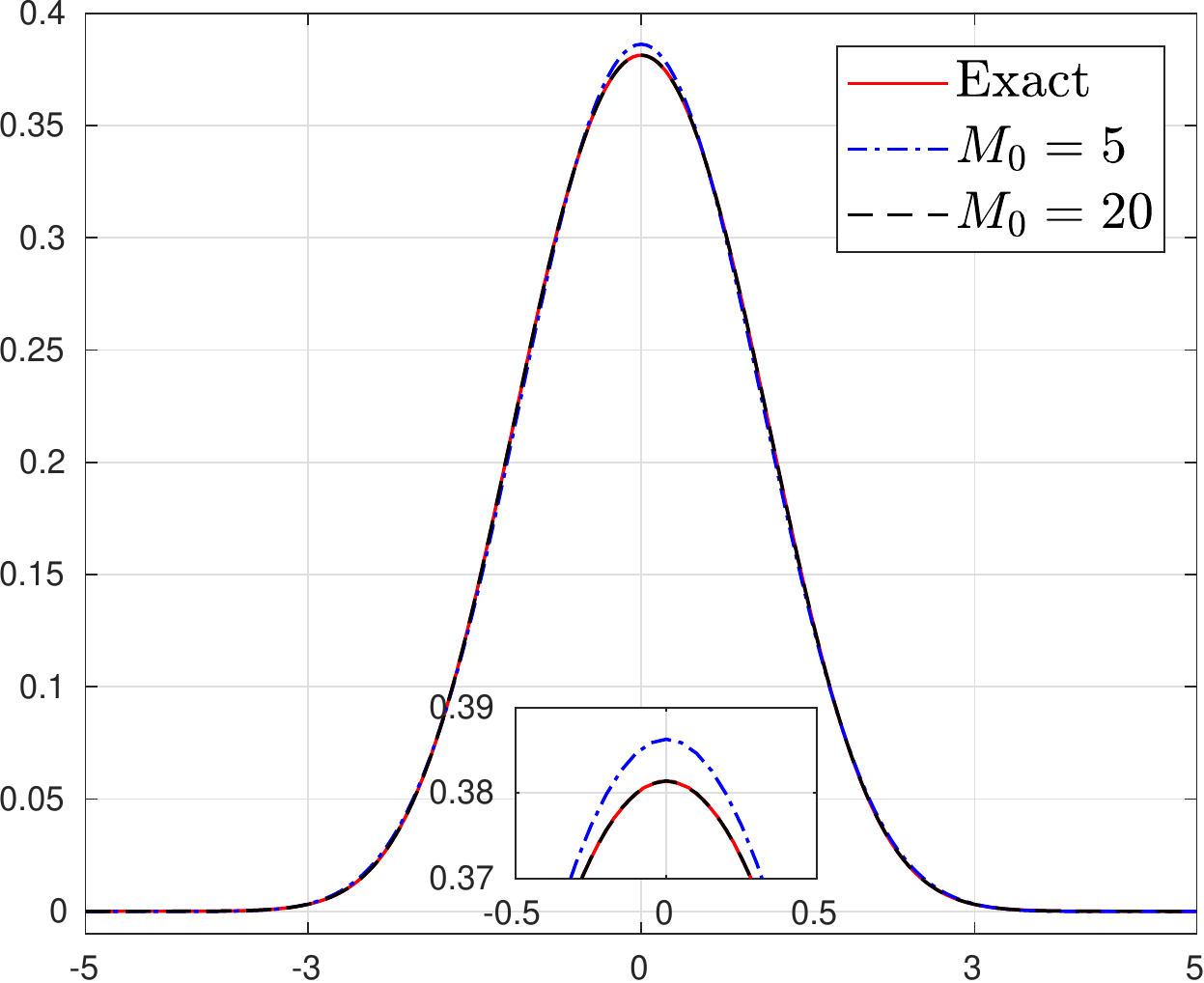}
}
\caption{Marginal distribution functions $\mathcal{I}_1 F(t)$ at different times.}
\label{fig:ex1_1d}
\end{figure}

\begin{figure}[!ht]
\centering
\subfigure[$t=0.2$]{%
  \includegraphics[width=.3\textwidth]{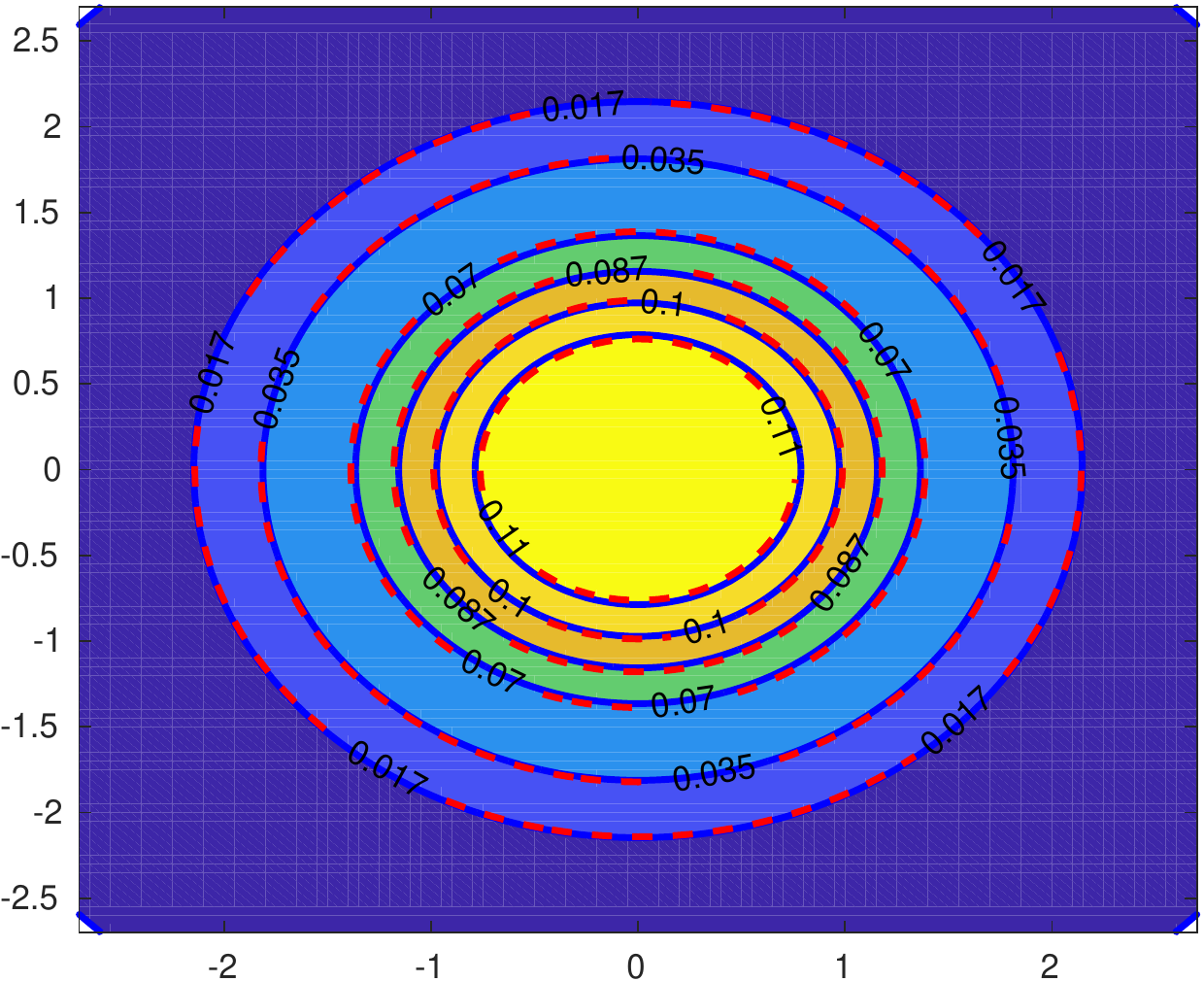}
} \hfill
\subfigure[$t=0.4$]{%
  \includegraphics[width=.3\textwidth]{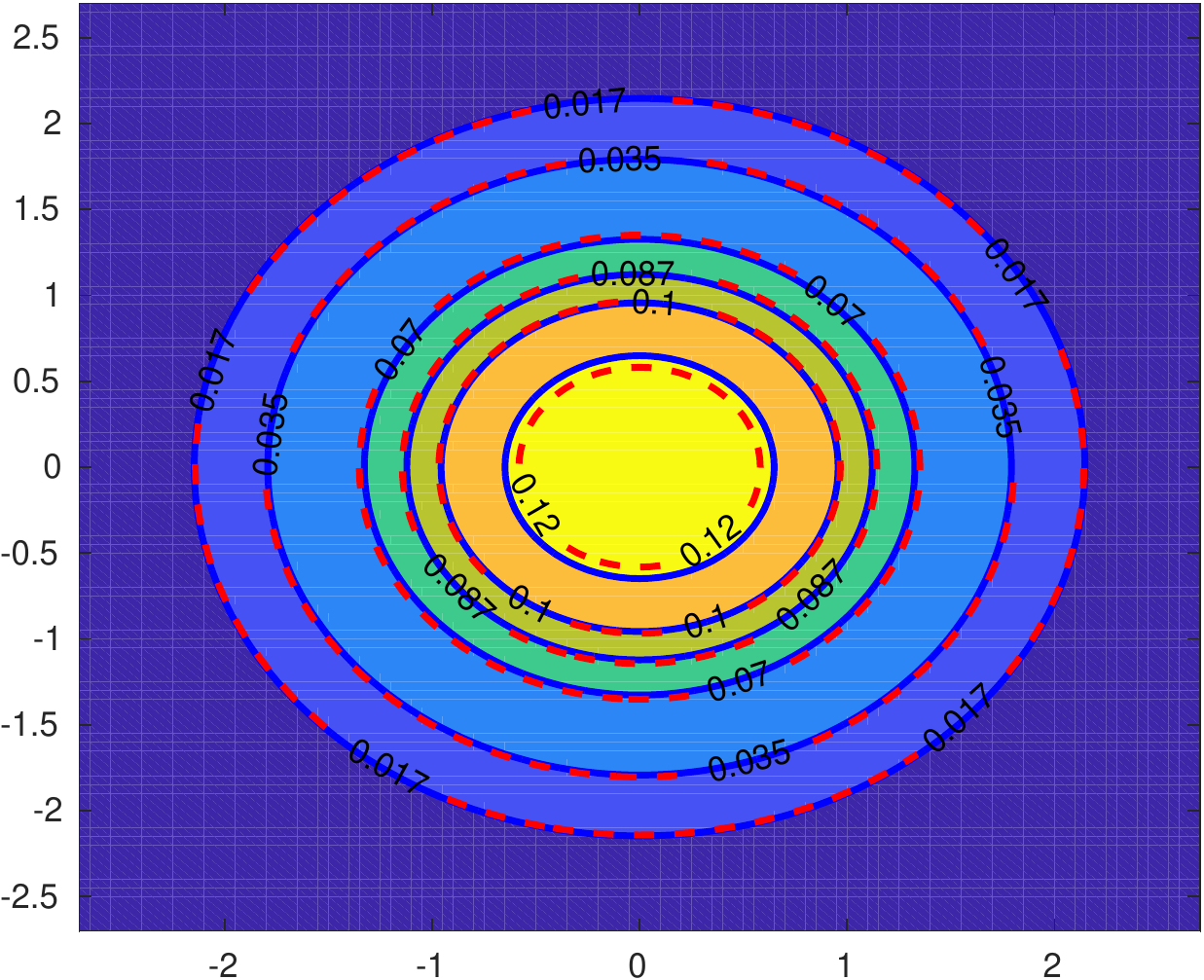}
} \hfill
\subfigure[$t=0.6$]{%
  \includegraphics[width=.3\textwidth]{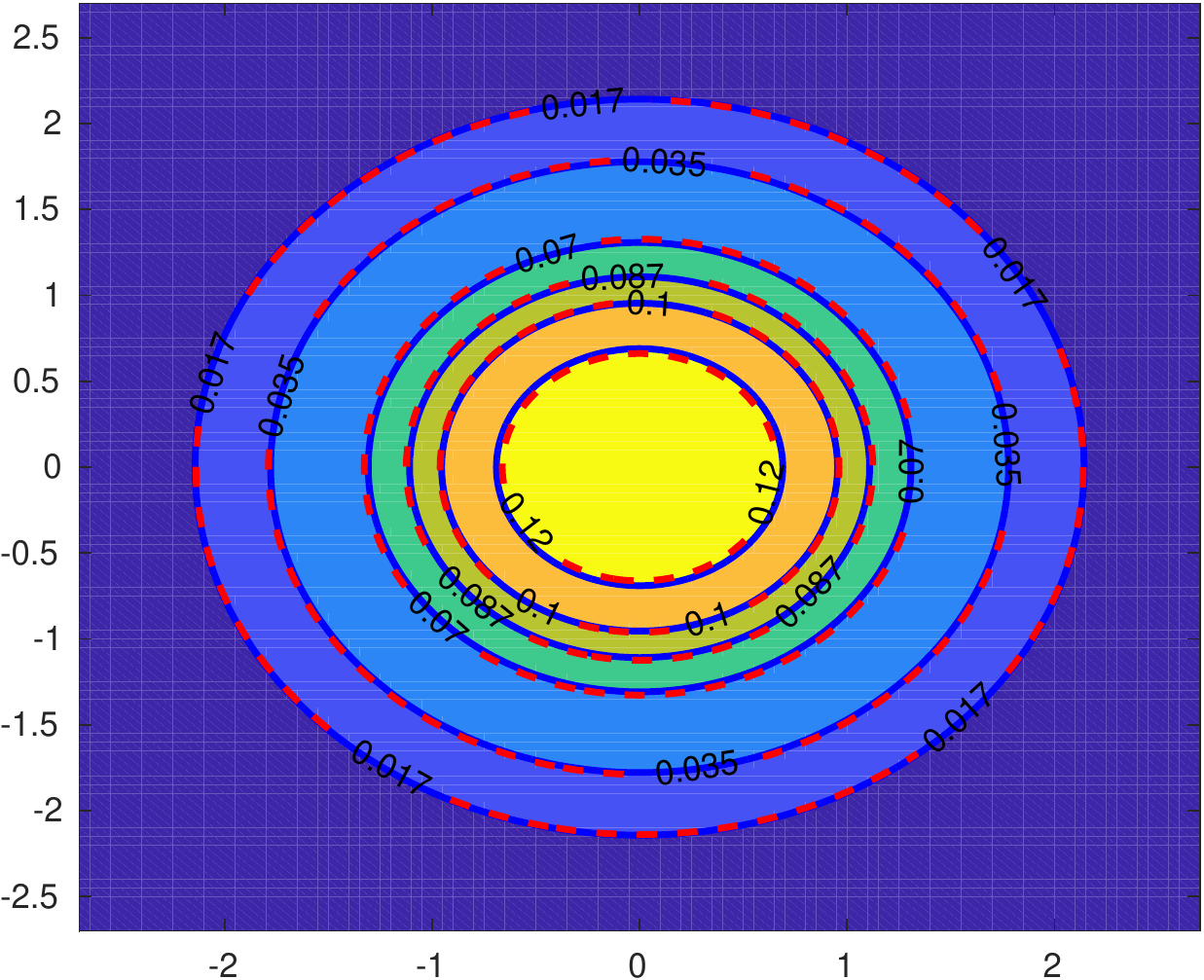}
}
\caption{Comparison of numerical results using $M_0 = 5$ and the exact
  solution. The blue contours and the red dashed contours are
  respectively the results for $M_0 = 5$ and the exact solution.}
\label{fig:ex1_2d_M0=5_20}
\end{figure}

\begin{figure}[!ht]
\centering
\subfigure[$t=0.2$]{%
  \begin{overpic}
  [width=.3\textwidth, clip]{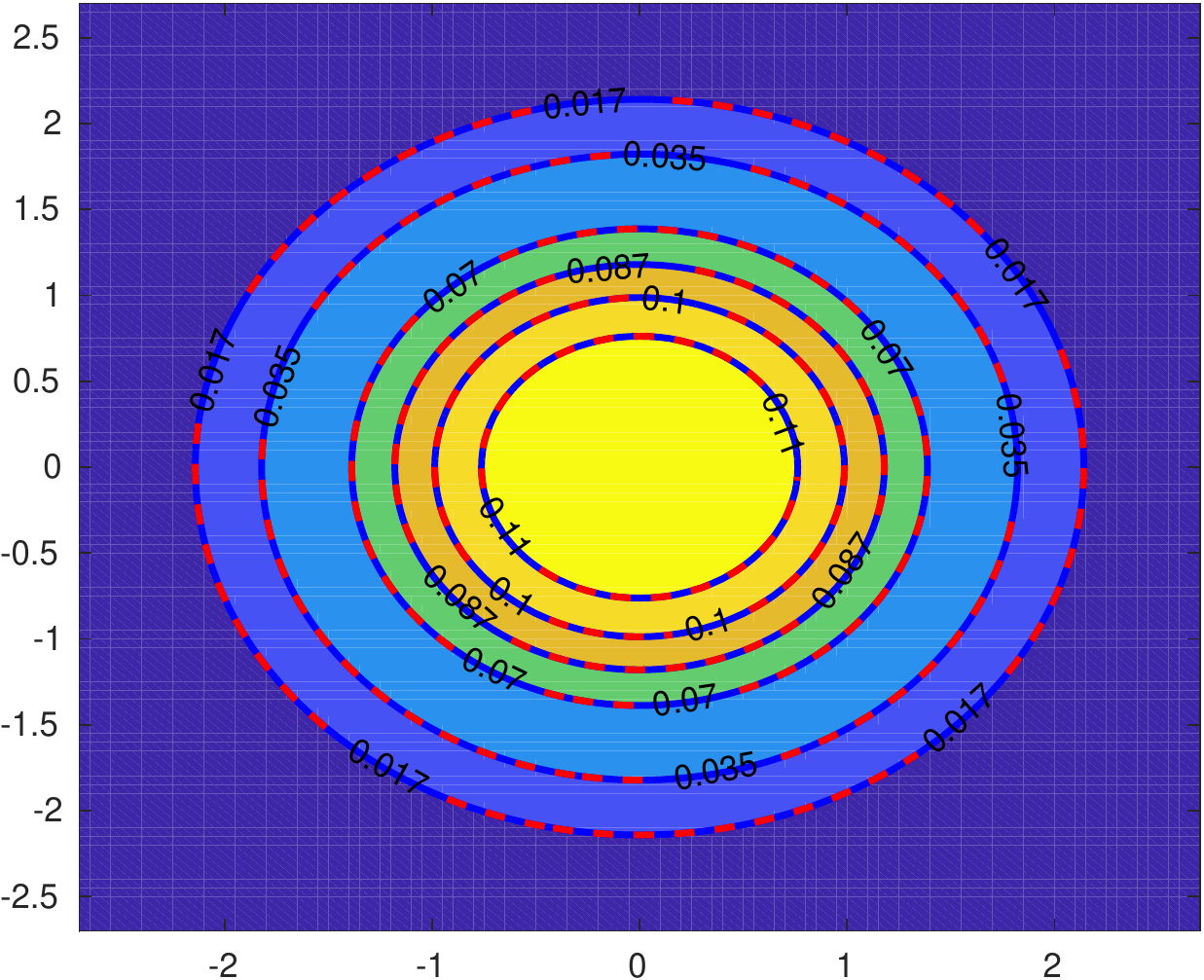}  
  \end{overpic}
  } \quad
\subfigure[$t=0.4$]{%
  \begin{overpic}
    [width=.3\textwidth, clip]{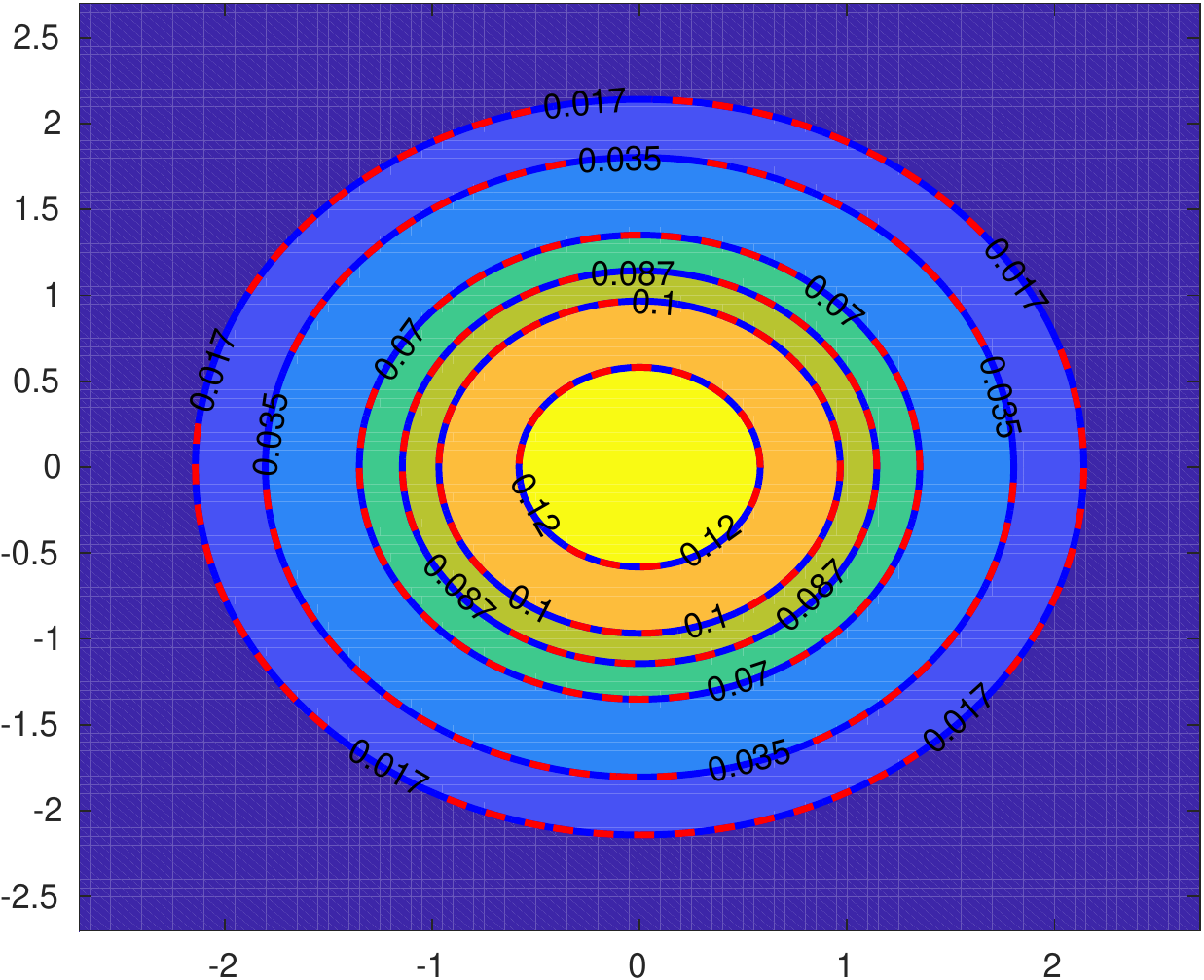}    
  \end{overpic}
} \quad
\subfigure[$t=0.6$]{%
  \begin{overpic}
    [width= .3\textwidth, clip]{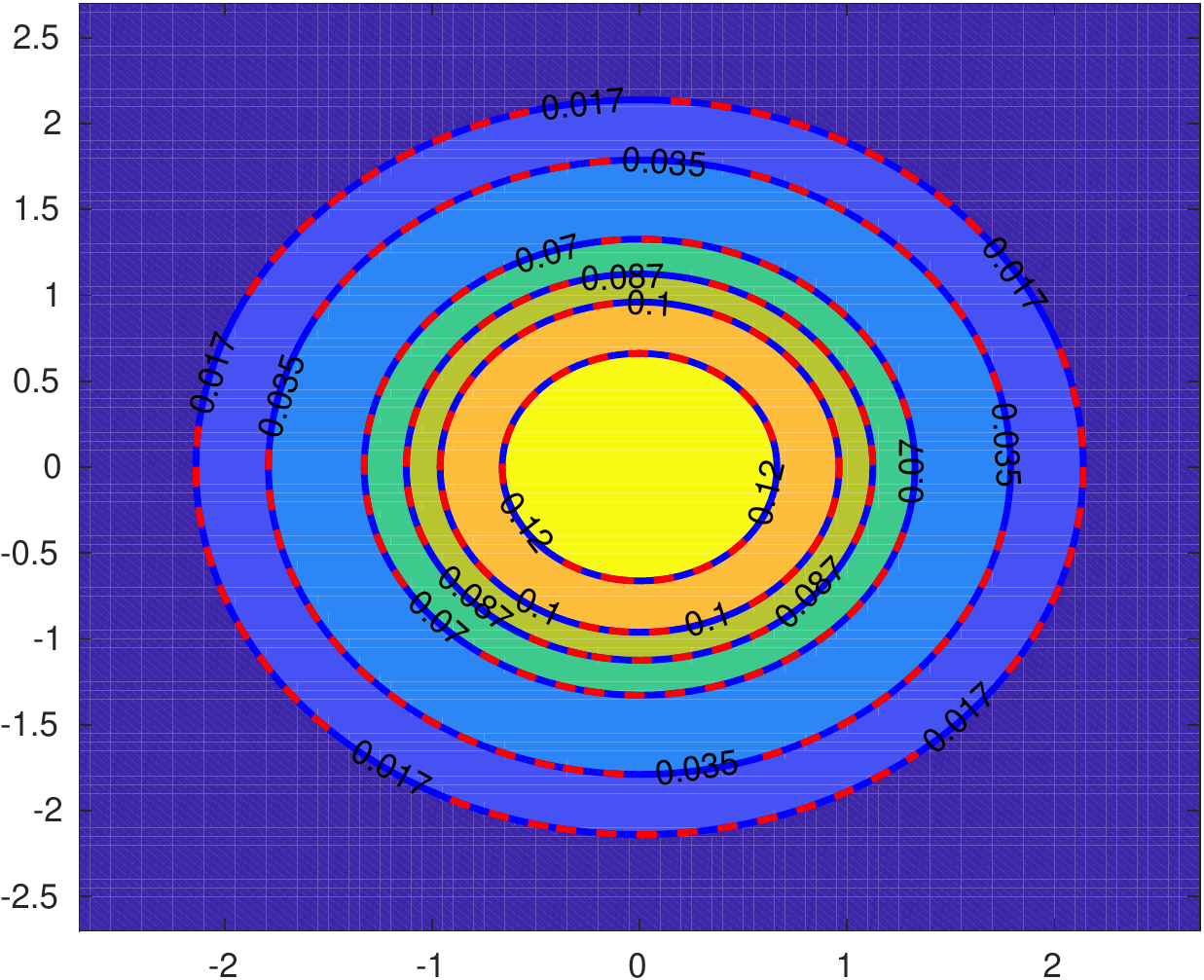}  
  \end{overpic}
}
\caption{Comparison of numerical results using $M_0 = 20$ and the
  exact solution. The blue contours and the red dashed contours are
  respectively the results for $M_0 = 20$ and the exact solution.}
\label{fig:ex1_2d_M0=20_20}
\end{figure}

Now we consider the time evolution of the coefficients. By expanding the exact
solution into Burnett series, we get the exact solution for the coefficients:
\begin{equation}
\label{eq:exact_ex1}
\tilde{F}_{lmn}(t) = \left\{\begin{array}{ll}
  \sqrt{\frac{2\Gamma(n + 3/2)}{\sqrt{\pi} n!}}(1 - n)( 1-  \tau(t))^n, &
  l = m = 0, \quad n\in \bbN, \\[13pt]
  0, & \text{otherwise}.
\end{array} \right.
\end{equation}
Due to the symmetry of the distribution function, the coefficients
$\tilde{F}_{lmn}$ are nonzero for any $t$ only when both of $l$ and $m$ are
zero. From \eqref{eq:exact_ex1}, we see that $\tilde{F}_{000} = 1$ and
$\tilde{F}_{001} = 0$ for any $t$. Hence we will focus on the coefficients
$\tilde{F}_{00n}, n = 2,\cdots,5$. For Maxwell molecules, the discrete kernel
$A_{lmn}^{l_1m_1n_1, l_2m_2n_2}$ is nonzero only when $l + 2n = l_1 + 2 n_1 +
l_2 + 2n_2$. Therefore, for any $M \geqslant M_0 \geqslant 10$, the numerical
results for these coefficients are exactly the same (regardless of round-off
errors), and we just show the results for $M_0 = M =20$ here. Figure
\ref{fig:ex1_moments} gives the comparison between the numerical solution and
the exact solution for these coefficients. In all plots, the two lines almost
coincide with each other.

\begin{figure}[!ht]
\centering
\subfigure[$\tilde{F}_{002}(t)$]{%
  \includegraphics[width=.4\textwidth]{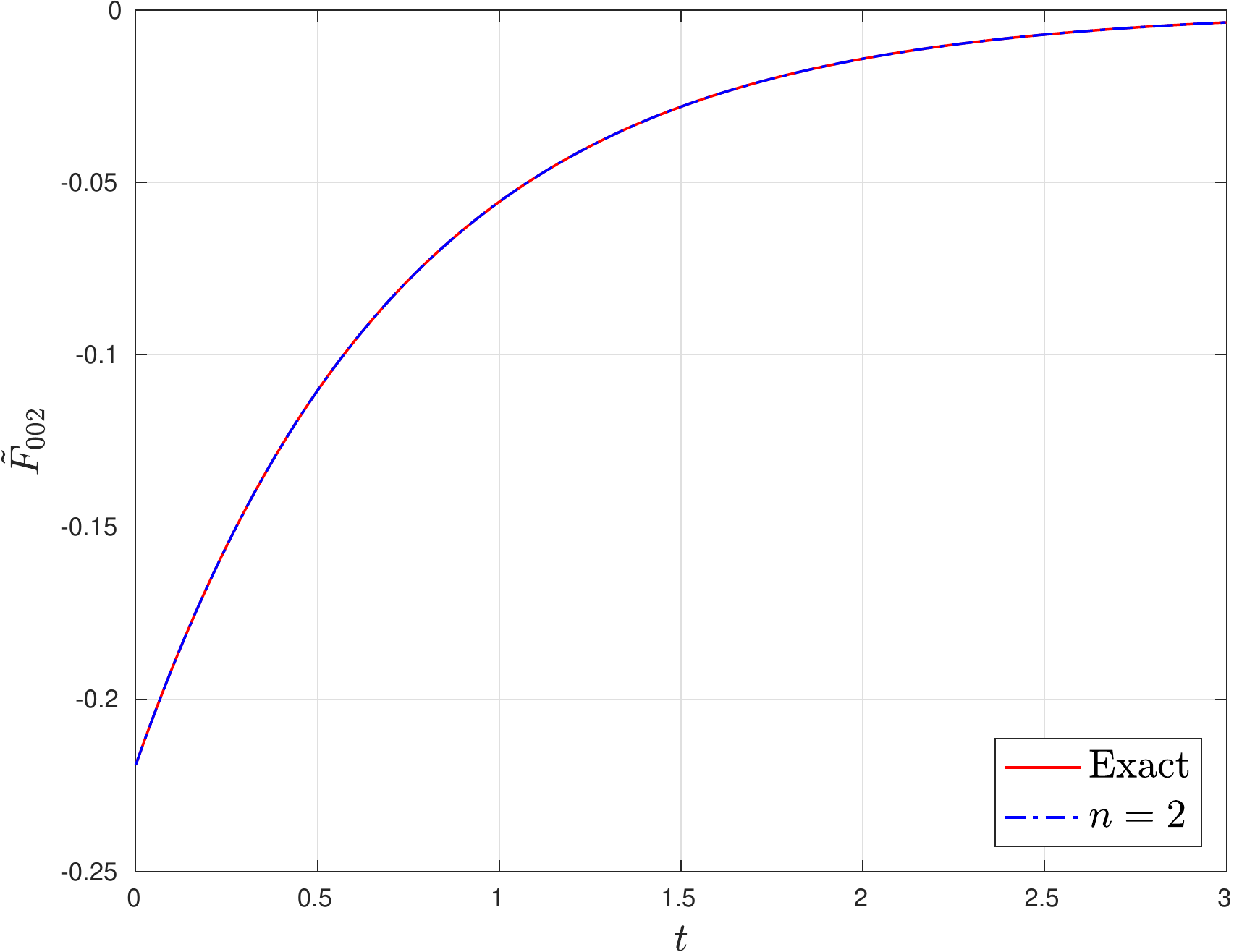}
} \hspace{20pt}
\subfigure[$\tilde{F}_{003}(t)$]{%
  \includegraphics[width=.4\textwidth]{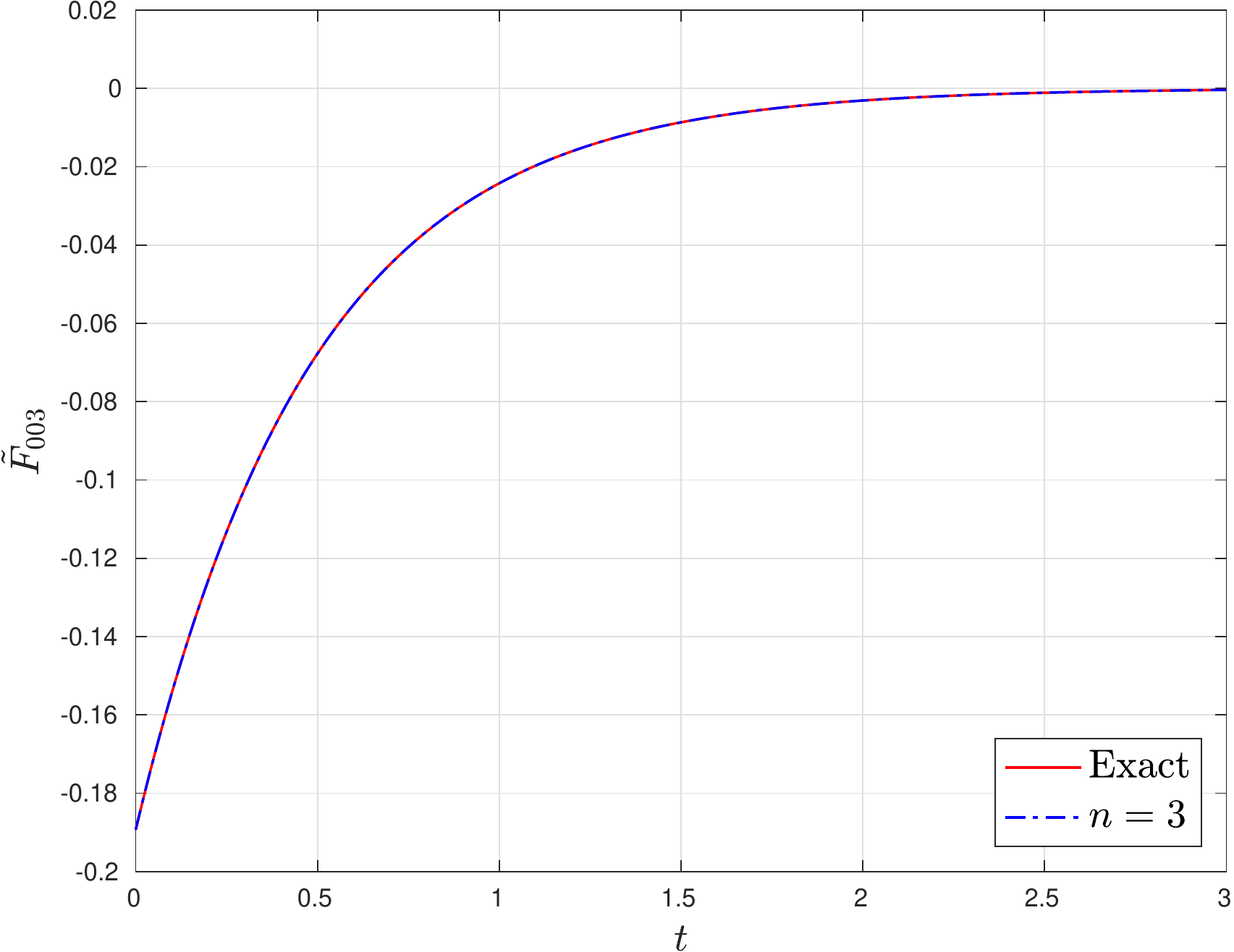}
} \\
\subfigure[$\tilde{F}_{004}(t)$]{%
  \includegraphics[width=.4\textwidth]{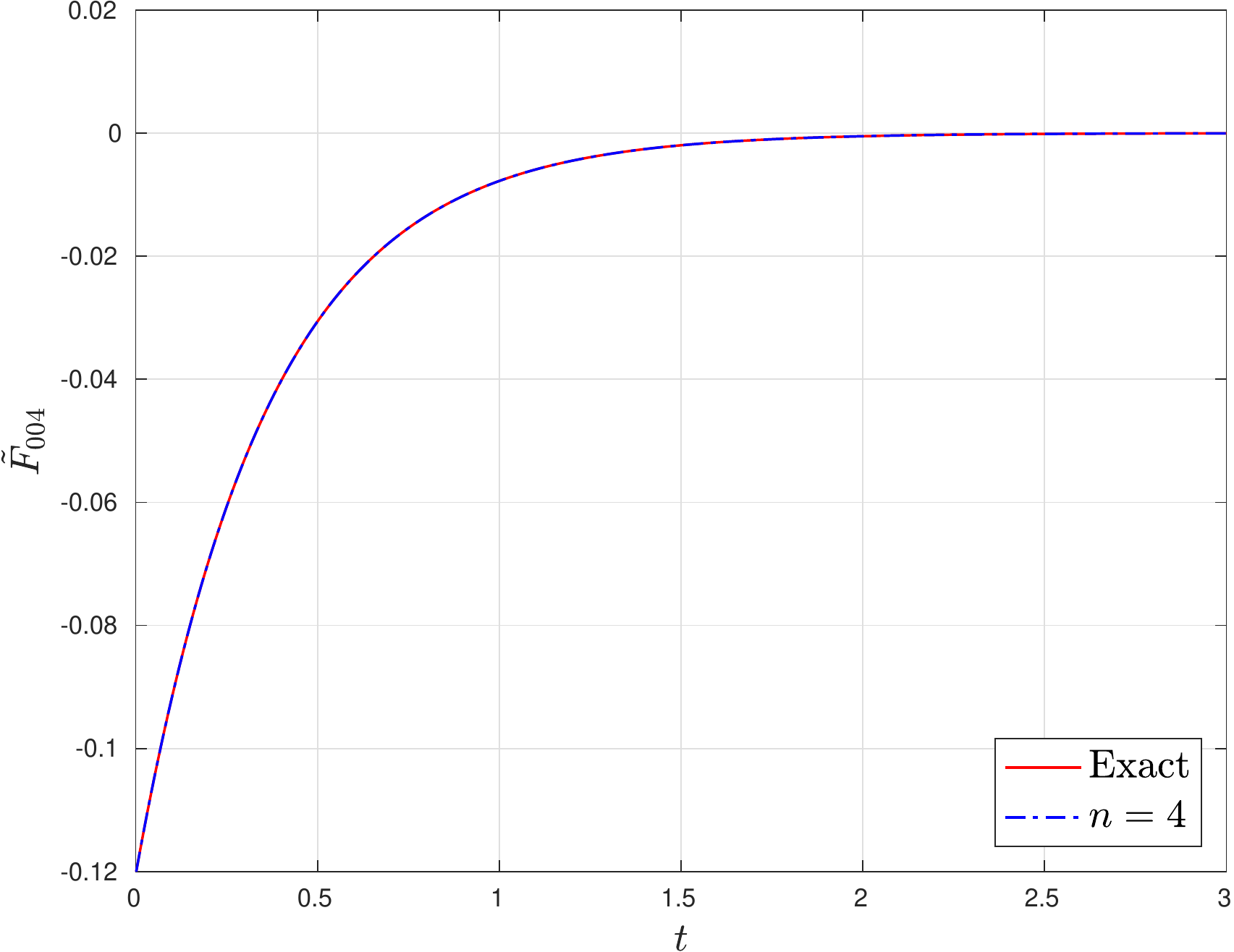}
} \hspace{20pt}
\subfigure[$\tilde{F}_{005}(t)$]{%
  \includegraphics[width=.4\textwidth]{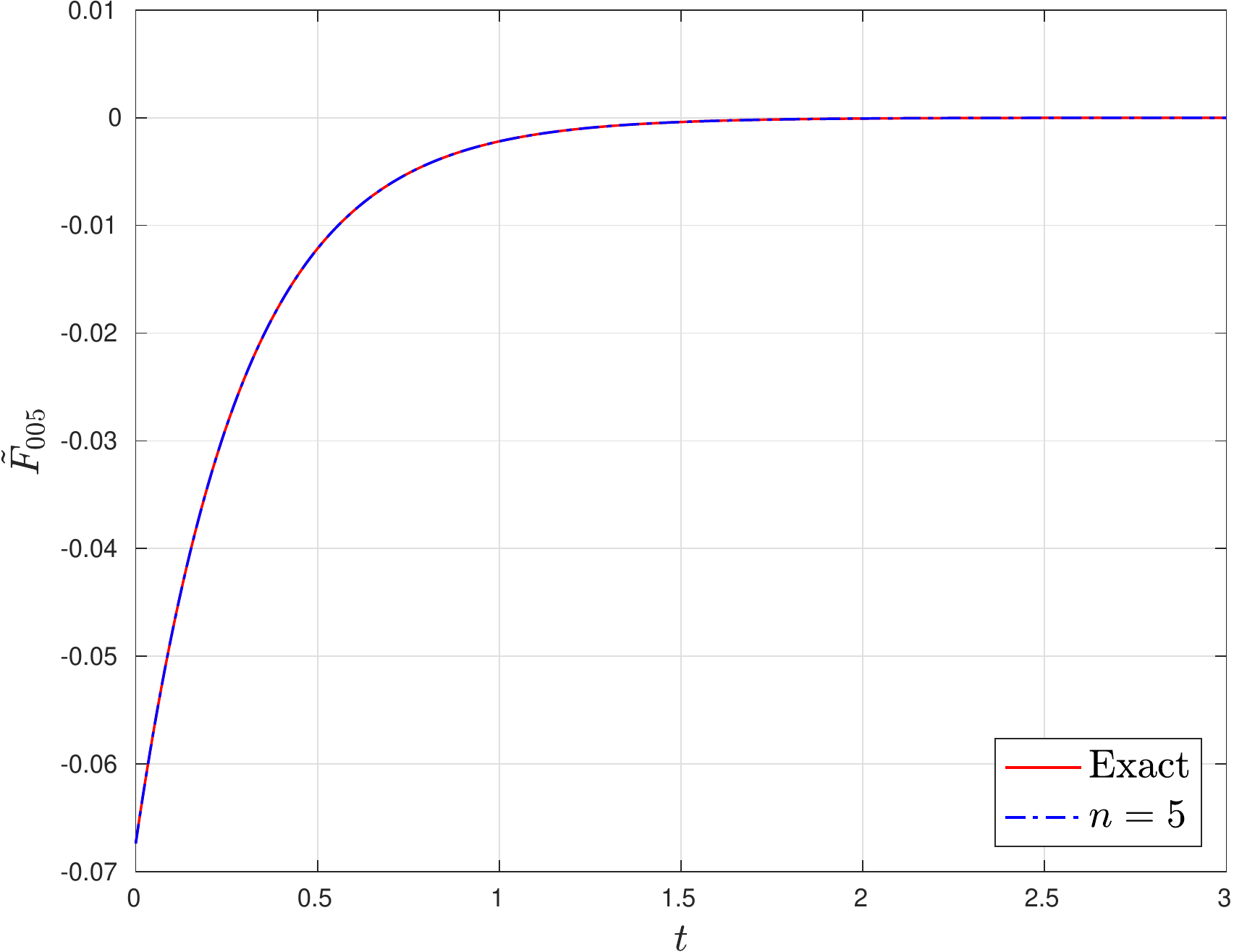}
}
\caption{The evolution of the coefficients. The red lines correspond to the
  reference solution, and the blue dashed lines correspond to the numerical
  solution.}
\label{fig:ex1_moments}
\end{figure}

\subsection{Quadruple-Gaussian initial data}
In this example, we perform the numerical test for the hard potential
case where the power index $\eta$ equals $10$. The initial distribution
function is
\begin{equation}
  \begin{aligned}
    f^0(\bv) = \frac{1}{4\pi^{3/2}} \left[ \exp \Big( -\frac{(v_1 +
        u)^2 + v_2^2 + v_3^2 }{2\theta}\Big) +
      \exp \Big( -\frac{(v_1 - u)^2 + v_2^2 + v_3^2}{2\theta} \Big)\right. \\
    \left.  + \exp \Big( -\frac{v_1^2 + (v_2+u)^2 + v_3^2}{2\theta}
      \Big) + \exp \Big( -\frac{v_1^2 + (v_2- u)^2 + v_3^2}{2\theta}
      \Big) \right],
  \end{aligned}
\end{equation}
where $u = \sqrt{2}$ and $\theta = 1/3$. In all our numerical tests,
we use $M = 40$, which gives a good approximation of the initial
distribution function (see Figure \ref{fig:ex2_init}).

\begin{figure}[!ht]
\centering
\subfigure[Initial MDF $\mathcal{I}_1 f^0$\label{fig:ex2_init_1d}]{%
  \begin{overpic}[width=0.28\textwidth, clip]{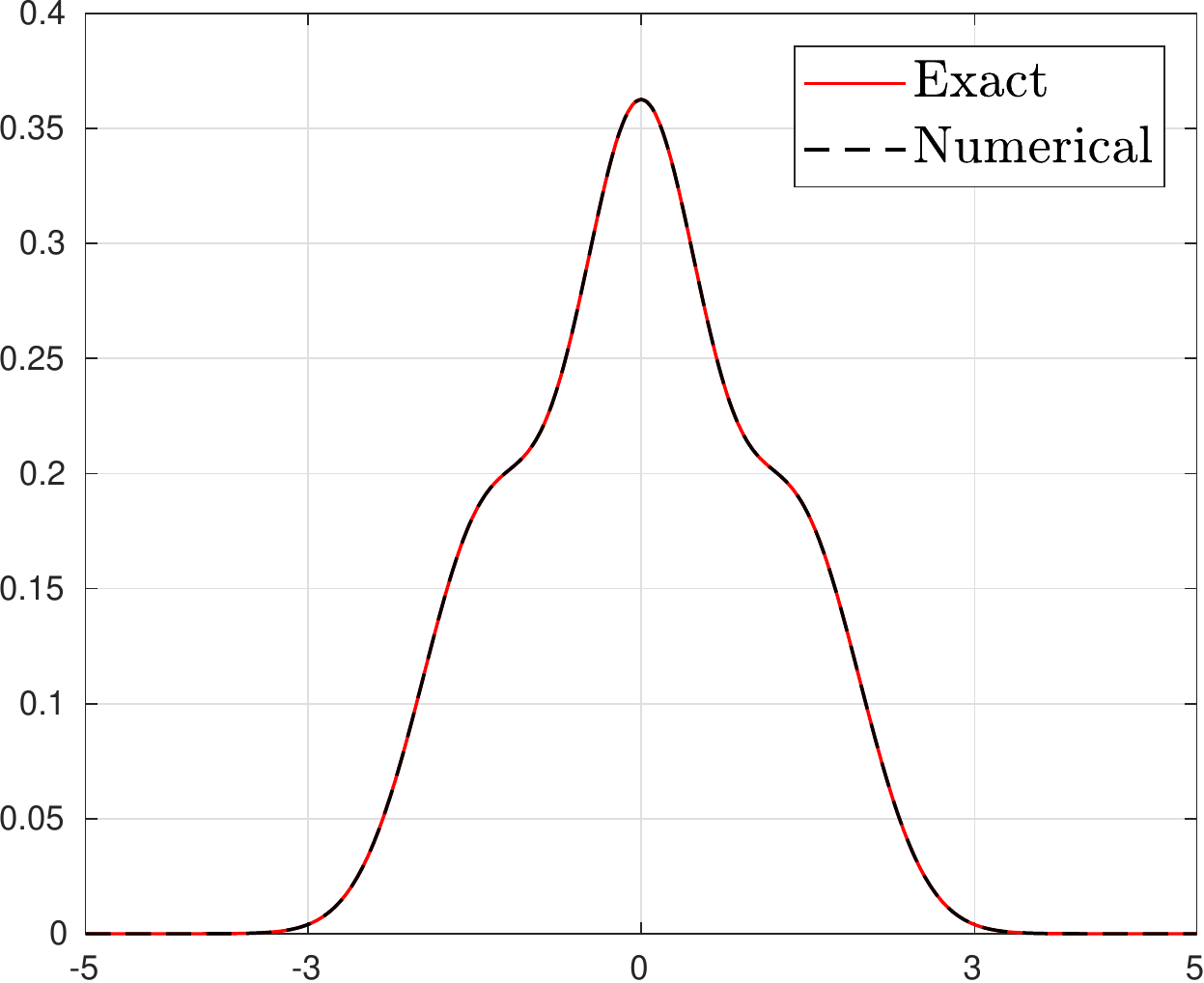}    
  \end{overpic}
}\quad
\subfigure[Contours of $\mathcal{I}_2 f^0$\label{fig:ex2_init_2d_contour}]{%
  \begin{overpic}
    [width=0.3\textwidth, clip]{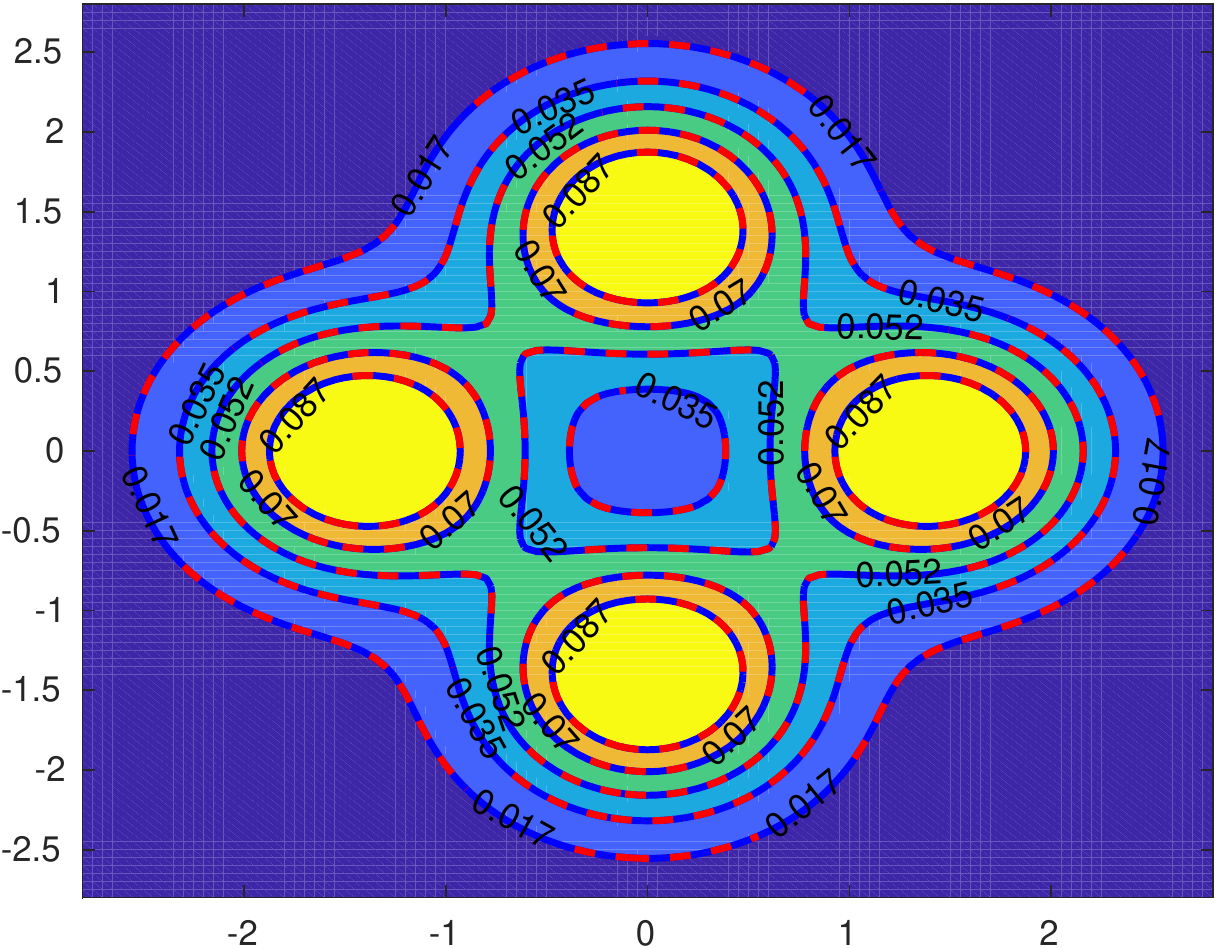}  
  \end{overpic}
}\quad
\subfigure[Initial MDF $\mathcal{I}_2 f^0$\label{fig:ex2_init_2d}]{%
  \begin{overpic}
    [width=0.32\textwidth, clip]{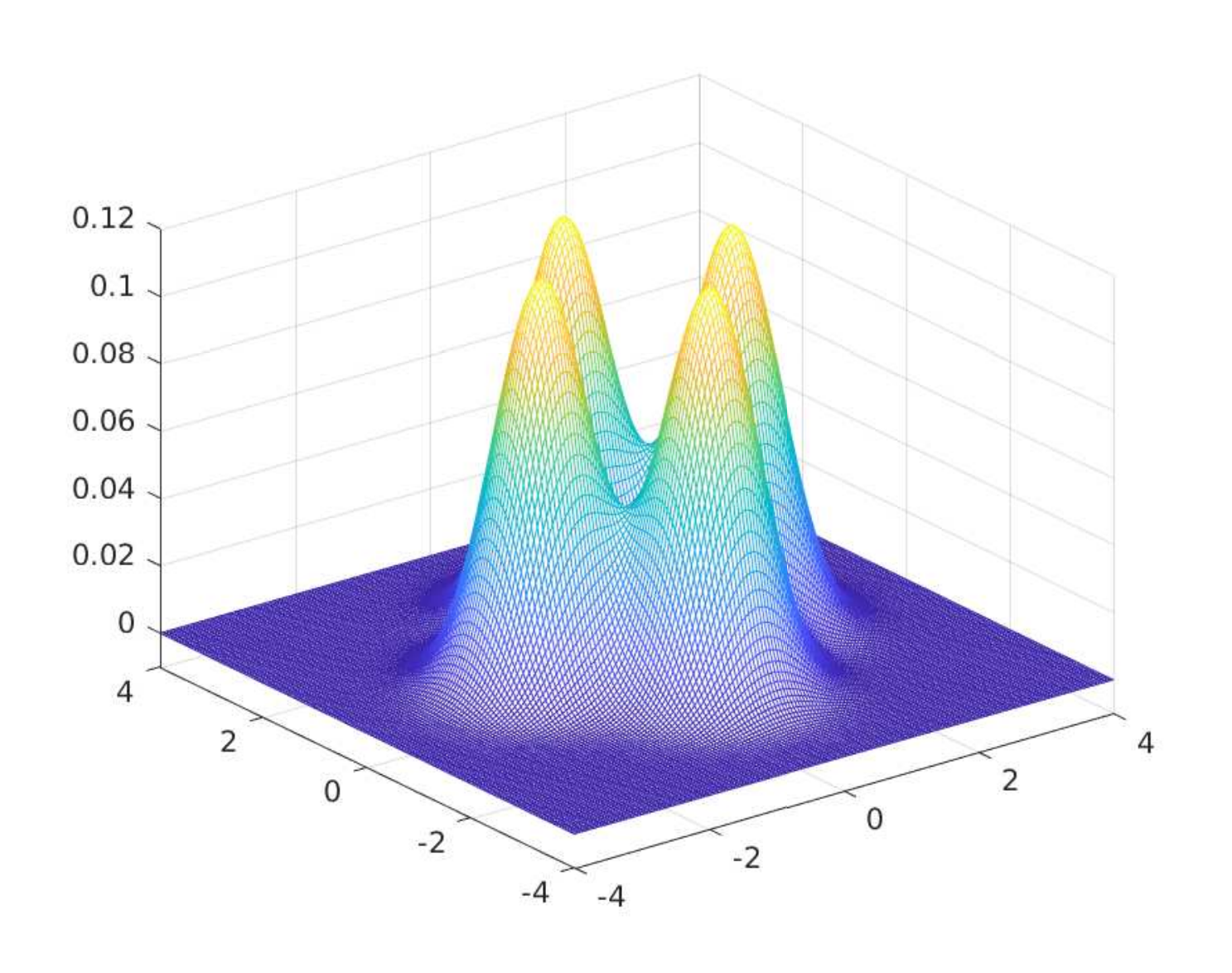}  
  \end{overpic}
}
\caption{Initial marginal distribution functions.  In (a), the red
  line corresponds to the exact solution, while black dashed line
  corresponds to $M = 40$ respectively. In (b), the blue solid lines
  correspond to the exact solution, and the red dashed lines
  correspond to the numerical approximation $M = 40$.  Figure (c)
  shows only the numerical approximation $M=40$.}
\label{fig:ex2_init}
\end{figure}

For this example, we set the numerical result with $M_0=20$ as the
reference solution.  The numerical results for $M_0=5$, $10$, $15$ are
given respectively in Figure \ref{fig:ex2_2d_M0=5_40},
\ref{fig:ex2_2d_M0=10_40}, and \ref{fig:ex2_2d_M0=15_40}. For each
$M_0$, the marginal distribution functions $\mathcal{I}_2F(t)$ at
$t=0.1$, $0.2$ and $0.3$ are shown.

Due to the high nonequilibrium of this example, when $M_0 = 5$ and
$10$, the ``size'' of the quadratic part in the collision term is too
small to describe the evolution of the distribution function, while
the numerical results for $M_0 = 15$ and $M_0 = 20$ agree well with
each other (except the central area for $t=0.1$, where the
distribution function is very flat). This indicates the observation of
numerical convergence, meaning that $M_0=15$ is sufficient to describe
the evolution of the distribution function. This example is a harder
version of the bi-Gaussian initial data used in \cite{QuadraticCol}
for Hermite basis functions, and therefore requires more degrees of
freedom to give satisfactory numerical results. However, as will be
shown later, by using Burnett basis functions, the computational cost
for $M_0 = 20$ is even smaller than the computational cost for
$M_0 = 15$ using Hermite basis functions, even if the results are
essentially identical.

\begin{figure}[!ht]
\centering
\subfigure[$t=0.1$]{%
  \includegraphics[width=.3\textwidth]{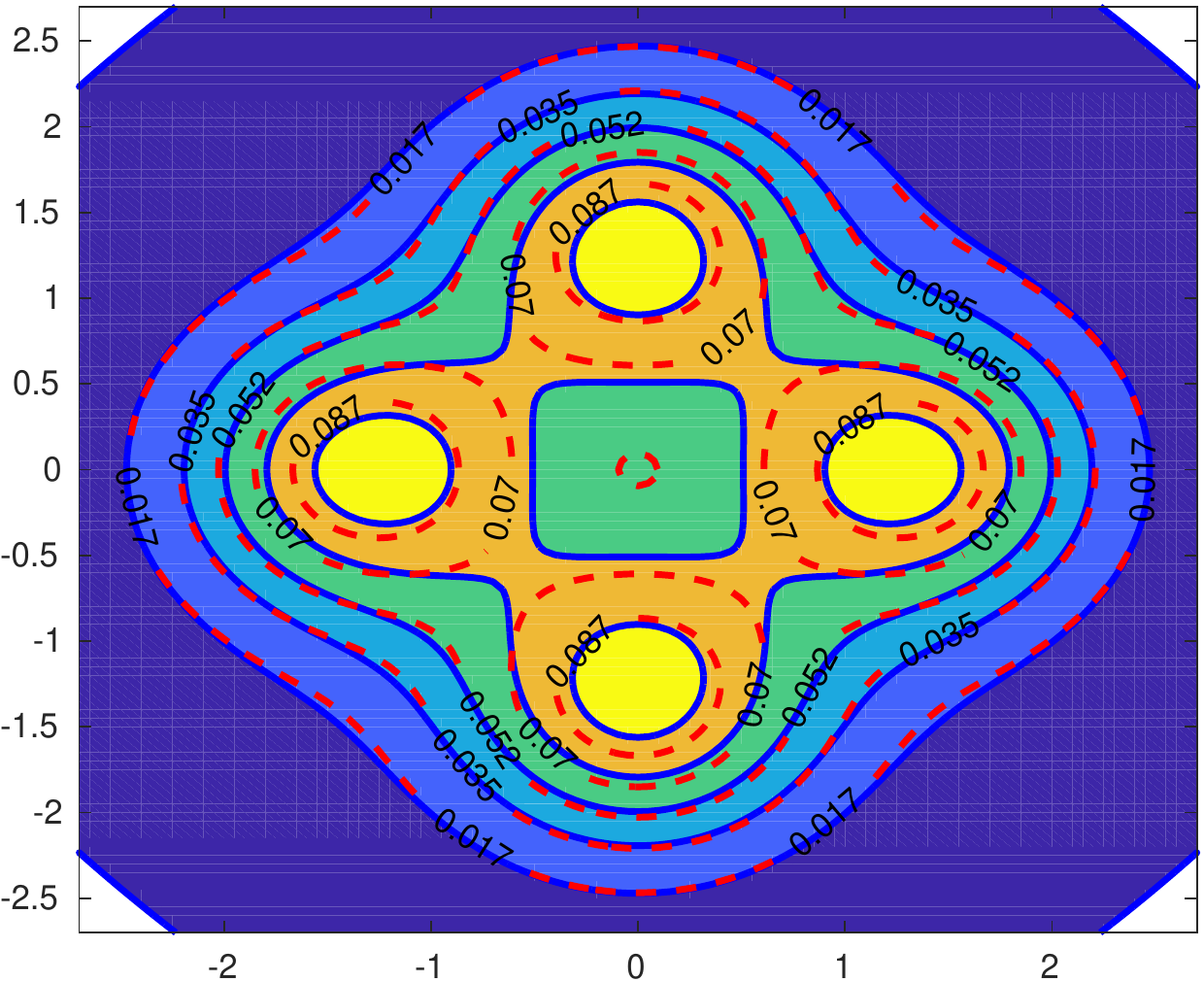}
} \hfill
\subfigure[$t=0.2$]{%
  \includegraphics[width=.3\textwidth]{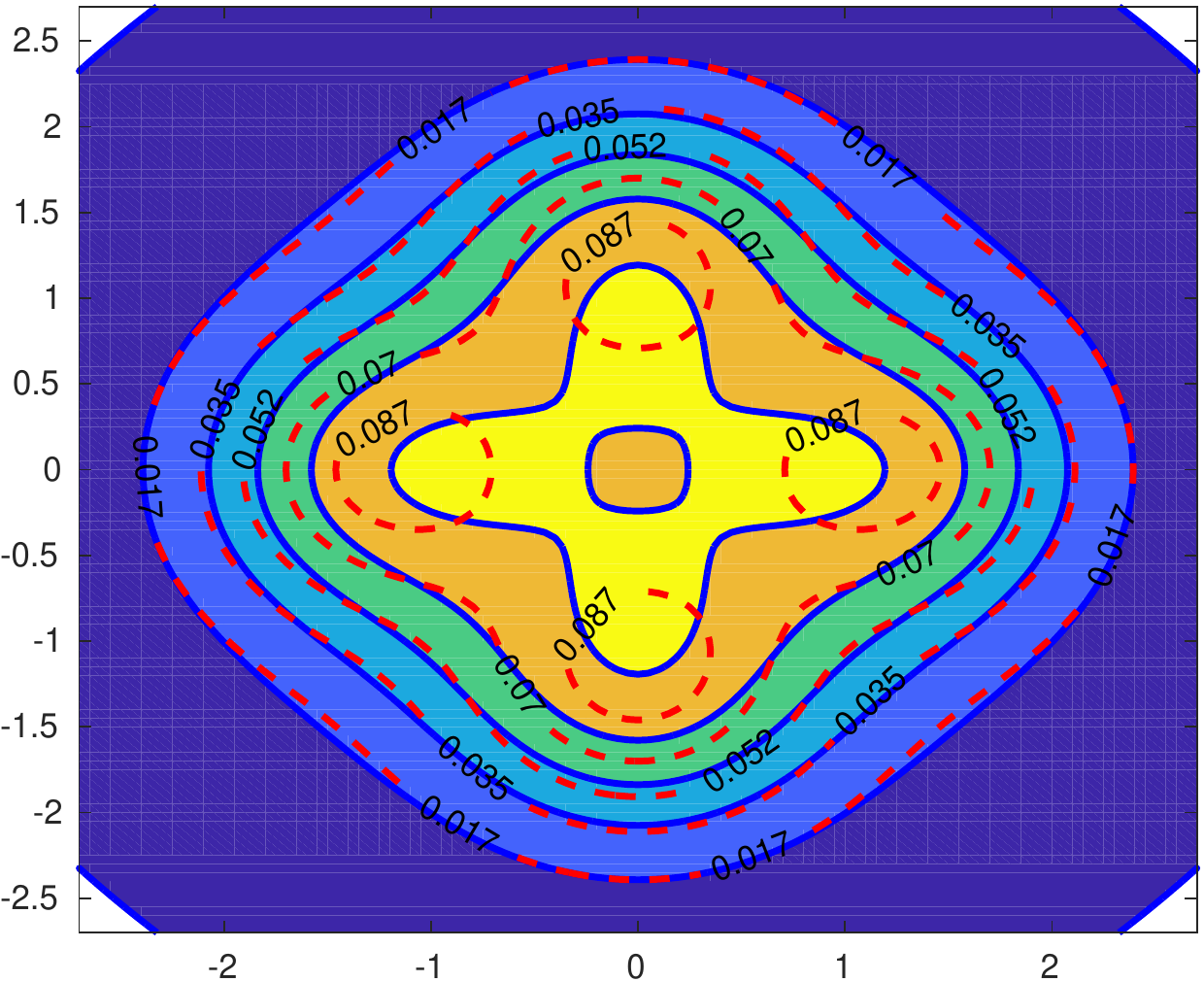}
} \hfill
\subfigure[$t=0.3$]{%
  \includegraphics[width=.3\textwidth]{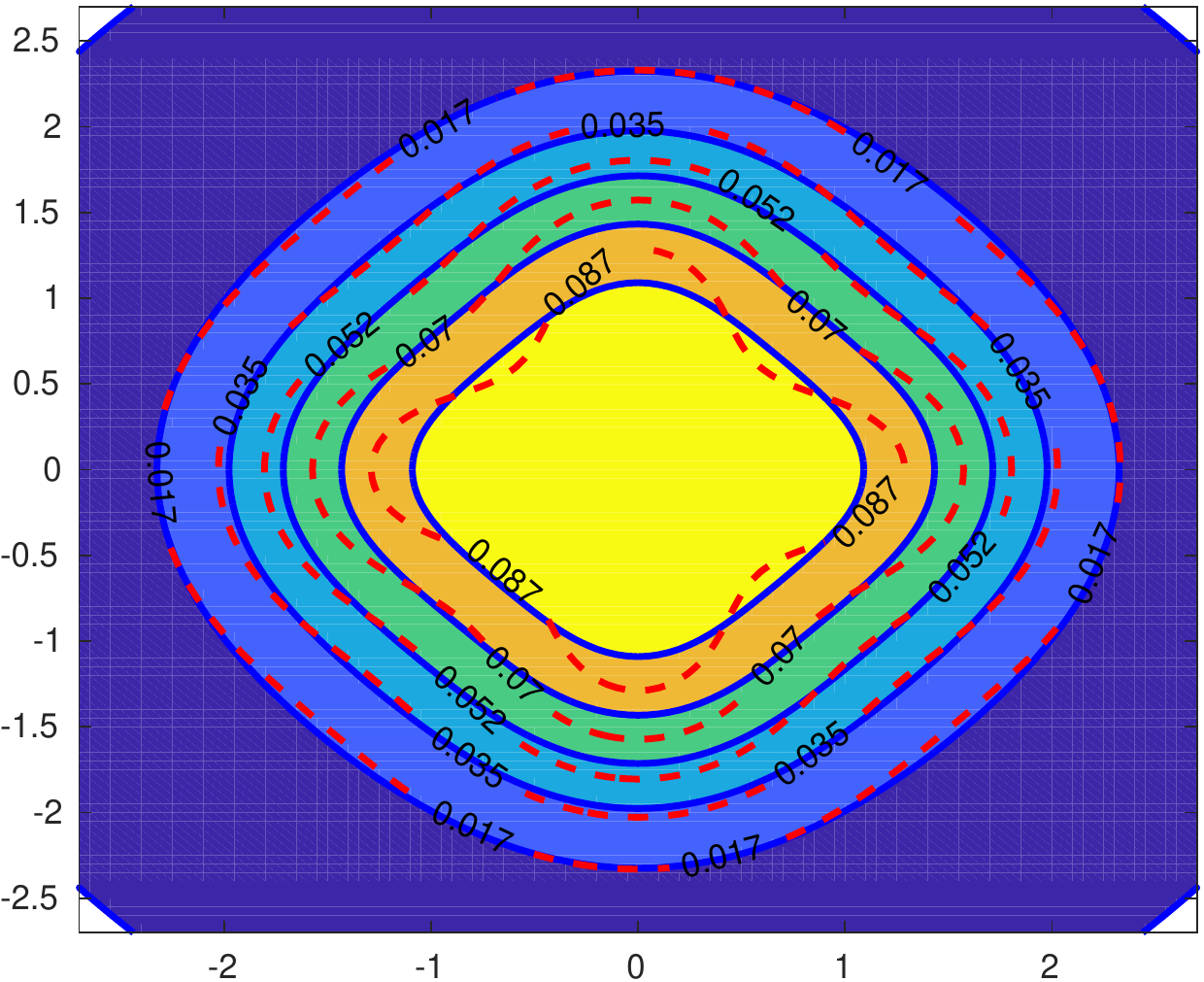}
}
\caption{Comparison of numerical results using $M_0 = 5$ and
  $M_0 = 20$. The blue contours and the red dashed contours are
  respectively the results for $M_0 = 5$ and $M_0 = 20$.}
\label{fig:ex2_2d_M0=5_40}
\end{figure}

\begin{figure}[!ht]
\centering
\subfigure[$t=0.1$]{%
  \begin{overpic}
  [width=.3\textwidth, clip]{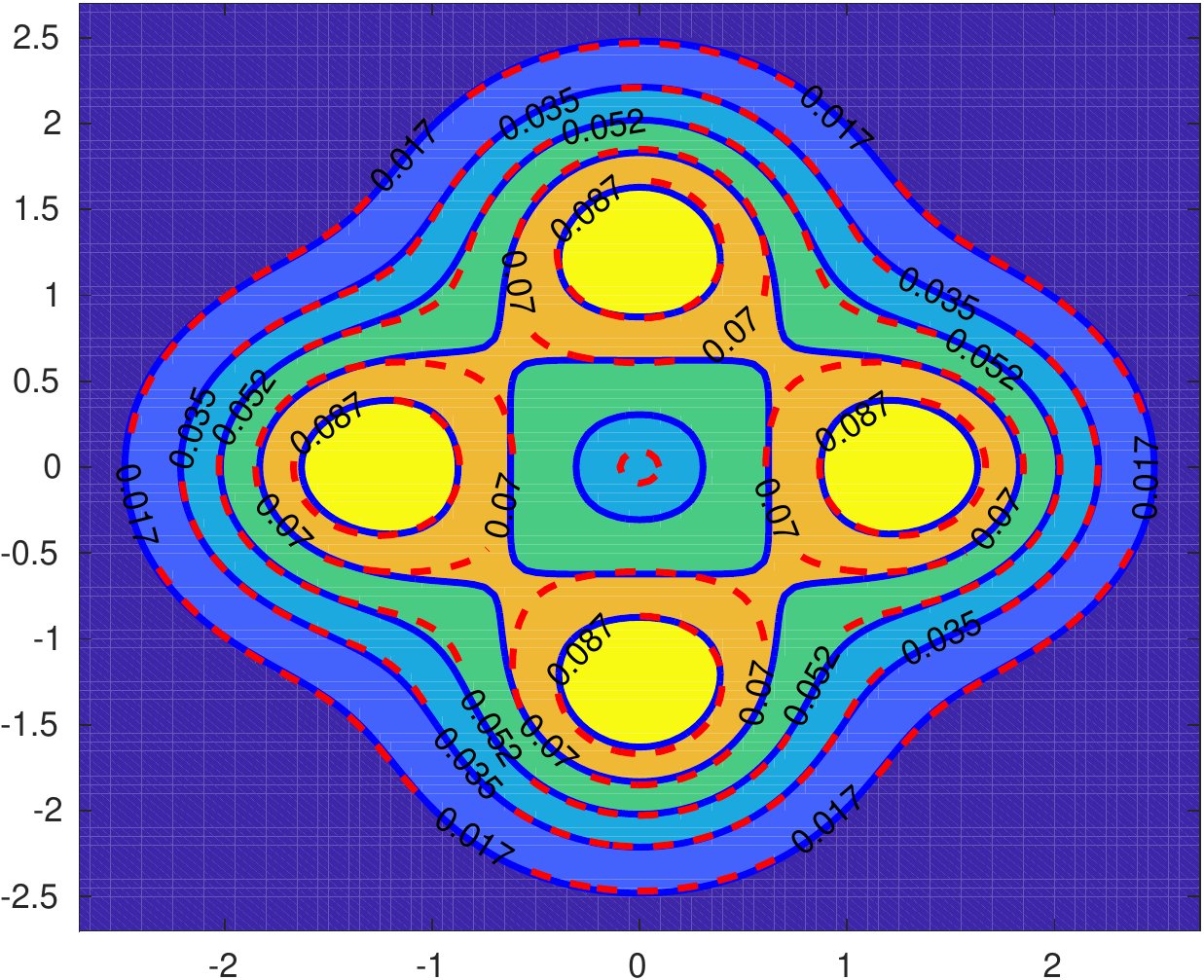}  
  \end{overpic}
  } \quad
\subfigure[$t=0.2$]{%
  \begin{overpic}
    [width=.3\textwidth, clip]{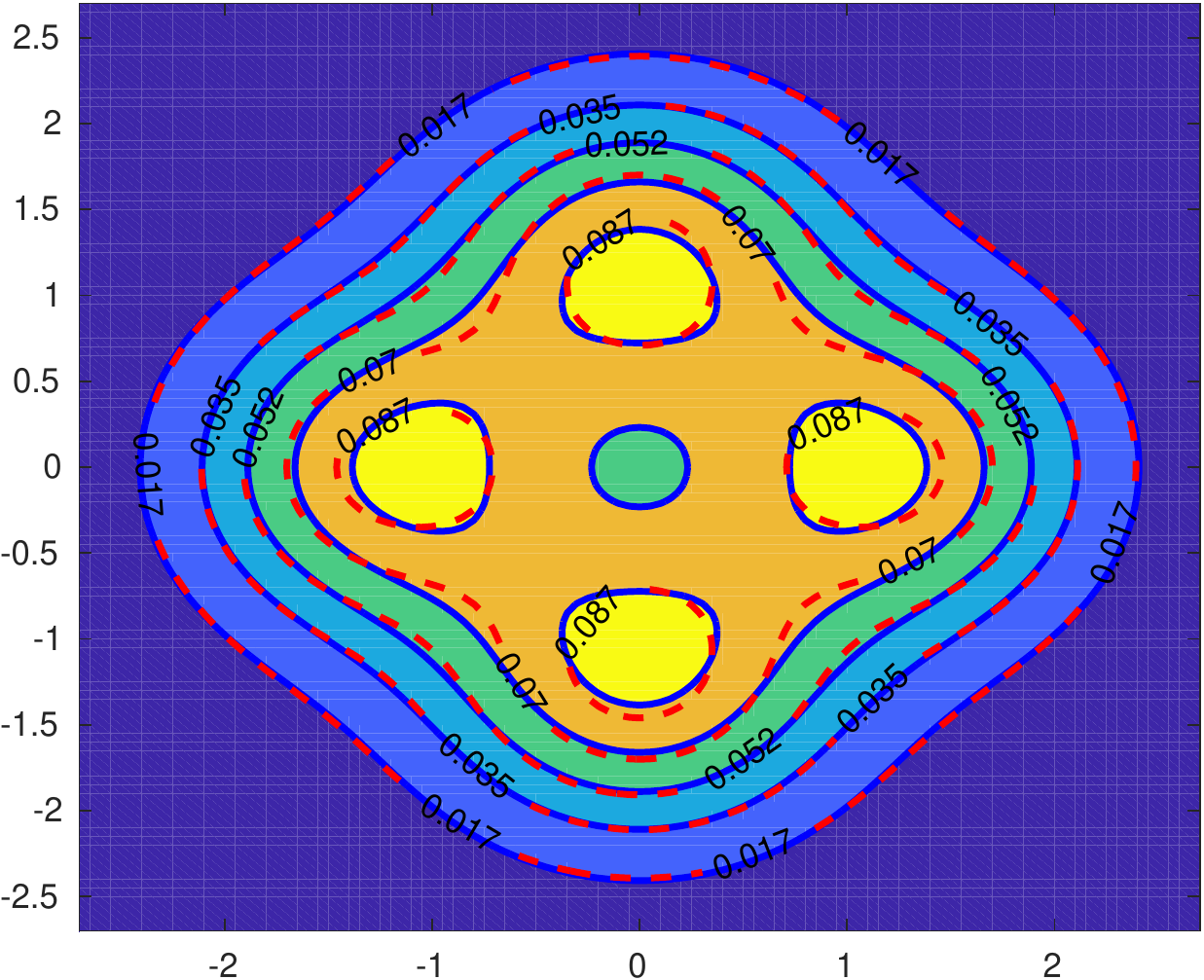}    
  \end{overpic}
} \quad
\subfigure[$t=0.3$]{%
  \begin{overpic}
    [width= .3\textwidth, clip]{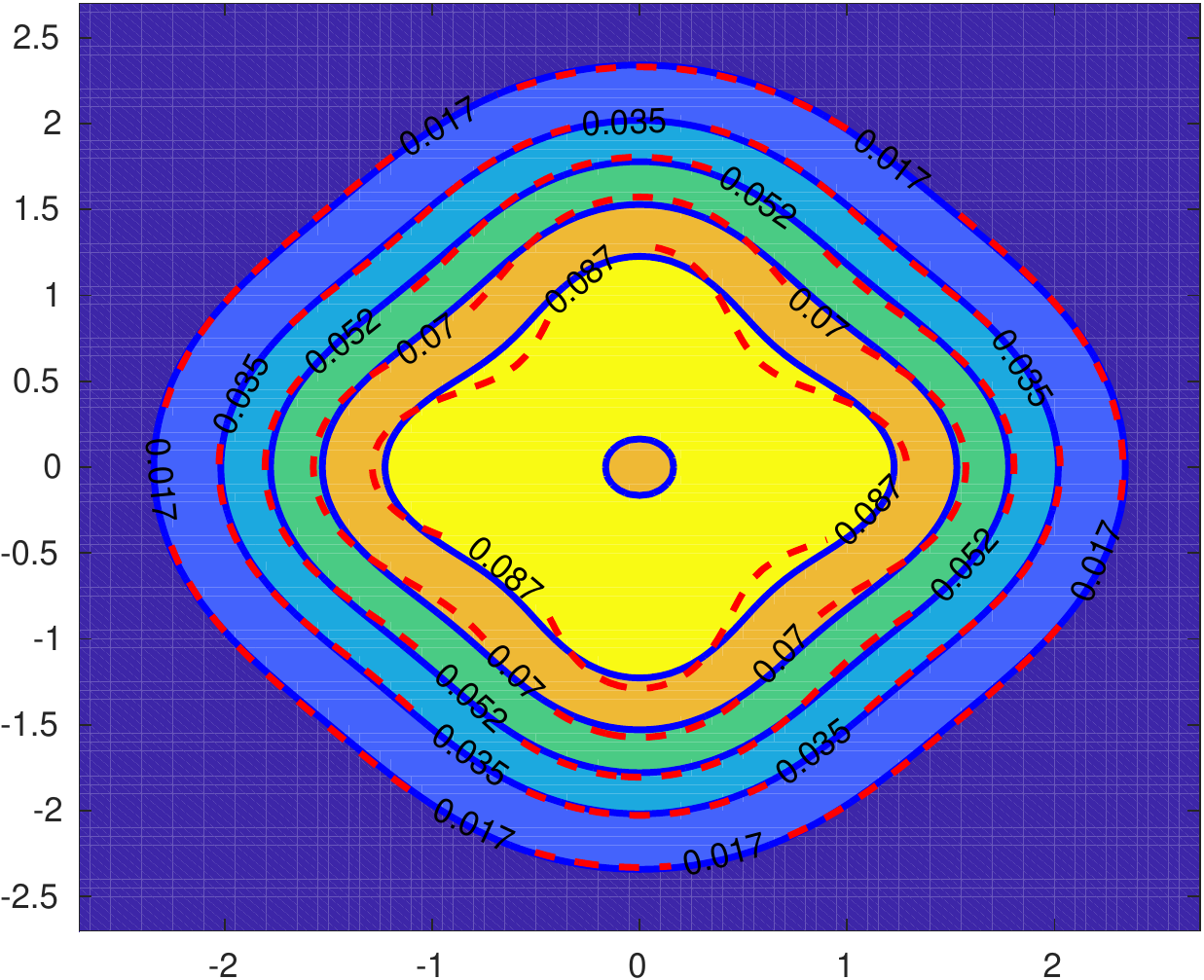}  
  \end{overpic}
}

\caption{Comparison of numerical results using $M_0 = 10$ and
  $M_0 = 20$. The blue contours and the red dashed contours are
  respectively the results for $M_0 = 10$ and $M_0 = 20$.}
\label{fig:ex2_2d_M0=10_40}
\end{figure}

\begin{figure}[!ht]
\centering
\subfigure[$t=0.1$]{%
  \begin{overpic}
    [width=.3\textwidth, clip]{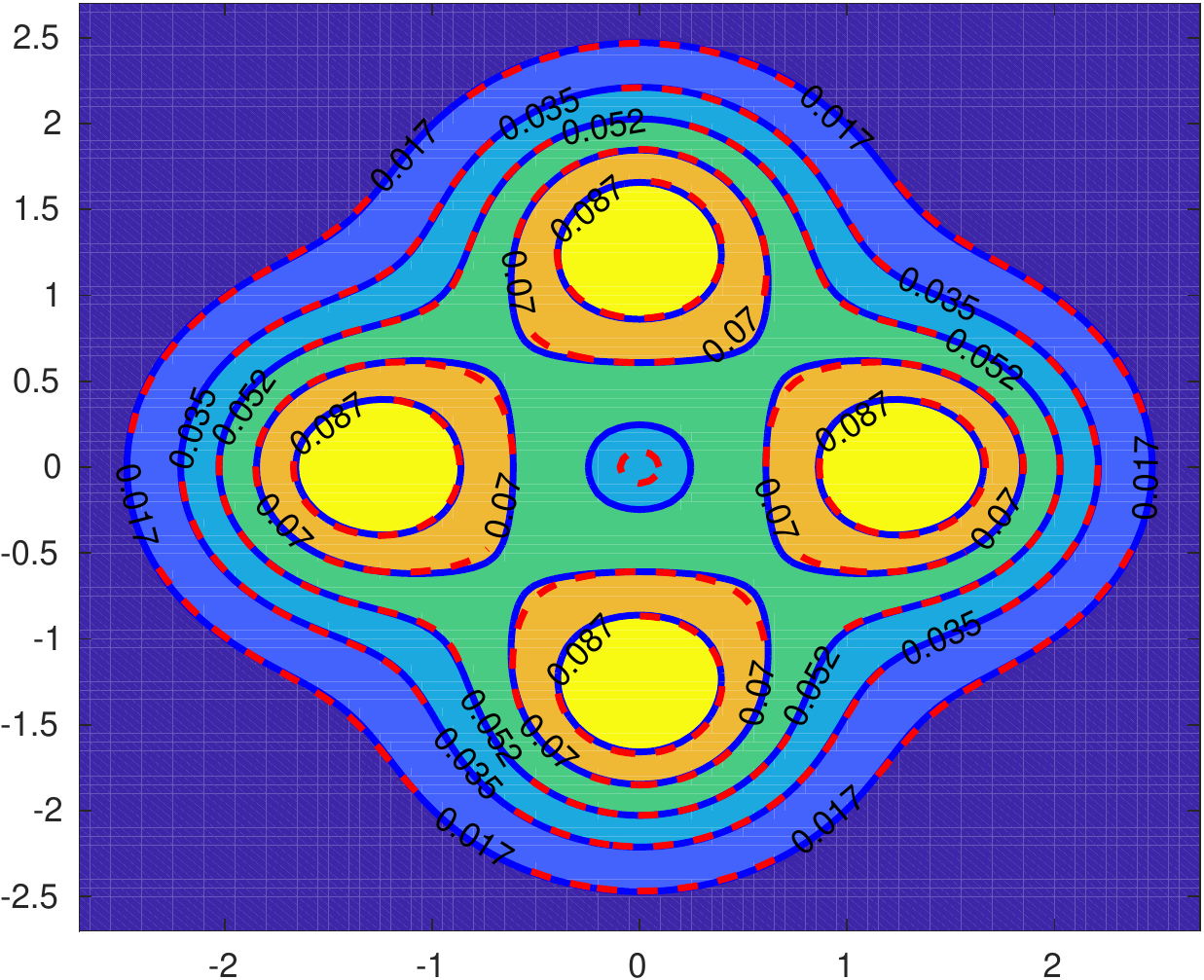}    
  \end{overpic}
} \hfill
\subfigure[$t=0.2$]{%
  \begin{overpic}
    [width=.3\textwidth, clip]{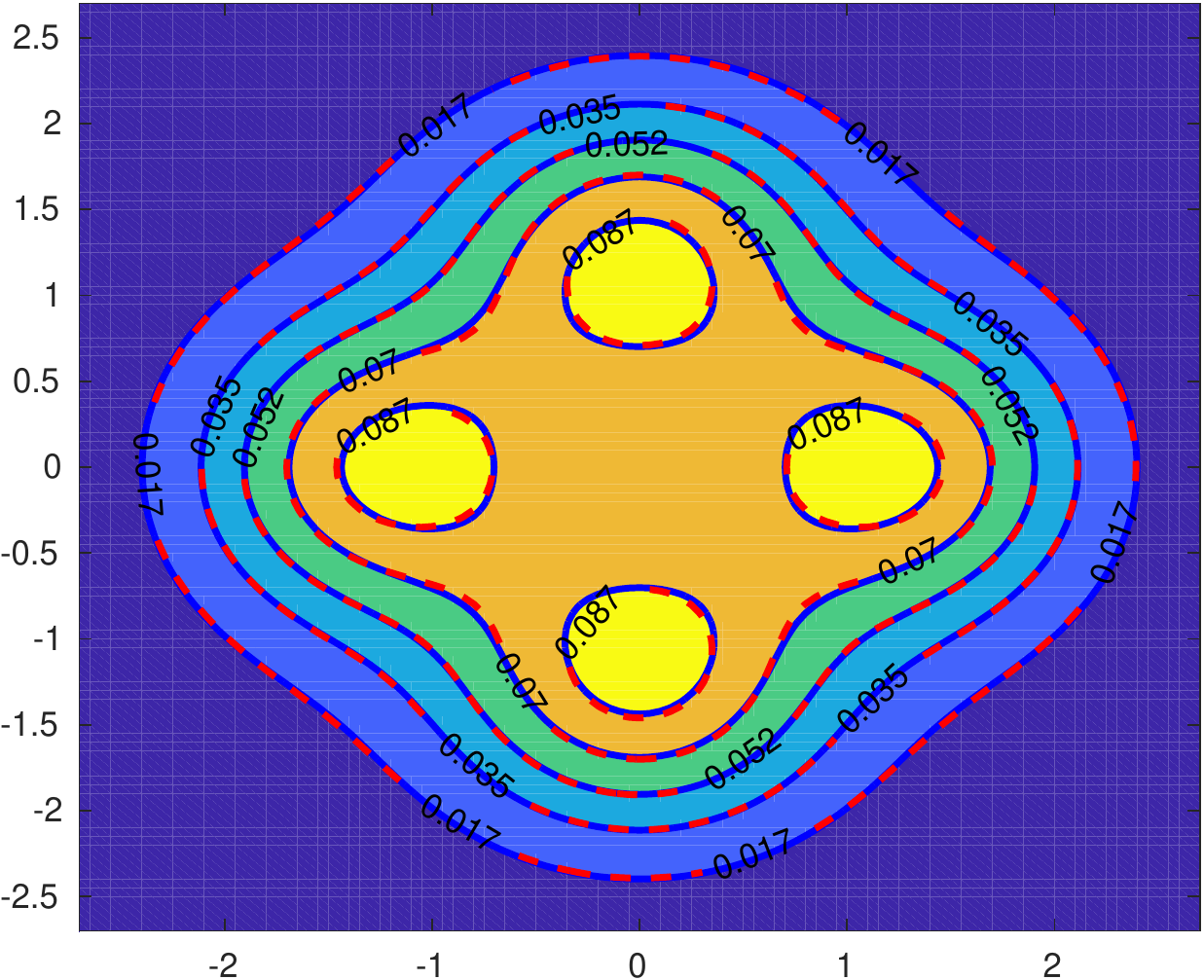}    
  \end{overpic}
} \hfill
\subfigure[$t=0.3$]{%
  \begin{overpic}
    [width=.3\textwidth, clip]{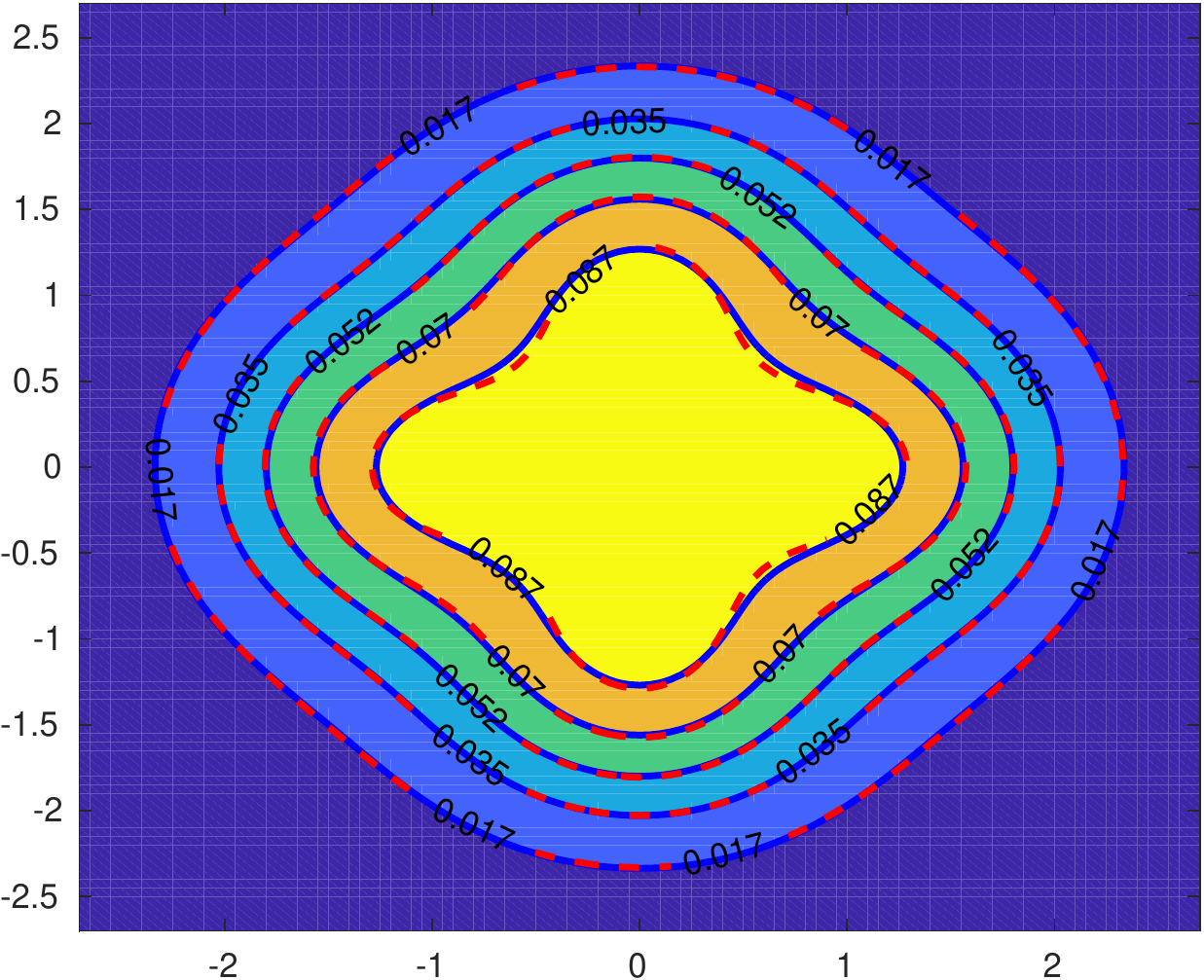}    
  \end{overpic}
  
}
\caption{Comparison of numerical results using $M_0 = 15$ and
  $M_0 = 20$. The blue contours and the red dashed contours are
  respectively the results for $M_0 = 15$ and $M_0 = 20$.}
\label{fig:ex2_2d_M0=15_40}
\end{figure}

Now we consider the evolution of the moments. In this example, the
stress tensor and heat flux satisfy
$\sigma_{11} = \sigma_{22} = -0.5\sigma_{33}$ and
$q_i = 0, i = 1,2,3$. Therefore, we focus only on the evolution of
$\sigma_{11}$, which is plotted in Figure \ref{fig:ex2_sigma11}. It
can be seen that the four tests give almost identical results. Even
for $M_0 = 5$ and $10$, while the distribution functions are not
approximated very well, the evolution of the stress tensor is very
accurate.
\begin{figure}[!ht]
\centering
\includegraphics[width=.4\textwidth]{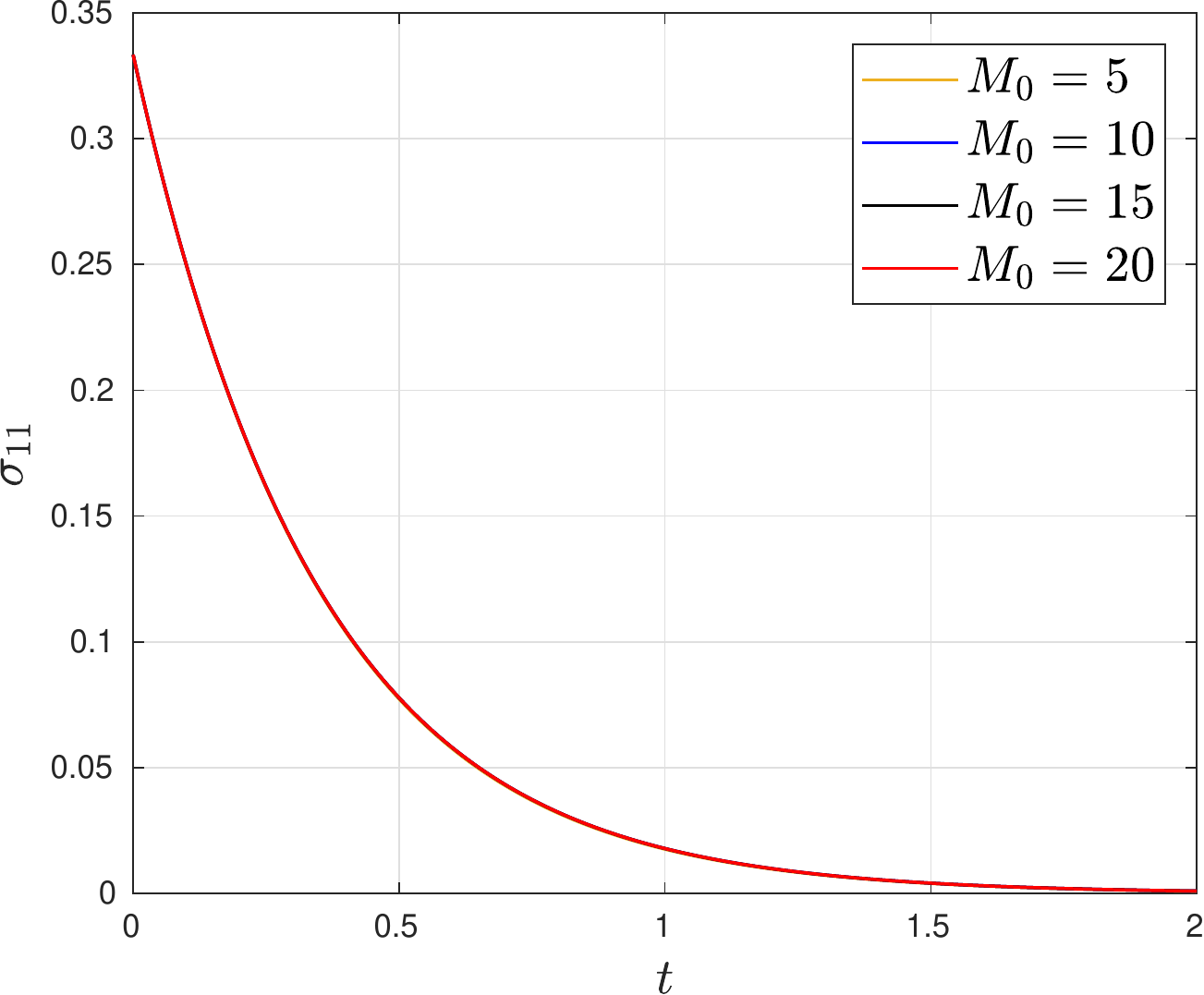}
\caption{Evolution of $\sigma_{11}(t)$. Four lines are on top of each other.}
\label{fig:ex2_sigma11}
\end{figure}

\subsection{Discontinuous initial data}
In this example, we reconsider the problem with the same discontinuous initial
condition as in \cite{QuadraticCol}:
\begin{displaymath}
f^0(\bv) = \left\{ \begin{array}{ll}
  \dfrac{\sqrt[4]{2} (2-\sqrt{2})}{\pi^{3/2}}
    \exp \left( -\dfrac{|\bv|^2}{\sqrt{2}} \right), & \text{if } v_1 > 0, \\[10pt]
  \dfrac{\sqrt[4]{2} (2-\sqrt{2})}{4\pi^{3/2}}
    \exp \left( -\dfrac{|\bv|^2}{2\sqrt{2}} \right), & \text{if } v_1 < 0.
\end{array} \right.
\end{displaymath}
In \cite{QuadraticCol}, the authors used Hermite spectral method to do the
computation up to $M_0 = 15$, which still shows significant difference in the
numerical results compared with $M_0 = 10$. In this paper, we are going to
confirm the reliability of the results obtained with $M_0 = 15$. As in
\cite{QuadraticCol}, we only focus on the evolution of the moments. The
numerical results for the hard potential $\eta = 10$ and soft potential $\eta =
3.1$ with different choices of $M_0$ and $M$ are shown in Figure
\ref{fig:ex3_moments}.  For $\eta = 3.1$, the horizontal axes are the scaled
time $t_s = t / \tau$ with $\tau \approx 2.03942$ as in \cite{QuadraticCol}, so
that the two models have the same mean relaxation time near equilibrium.

Since for the homogeneous Boltzmann equation, the behaviors of stress tensor
and heat flux are the same for any $M \geqslant M_0 \geqslant 3$, we let $M =
M_0 = 5$, $10$, $15$ and $20$, and the results are plotted in Figure
\ref{fig:ex3_moments}. The numerical results for $M_0 = 5$, $10$ and $15$ are
exactly the same as \cite{QuadraticCol}. However, due to the significant
enhancement of computational efficiency, we can get the results for $M_0 = 20$,
and the results are almost the same as those for $M_0 = 15$, which indicates
that they should be very close to the exact solution. The whole pictures of
$\sigma_{11}$ and $\sigma_{22}$ show much clearer converging trend of the
numerical solutions with increasing $M_0$, compared with the numerical results
in \cite{QuadraticCol} where the profiles for $M_0 = 20$ was not present. For
the heat flux $q_1$, not surprisingly, the four results are hardly
distinguishable.

\begin{figure}[!ht]
\centering
\subfigure[$\sigma_{11}(t)$ ($\eta = 10$)]{%
  \includegraphics[width=.4\textwidth]{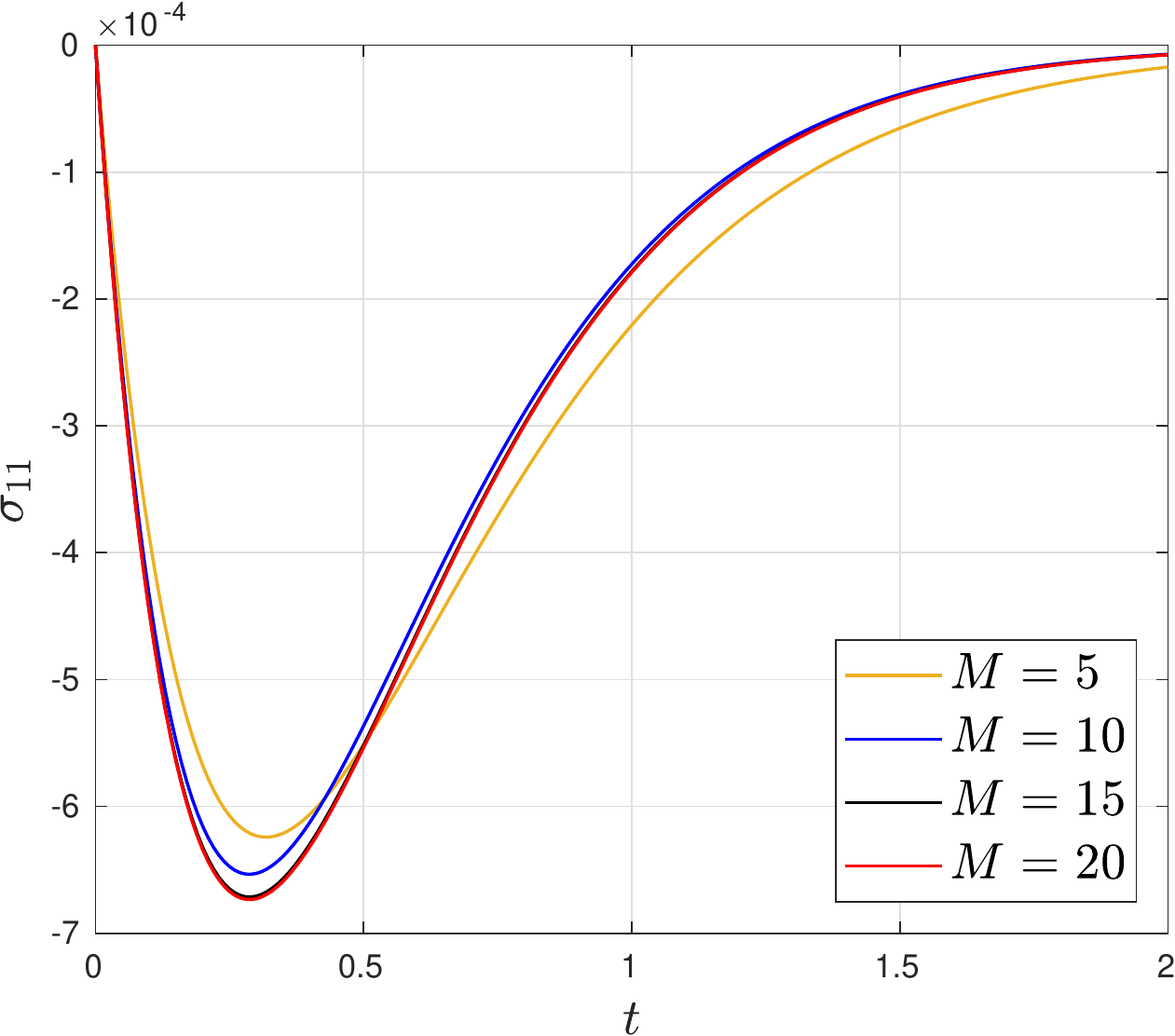}
} \quad 
\subfigure[$\sigma_{11}(t)$ ($\eta = 3.1$)]{%
  \includegraphics[width=.4\textwidth]{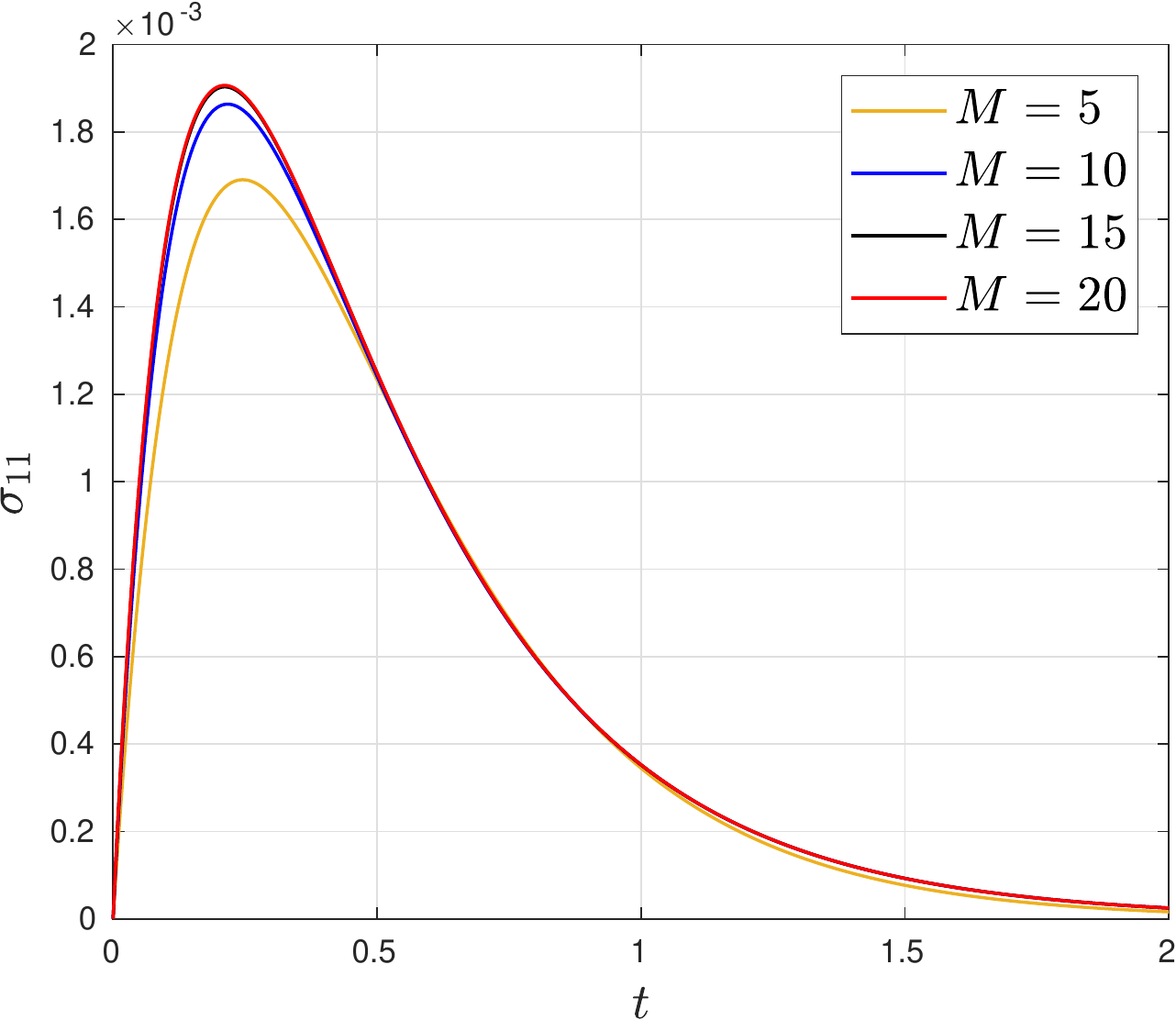}
} \\
\subfigure[$\sigma_{22}(t)$ ($\eta = 10$)]{%
  \includegraphics[width=.4\textwidth]{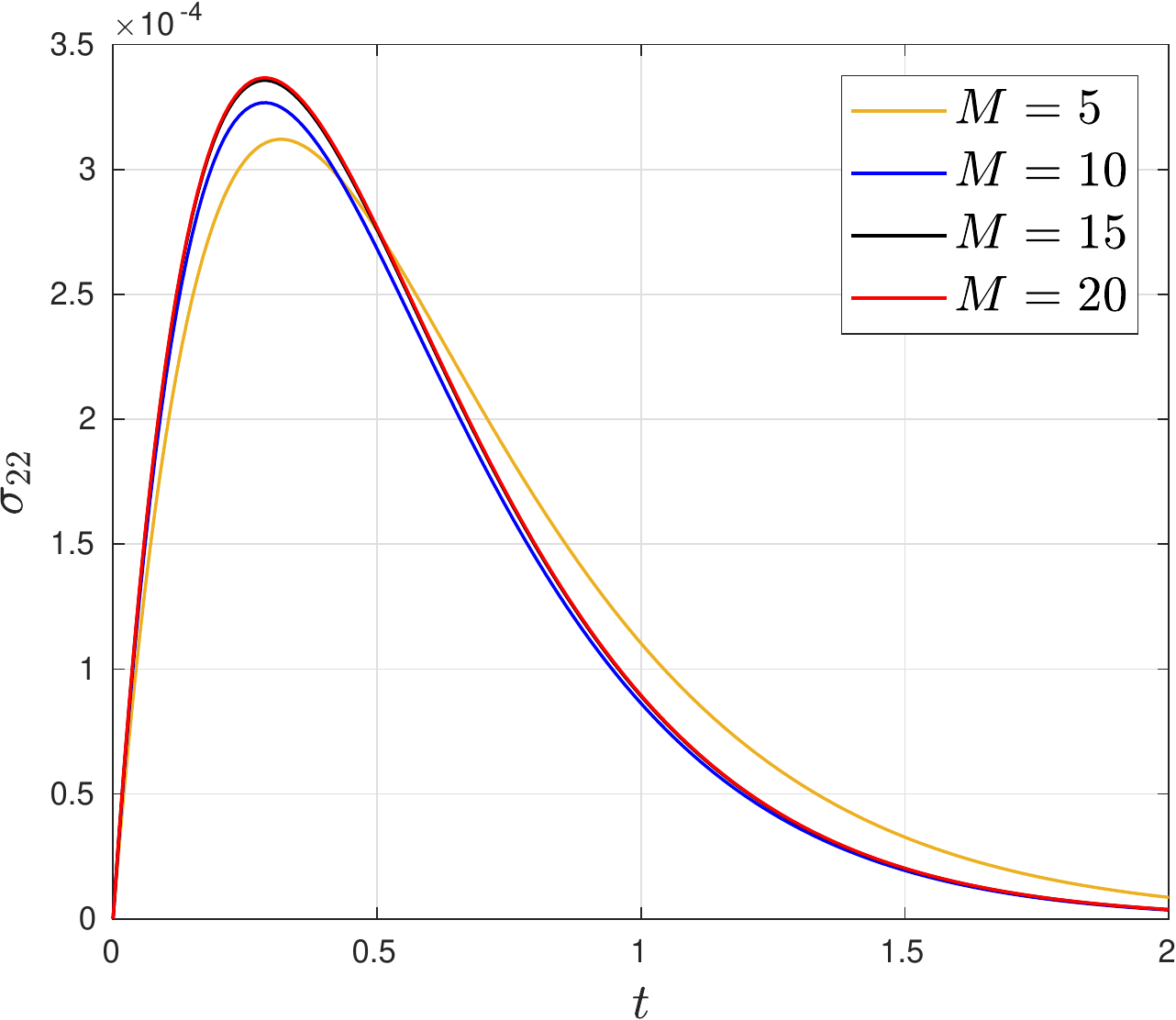}
} \quad 
\subfigure[$\sigma_{22}(t)$ ($\eta = 3.1$)]{%
  \includegraphics[width=.4\textwidth]{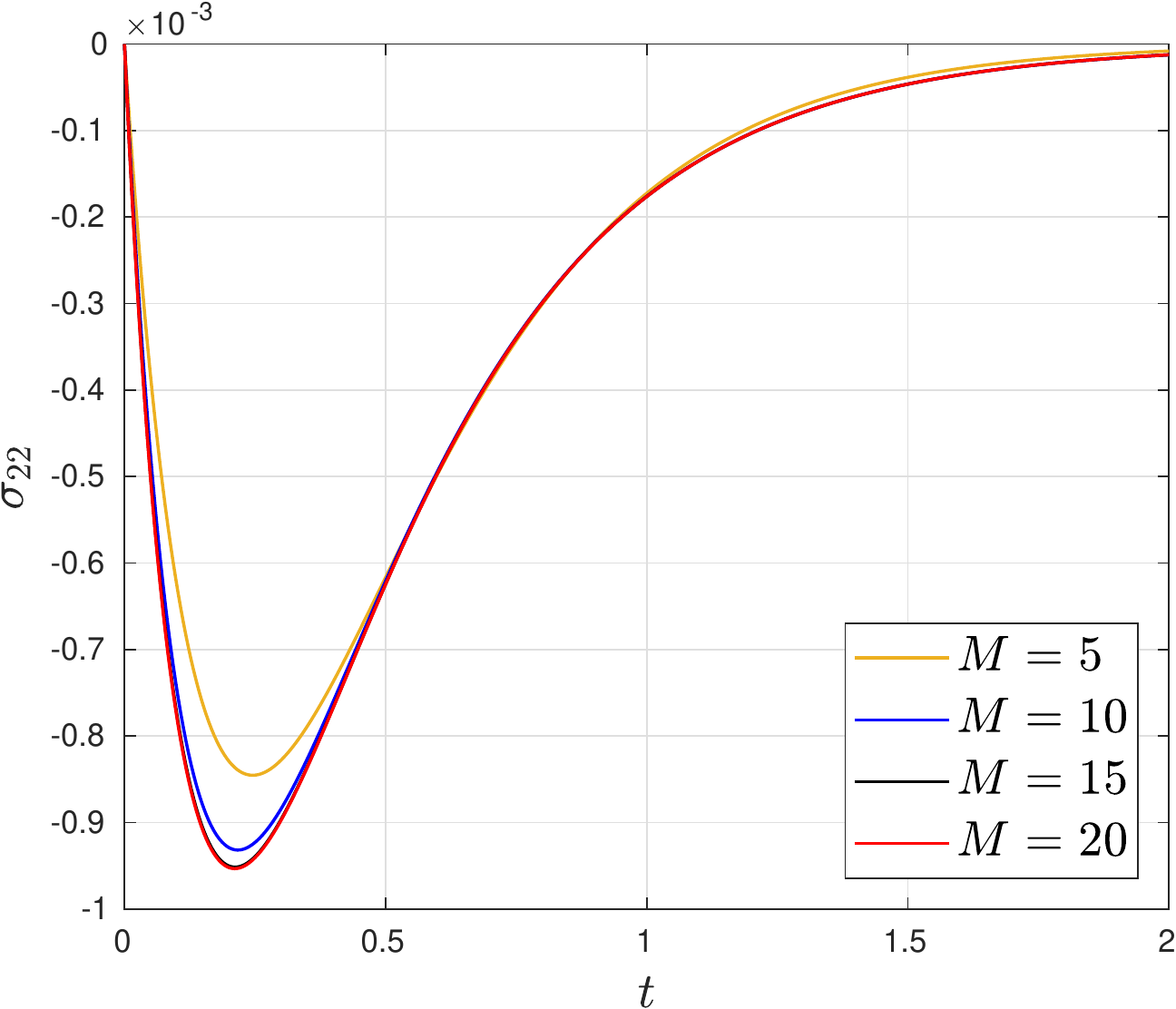}
} \\
\subfigure[$q_1(t)$ ($\eta = 10$)]{%
  \includegraphics[width=.4\textwidth]{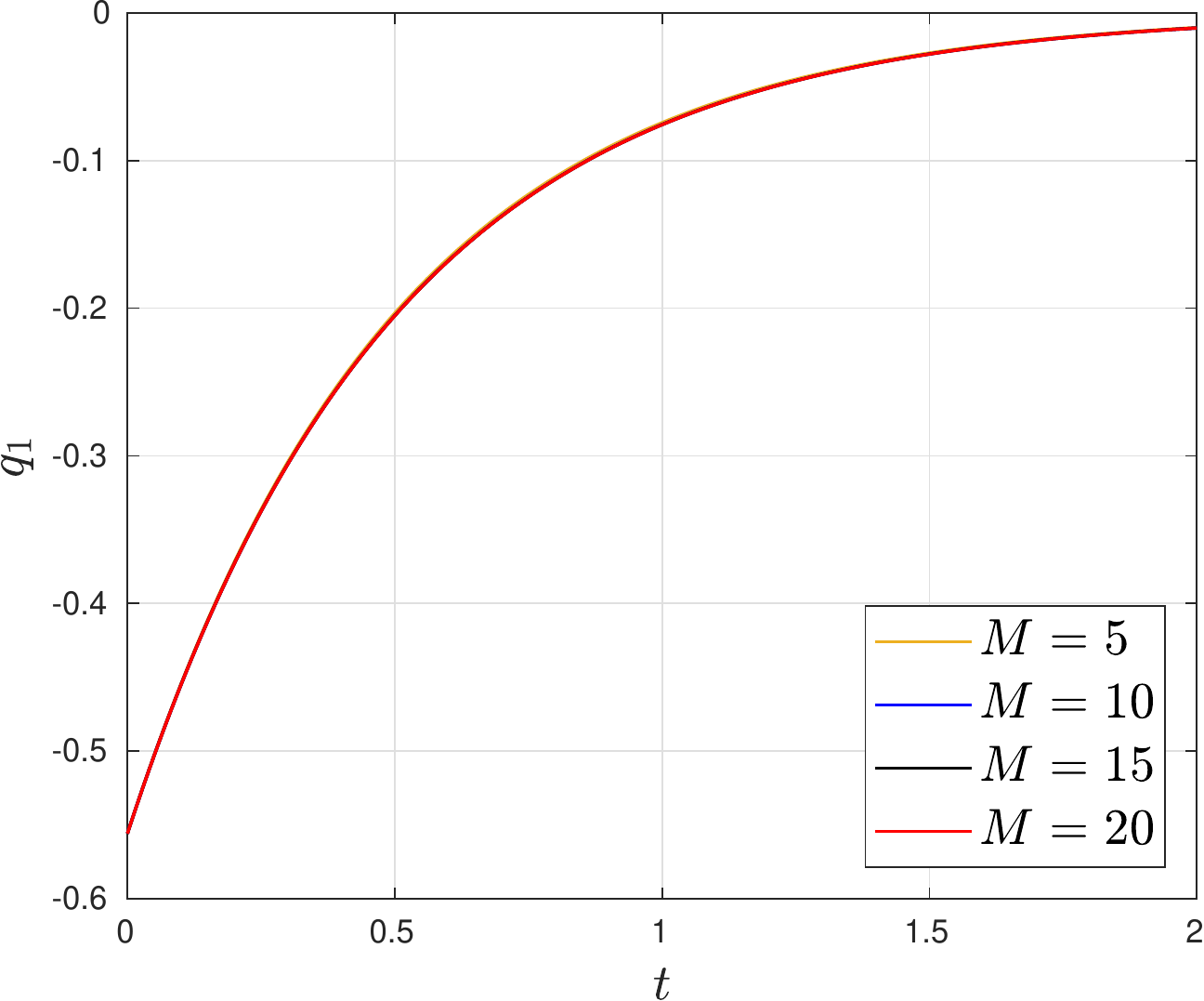}
}\quad 
\subfigure[$q_1(t)$ ($\eta = 3.1$)]{%
  \includegraphics[width=.4\textwidth]{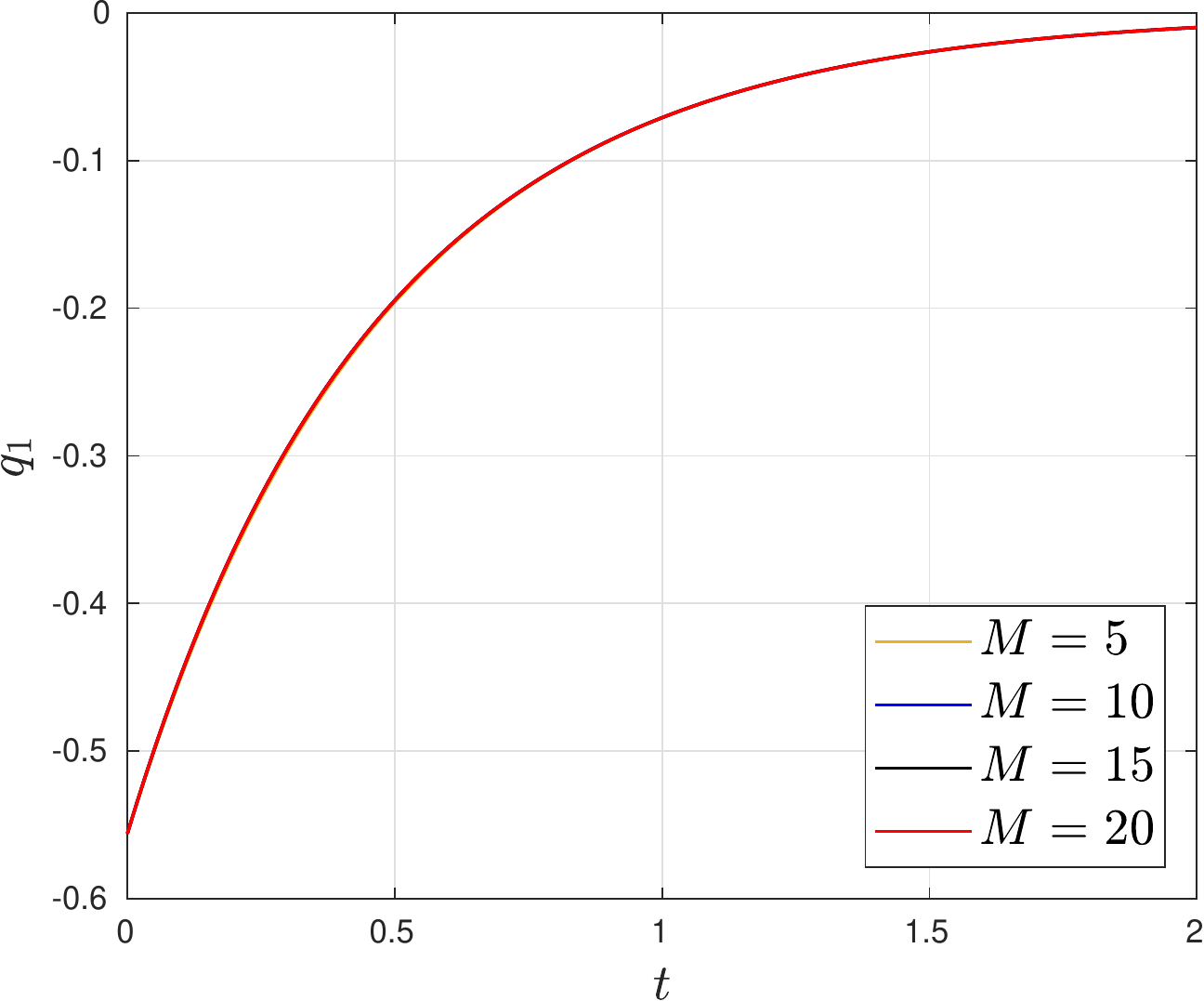}
}
\caption{Evolution of the stress and the heat flux. The left column
  shows the results for $\eta = 10$, and the right column shows the
  results for $\eta = 3.1$. In the right column, the horizontal axes
  are the scaled time.}
\label{fig:ex3_moments}
\end{figure}

Finally, the computational time for one evaluation of the quadratic collision
term under the framework of Burnett series and Hermite series
\cite{QuadraticCol} is plotted in Figure \ref{fig:ex3_time_comparison}. Here,
the number $M$ is fixed as $M = 20$, and $M_0$ increases from $5$ to $20$. It
is clear that the computational cost is greatly reduced by using Burnett basis
functions, especially when $M$ is large.

\begin{figure}[!ht]
\centering
\includegraphics[width=.4\textwidth]{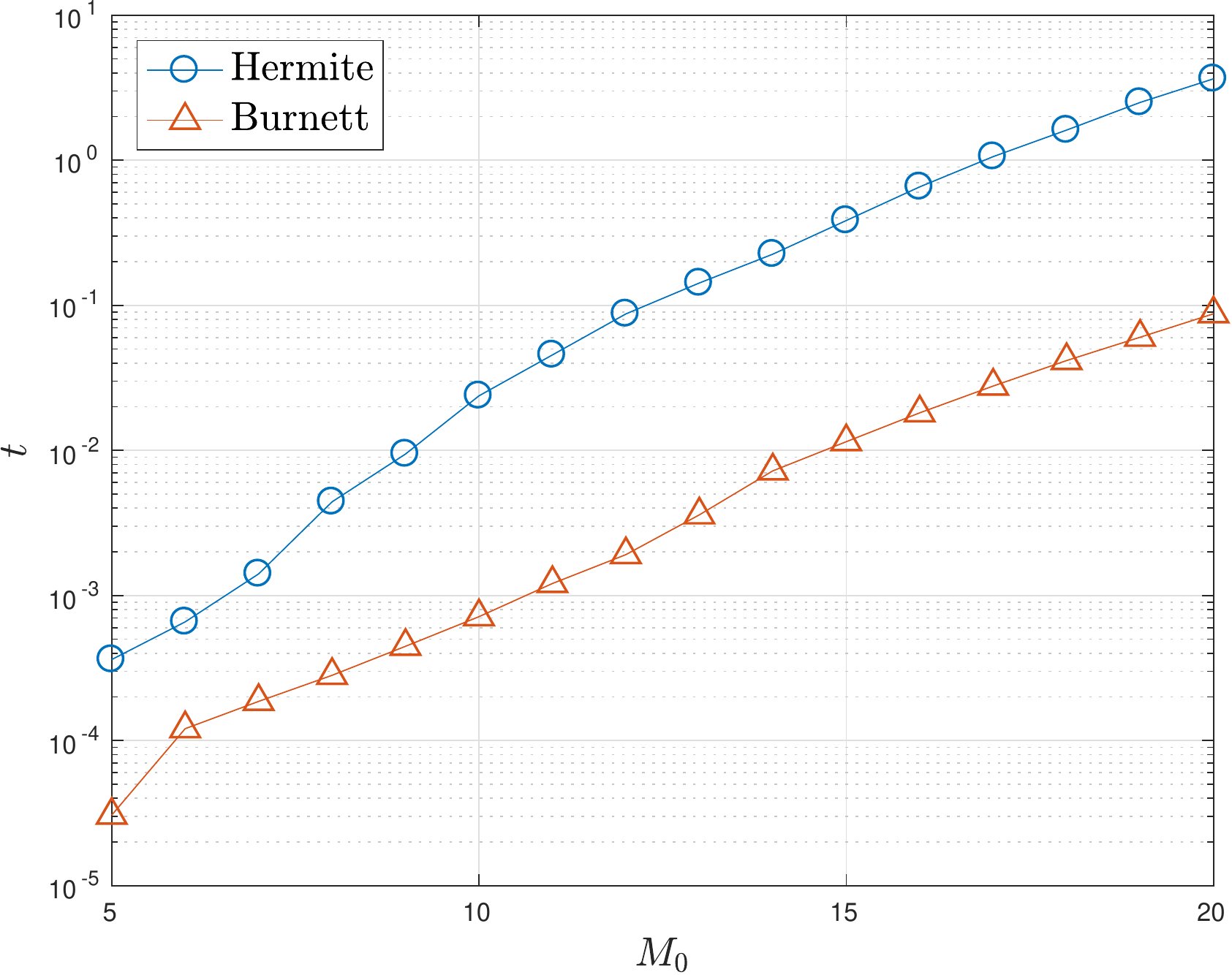}
\caption{Comparison of the computational time for one evaluation of the
  collision operator using the method in this paper and that in
  \cite{QuadraticCol}. $M$ is fixed as $M=20$ and $M_0$ changes from $5$ to
  $20$. The $x$-axis is $M_0$ and the $y$-axis is the logarithm of the
  computational time.}
\label{fig:ex3_time_comparison}
\end{figure}

\section{Proof of theorems}\label{sec:proof}
In this section, we prove the three theorems in Section \ref{sec:general}.
Firstly, we introduce two lemmas as following:
\begin{lemma}\label{lem:rotation}
    Let $\mathbf{R}$ be an $3\times 3$ orthogonal matrix. Define the
    rotation operator $\mathcal{R}$ by
    \begin{displaymath}
        (\mathcal{R}f)(\bv) = f(\mathbf{R} \bv), \qquad
        \forall f: \bbR^3 \rightarrow \mathbb{C}.
    \end{displaymath}
    Then when $\mathcal{Q}[f,g]$ is well-defined for some functions $f$
    and $g$, we have $\mathcal{Q}[f,g](\mathbf{R}\bv) =
    \mathcal{Q}[\mathcal{R}f, \mathcal{R}g](\bv)$.
\end{lemma}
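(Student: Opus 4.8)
The plan is to verify the rotational covariance of the collision operator directly from its definition in \eqref{eq:quad_col}. The key observation is that the collision operator $\mQ[f,g]$ is built from an integral over $\bv_1$, the unit vector $\bn$ (constrained to $\bn \perp \bg$), and the scattering angle $\chi$, with the integrand depending on the pre- and post-collision velocities. The strategy is to write out $\mQ[\mathcal{R}f, \mathcal{R}g](\bv)$ explicitly, substitute the definition $(\mathcal{R}f)(\bw) = f(\mathbf{R}\bw)$, and then perform a change of variables in the integration to show that the result equals $\mQ[f,g](\mathbf{R}\bv)$.

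First I would examine how the collision geometry transforms under $\mathbf{R}$. Setting $\bv' $ and $\bv_1'$ as in the definition, with $\bg = \bv - \bv_1$, I would introduce the substitution $\bv_1 = \mathbf{R}^{-1} \bw_1$ and $\bn = \mathbf{R}^{-1} \bm$, so that the integration variables are replaced by $\bw_1$ and $\bm$. The crucial structural facts are: (i) since $\mathbf{R}$ is orthogonal, $|\mathbf{R}\bw| = |\bw|$ for any $\bw$, so $|\bg|$ is preserved and the Jacobian of each substitution is $1$; (ii) the orthogonality constraint $\bn \perp \bg$ is equivalent to $\bm \perp (\mathbf{R}\bg)$, because $\mathbf{R}$ preserves inner products; and (iii) the post-collision formulas are linear in $\bv, \bv_1, \bn$ with coefficients depending only on $|\bg|$ and $\chi$, so applying $\mathbf{R}$ commutes through them. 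Concretely, one checks that when the arguments are $\mathbf{R}\bv$, $\bw_1 = \mathbf{R}\bv_1$, $\bm = \mathbf{R}\bn$, the outgoing velocities become exactly $\mathbf{R}\bv'$ and $\mathbf{R}\bv_1'$.

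Assembling these pieces, I would show that the integrand of $\mQ[\mathcal{R}f, \mathcal{R}g](\bv)$, after the change of variables, matches the integrand of $\mQ[f,g](\mathbf{R}\bv)$ term by term: the kernel $B(|\bg|,\chi)$ is unchanged since $|\bg|$ and $\chi$ are preserved, and the factors $(\mathcal{R}f)(\bv_1')(\mathcal{R}g)(\bv') = f(\mathbf{R}\bv_1')g(\mathbf{R}\bv')$ become precisely the evaluations of $f$ and $g$ at the post-collision velocities corresponding to incoming data $\mathbf{R}\bv$ and $\mathbf{R}\bv_1 = \bw_1$. The measures $\dd\chi \dd\bn \dd\bv_1$ transform to $\dd\chi \dd\bm \dd\bw_1$ with unit Jacobian, completing the identification.

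The main obstacle, I expect, is the bookkeeping in step (iii): verifying that the affine combinations defining $\bv'$ and $\bv_1'$ genuinely commute with $\mathbf{R}$. This requires checking that $\mathbf{R}\bv' = \cos^2(\chi/2)\,\mathbf{R}\bv + \sin^2(\chi/2)\,\mathbf{R}\bv_1 - |\bg|\cos(\chi/2)\sin(\chi/2)\,\mathbf{R}\bn$, and confirming that $|\mathbf{R}\bg| = |\bg|$ keeps the coefficient $|\bg|$ intact, so that the transformed post-collision velocity is exactly the post-collision velocity built from the rotated incoming data $\mathbf{R}\bv, \mathbf{R}\bv_1$ and rotated collision direction $\mathbf{R}\bn$. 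This is routine linearity once the invariance of $|\bg|$ and of the perpendicularity constraint are established, but care is needed to ensure the rotated unit vector $\bm = \mathbf{R}\bn$ remains a valid integration variable ranging over the sphere perpendicular to $\mathbf{R}\bg$.
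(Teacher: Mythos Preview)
Your argument is correct and is the standard direct verification via change of variables in the collision integral. Note, however, that the paper does not actually supply a proof of this lemma: immediately after stating it the authors write that ``the first lemma is a well-known result and we are not going to prove it in this paper.'' So there is no proof in the paper to compare against; your approach is exactly the routine computation one would expect, and the only minor wording issue is that the set $\{\bn:|\bn|=1,\ \bn\perp\bg\}$ is a circle rather than a sphere, but your treatment of its transformation under $\mathbf{R}$ is correct.
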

\begin{lemma}\label{lem:Talmi}
    Talmi coefficient $\Talmi{l_1m_1n_1}{l_2m_2n_2}{l_3m_3n_3}{l_4m_4n_4}$
    is zero if 
    \begin{equation}
        l_1+2n_1+l_2+2n_2 \neq l_3+2n_3+l_4+2n_4.
    \end{equation}
\end{lemma}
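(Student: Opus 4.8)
The plan is to read off the statement from the defining property of the Talmi coefficients as the overlap coefficients of an orthogonal change of variables between two single-particle bases, and then to observe that such a change of variables preserves the total polynomial degree $l+2n$. For equal-mass molecules the Talmi coefficient is, by definition \cite{Talmi1952}, the coefficient appearing in the expansion
\begin{displaymath}
p_{l_1m_1n_1}(\bv_1)\,p_{l_2m_2n_2}(\bv_2) = \sum_{l_3m_3n_3}\sum_{l_4m_4n_4} \Talmi{l_1m_1n_1}{l_2m_2n_2}{l_3m_3n_3}{l_4m_4n_4}\, p_{l_3m_3n_3}(\bv_3)\,p_{l_4m_4n_4}(\bv_4),
\end{displaymath}
where the transformed variables are $\bv_3 = (\bv_1+\bv_2)/\sqrt{2}$ and $\bv_4 = (\bv_1-\bv_2)/\sqrt{2}$. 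The decisive structural input is that the map $(\bv_1,\bv_2)\mapsto(\bv_3,\bv_4)$ is an orthogonal transformation of $\bbR^6$, since it preserves $|\bv_1|^2+|\bv_2|^2 = |\bv_3|^2+|\bv_4|^2$; this is precisely where the equal-mass assumption enters.

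First I would introduce the Hermite (Ornstein--Uhlenbeck) operator $\mathcal{N} = -\Delta + \bv\cdot\nabla$ and establish that every Burnett polynomial is an eigenfunction, $\mathcal{N}\,p_{lmn} = (l+2n)\,p_{lmn}$. This follows from the expansion \eqref{eq:exp_he_bur}: because $C_{lmn}^{\indexk}$ vanishes unless $k_1+k_2+k_3 = l+2n$, the polynomial $p_{lmn}$ is a linear combination of Hermite polynomials $H^{\indexk}$ of the single total degree $l+2n$, and each such Hermite polynomial satisfies $\mathcal{N}\,H^{\indexk} = (k_1+k_2+k_3)\,H^{\indexk}$. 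Passing to the combined variable $\bx = (\bv_1,\bv_2)\in\bbR^6$, the total operator $\mathcal{N}_{\mathrm{tot}} = -\Delta_{\bx} + \bx\cdot\nabla_{\bx}$ splits as $\mathcal{N}_1+\mathcal{N}_2$, so the product $p_{l_1m_1n_1}(\bv_1)\,p_{l_2m_2n_2}(\bv_2)$ is an eigenfunction of $\mathcal{N}_{\mathrm{tot}}$ with eigenvalue $l_1+2n_1+l_2+2n_2$.

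The key step is then to note that $\mathcal{N}_{\mathrm{tot}}$ is invariant under orthogonal changes of the six variables: both the Laplacian $\Delta_{\bx}$ and the Euler operator $\bx\cdot\nabla_{\bx}$ commute with the action of $O(6)$, the latter because $\bx\cdot\nabla_{\bx}$ is the radial operator $r\,\partial_r$ and hence depends only on $|\bx|$. Rewriting $\mathcal{N}_{\mathrm{tot}}$ in the coordinates $(\bv_3,\bv_4)$ therefore gives $\mathcal{N}_3+\mathcal{N}_4$. Applying $\mathcal{N}_{\mathrm{tot}}$ to both sides of the defining expansion, the left-hand side carries eigenvalue $l_1+2n_1+l_2+2n_2$ while the generic term on the right carries eigenvalue $l_3+2n_3+l_4+2n_4$; by the linear independence of the product basis $\{p_{l_3m_3n_3}(\bv_3)\,p_{l_4m_4n_4}(\bv_4)\}$, every coefficient with $l_3+2n_3+l_4+2n_4 \neq l_1+2n_1+l_2+2n_2$ must vanish, which is exactly the claim. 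The main obstacle is not in the final eigenvalue-matching argument, which is routine once the pieces are in place, but in cleanly certifying the two supporting facts — that the equal-mass transformation is genuinely orthogonal on $\bbR^6$ and that $\mathcal{N}_{\mathrm{tot}}$ is $O(6)$-invariant — and in phrasing the defining expansion consistently with the normalization conventions adopted for the Talmi coefficients.
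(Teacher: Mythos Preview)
Your argument is correct. The paper itself does not supply a proof of this lemma; it simply defers to \cite[pp.~135--137]{Kumar}. So your route is genuinely different in the sense that it is self-contained, whereas the paper's ``proof'' is a citation.

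Your strategy---identify the number operator $\mathcal{N}_{\mathrm{tot}} = -\Delta_{\bx} + \bx\cdot\nabla_{\bx}$ on $\bbR^6$, observe that the product $p_{l_1m_1n_1}(\bv_1)\,p_{l_2m_2n_2}(\bv_2)$ lies in the eigenspace with eigenvalue $l_1+2n_1+l_2+2n_2$, and use $O(6)$-invariance of $\mathcal{N}_{\mathrm{tot}}$ to carry that eigenspace over to the $(\bv_3,\bv_4)$ coordinates---is clean and gives exactly the degree-matching constraint rather than a mere inequality. The appeal to \eqref{eq:exp_he_bur} and the index set $I_{l+2n}$ to certify that $p_{lmn}$ sits in a single $\mathcal{N}$-eigenspace is the right internal reference. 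One small wording point: calling $\bx\cdot\nabla_{\bx}$ ``the radial operator $r\,\partial_r$'' is fine on smooth functions away from the origin, but the cleaner justification of its $O(6)$-invariance is the one-line computation $(R\bx)\cdot(R\nabla_{\bx}) = \bx\cdot\nabla_{\bx}$ for $R\in O(6)$, which you might state directly. Your closing caveat about matching the exact Talmi normalization and the left/right convention in the expansion is well placed; the argument is insensitive to those choices since it only uses which eigenspace each basis element inhabits.
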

The first lemma is a well-known result and we are not going to prove it in this
paper. The proof of the second lemma can be find in \cite[page 135-137]{Kumar}.
Now we start to prove the theorems.

\begin{proof}[Proof of Theorem \ref{thm:sparsity}]
For any $\eta \in \bbR$, we define the rotation matrix
\begin{displaymath}
\mathbf{R}_{\eta} = \begin{pmatrix}
  \cos\eta & -\sin\eta & 0 \\ \sin\eta & \cos\eta & 0 \\ 0 & 0 & 1
\end{pmatrix}.
\end{displaymath}
Using spherical coordinates $\bv = (r \sin\theta \cos\phi, r
\sin\theta \sin\phi, r \cos\theta)^T$, one can see that
\begin{displaymath}
\mathbf{R}_{\eta} \bv = (r \sin\theta \cos(\phi + \eta), r
  \sin\theta \sin(\phi + \eta), r \cos\theta)^T.
\end{displaymath}
Therefore
\begin{displaymath}
p_{lmn}(\mathbf{R}_{\eta}\bv) =
  \mathrm{e}^{\mathrm{i} m \eta} p_{lmn}(\bv), \qquad
\varphi_{lmn}(\mathbf{R}_{\eta}\bv) =
  \mathrm{e}^{\mathrm{i} m \eta} \varphi_{lmn}(\bv).
\end{displaymath}
Let $\mathcal{R}_{\eta}$ be the rotation operator such that
$(\mathcal{R}_{\eta} f)(\bv) = f(\mathbf{R}_{\eta} \bv)$. Now we can
rewrite \eqref{eq:A} as
\begin{equation} \label{eq:A_equality}
\begin{split}
A_{lmn}^{l_1 m_1 n_1, l_2 m_2 n_2} &= \int_{\bbR^3}
  \overline{p_{lmn}(\mathbf{R}_{\eta} \bv)}
  \mQ[\varphi_{l_1 m_1 n_1}, \varphi_{l_2 m_2 n_2}](\mathbf{R}_{\eta} \bv)
  \dd \bv \\
&= \int_{\bbR^3} \mathrm{e}^{-\mathrm{i} m \eta} \overline{p_{lmn}(\bv)}
  \mQ[\mathcal{R}_{\eta} \varphi_{l_1 m_1 n_1},
    \mathcal{R}_{\eta} \varphi_{l_2 m_2 n_2}](\bv) \dd \bv \\
&= \mathrm{e}^{\mathrm{i} (m_1+m_2-m) \eta}
  \int_{\bbR^3} \overline{p_{lmn}(\bv)}
  \mQ[\varphi_{l_1 m_1 n_1}, \varphi_{l_2 m_2 n_2}](\bv) \dd \bv \\
&= \mathrm{e}^{\mathrm{i} (m_1+m_2-m) \eta}
  A_{lmn}^{l_1 m_1 n_1, l_2 m_2 n_2}.
\end{split}
\end{equation}
Here we have used the rotational invariance and bilinearity of the
collision operator $\mQ[\cdot, \cdot]$. Note that
\eqref{eq:A_equality} holds for any $\eta$. If $m \neq m_1 + m_2$,
this shows that $A_{lmn}^{l_1 m_1 n_1, l_2 m_2 n_2} = 0$.
\end{proof}

\begin{proof}[Proof of Theorem \ref{thm:sparsityMaxwell}]
    Since $B(g,\chi)=\sigma(\chi)$ is independent of $g$, in
    \eqref{eq:def_Vnn}, the integrals with respect to $g$ and $\chi$
    can be split:
    \begin{equation}
        \begin{split}
            V_{n n'}^l &= \frac{1}{16\sqrt{2} \pi^{5/2}}
            \sqrt{\frac{n!n'!}{\Gamma(n+l+3/2) \Gamma(n'+l+3/2)}}
             \int_0^{\pi} \sigma(\chi) [(2l+1)^2 P_l(\cos \chi) - 1] \dd\chi 
             \times{} \\
            & \qquad \int_0^{+\infty} 
            \left( \frac{g^2}{4} \right)^{l+1}
            L_n^{(l+1/2)} \left( \frac{g^2}{4} \right)
            L_{n'}^{(l+1/2)} \left( \frac{g^2}{4} \right)
            \exp\left( -\frac{g^2}{4} \right) \dd g,
        \end{split}
    \end{equation}
    which vanishes if $n\neq n'$, due to the orthogonality of Laguerre
    polynomials. Hence, using Lemma \ref{lem:Talmi}, we can obtain that
    the summands in \eqref{eq:A_lmn} do not vanish only if
    \begin{equation*}
       n_4=n_4',\quad 
       l_3+2n_3+l_4+2n_4'=l+2n,\quad
       l_3+2n_3+l_4+2n_4=l_1+2n_1+l_2+2n_2.
    \end{equation*}
    Direct simplification yields the conclusion in the theorem.
\end{proof}

\begin{proof}[Proof of Theorem \ref{thm:real}]
    Set $\mathbf{R}=\diag\{1,-1,1\}$, then using the same approach as that in
    the proof of Theorem \ref{thm:sparsity}, one can directly prove this
    theorem.
\end{proof}

\section{Conclusion} 
\label{sec:conclusion}
This work aims to model and simulate the binary collision between gas
molecules under the framework of the Burnett polynomials. The special
sparsity of the coefficients is fully utilized, and we have proposed a
method to compute the coefficients in the spectral expansion with high
accuracy based on the work \cite{QuadraticCol}. Moreover, the data
structure and the implementation of the algorithm are carefully
designed to achieve high numerical efficiency.

In order to provide further flexibility, especially when taking into
account the spatial inhomogeneity, we employ the modelling technique
used in \cite{Cai2015,QuadraticCol}, where the quadratic form is
preserved only for the first few moments. It is validated again that
the method is efficient in capturing the evolution of lower-order
moments. The implementation of the spatially inhomogeneous Boltzmann
equation is in progress.

\section*{Acknowledgements} We would like to thank Prof. Ruo Li at
Peking University, China for the valuable suggestions to this research
project. Zhenning Cai is supported by National University of Singapore
Startup Fund under Grant No. R-146-000-241-133. Yanli Wang is
supported by the National Natural Scientific Foundation of China
(Grant No. 11501042) and Chinese Postdoctoral Science Foundation of
China (2018M631233).

\bibliographystyle{plain}
\bibliography{article}
\end{document}